\documentclass[11pt]{article}

\usepackage{mathtools,amsmath,amsfonts,amssymb,amsthm,mathtools}
\usepackage{fullpage,hyperref,xspace}
\usepackage[footnotesize]{caption}
\usepackage[caption=false,font=footnotesize]{subfig}
\usepackage{graphicx}
\usepackage{xcolor}
\graphicspath{{./final_figs/}}

\newtheorem{theorem}{Theorem}[section]
\newtheorem{corollary}[theorem]{Corollary}
\newtheorem{lemma}[theorem]{Lemma}
\newtheorem{proposition}[theorem]{Proposition}
\newtheorem{definition}[theorem]{Definition}
\newtheorem{remark}[theorem]{Remark}

\renewcommand{\th}{%
    \ifmmode
        ^\mathrm{th}%
    \else%
        \textsuperscript{th}\xspace%
    \fi%
}
\newcommand{\ost}{%
    \ifmmode
        ^\mathrm{st}%
    \else%
        \textsuperscript{st}\xspace%
    \fi%
}
\newcommand{\rd}{%
    \ifmmode
        ^\mathrm{rd}%
    \else%
        \textsuperscript{rd}\xspace%
    \fi%
}
\newcommand{\nd}{%
    \ifmmode
        ^\mathrm{nd}%
    \else%
        \textsuperscript{nd}\xspace%
    \fi%
}

\newcommand{\clip}[1]{#1^{\scriptscriptstyle \rm{cl}}}

\newcommand{\loc}{\scriptscriptstyle \rm{loc}}
\newcommand{\RIP}{\scriptscriptstyle \rm{RIP}}
\newcommand{\DRIP}{{\cl D^{\scriptscriptstyle (m \times n)}_{\RIP}}}
\newcommand{\spset}[1]{\Sigma^n_{#1}}
\newcommand{\cpset}[1]{\overline{\Sigma}{}^{n}_{#1}}
\newcommand{\lrset}[1]{\cl C_{#1}^{n_1\times n_2}}
\newcommand{\splus}{\raisebox{.15mm}{\scalebox{.6}{$+$}}}
\newcommand{\Amap}{{\sf A}}
\newcommand{\Amapp}{\underline{\sf A}}

\newcommand{\Dmap}{{\sf D}}

\newcommand{\Id}{\bs I}
\newcommand{\scp}[2]{\langle #1,\, #2 \rangle}
\newcommand{\compl}{{\tt c}}
\newcommand{\Rbb}{\mathbb{R}}

\newcommand{\inv}[1]{\frac{1}{#1}}
\newcommand{\supp}{{\rm supp}\,}
\newcommand{\tinv}[1]{{\textstyle\frac{1}{#1}}}
\newcommand{\sign}{{\rm sign}\,}

\renewcommand{\leq}{\leqslant}
\renewcommand{\geq}{\geqslant}

\DeclareMathOperator{\rank}{rank}
\DeclareMathOperator{\ve}{vec}
\DeclareMathOperator{\tr}{tr}


\newcommand{\bs}{\boldsymbol}
\newcommand{\bb}{\mathbb}
\newcommand{\cl}{\mathcal}

\newcommand{\ts}{\textstyle}

\newcommand{\ie}{\emph{i.e.},\xspace}
\newcommand{\eg}{\emph{e.g.},\xspace}
\newcommand{\iid}{%
    \ifmmode
        \mathrm{i.i.d.}%
    \else%
        i.i.d.\@\xspace%
    \fi%
}
\newcommand{\rv}{\mbox{r.v.\@}\xspace}
\newcommand{\rvs}{\mbox{r.v.'s\@}\xspace}
\newcommand{\st}{\mbox{s.t.\@}\xspace}
\newcommand{\whp}{\mbox{w.h.p.\@}\xspace}

\title{Quantized Compressive Sensing with RIP Matrices:\\ The Benefit of Dithering}
\author{Chunlei Xu$^*$ and Laurent Jacques\footnote{CX  and  LJ  are  with  Image  and  Signal  Processing  Group  (ISPGroup),  ICTEAM/ELEN,  UCLouvain, Belgium. E-mail: \url{{chunlei.xu,laurent.jacques}@uclouvain.be}. The authors are funded by the Belgian F.R.S.-FNRS. Part of this study is funded by the project \textsc{AlterSense} (MIS-FNRS).}}

\begin{document}
\maketitle

\begin{abstract}
Quantized compressive sensing (QCS) deals with the problem of coding compressive measurements of low-complexity signals with quantized, finite precision representations, \ie a mandatory process involved in any practical sensing model. While the resolution of this quantization clearly impacts the quality of signal reconstruction, there actually exist incompatible combinations of quantization functions and sensing matrices that proscribe arbitrarily low reconstruction error when the number of measurements increases. 

This work shows that a large class of random matrix constructions known to respect the restricted isometry property (RIP) is ``compatible'' with a simple scalar and uniform quantization if a uniform random vector, or a \emph{random dither}, is added to the compressive signal measurements before quantization. In the context of estimating low-complexity signals (\eg sparse or compressible signals, low-rank matrices) from their quantized observations, this compatibility is demonstrated by the existence of (at least) one signal reconstruction method, the \textit{projected back projection} (PBP), whose reconstruction error decays when the number of measurements increases. Interestingly, given one RIP matrix and a \emph{single} realization of the dither, a small reconstruction error can be proved to hold \emph{uniformly} for all signals in the considered low-complexity set. We confirm these observations numerically in several scenarios involving sparse signals, low-rank matrices, and compressible signals, with various RIP matrix constructions such as sub-Gaussian random matrices and random partial discrete cosine transform (DCT) matrices. 
\end{abstract}

\section{Introduction}
\label{sec:introduction}

Compressive sensing (CS) theory~\cite{CT2005,Do06,FR2013} shows us how to compressively and non-adaptively sample low-complexity signals, such as sparse vectors or low-rank matrices, in high-dimensional domains. In this framework, by exploiting the low-complexity nature of these signals, we can accurately estimate them with less measurements than the ambient domain dimension thanks to specific random sensing procedure and non-linear reconstruction algorithms (\eg $\ell_1$-norm minimization, greedy algorithms). In other words, by generalizing the concepts of sampling and reconstruction, CS has somehow extended Shannon-Nyquist theory initially restricted to the class of band-limited signals.

Specifically, anticipating on some of the mathematical notations introduced at the end of this section, CS theory describes how one can recover a signal $\bs x\in \bb R^n$ from $m \leq n$ measurements achieved from a sensing (or measurement) matrix $\bs \Phi\in \bb R^{m\times n}$
via this underdetermined linear model
\begin{equation}
  \label{eq:lin-acquiz-model}
  \ts \bs y = \bs \Phi \bs x + \bs n.  
\end{equation}
In this sensing model, $\bs y\in \bb R^m$ is the measurement vector, $\bs n \in \bb R^m$ stands for a (potential) additive measurement noise, and $\bs x$ is assumed restricted to a low-complexity signal set $\cl K \subset \bb R^n$, \eg the set $\bs\Psi \spset{k}:=\{\bs\Psi \bs \alpha\in \bb R^n:~\|\bs \alpha\|_0:=|\text{supp}(\bs \alpha)|\leq k,\ \bs \alpha \in \bb R^n\}$ of $k$-sparse vectors in an orthonormal basis $\bs \Psi \in \bb R^{n \times n}$.

The recovery of $\bs x$ is then guaranteed if $\tinv{\sqrt m}\bs \Phi$ respects the restricted isometry property (RIP) over $\cl K$, which essentially states that $\tinv{\sqrt m}\bs \Phi$ behaves as an approximate isometry for all elements of $\cl K$ (see Sec.~\ref{sec:cond-ensur-rip}). Interestingly, many random constructions of sensing matrices respect the RIP with high probability (\whp\footnote{Hereafter, we write \whp if the probability of failure of the considered event decays exponentially with respect to the number of measurements.})~\cite{BDDW08,MPT2008,Rau10,FR2013}. For instance, if $\bs \Phi$ is a \emph{Gaussian random matrix} with entries \emph{identically and independently distributed} (\iid) as a standard normal distribution $\cl N(0,1)$, \whp, $\inv{\sqrt m} \bs \Phi$ respects the RIP over $\cl K = \bs\Psi \spset{k}$ provided $m = O( k \log (n/k) )$. For a more general set $\cl K$, the RIP is verified as soon as $m$ is sufficiently large compared to the intrinsic complexity of $\cl K^* := \cl K \cap \bb B^n$ in $\bb R^n$ --- as measured by the squared Gaussian mean width $w(\cl K^*)^2$ or the Kolmogorov entropy of $\cl K^*$~\cite{KM2005,MPT2008,BDDW08} (see also Sec.~\ref{sec:low-complex-space} and Sec.~\ref{sec:specialcasePBP}). 

Under the satisfiability of the RIP, many signal reconstruction methods (\eg basis pursuit denoise~\cite{CT2005}, greedy algorithms such as the orthogonal matching pursuit~\cite{TG2007}, or iterative hard thresholding~\cite{FR2013}) achieve a stable and robust estimate of $\bs x$ from the sensing model~\eqref{eq:lin-acquiz-model}. For instance, if $\cl K = \spset{k}$, they typically display the following reconstruction error bound, or $\ell_2-\ell_1$ instance optimality~\cite{CDD09,CT2005,Do06},
\begin{equation}
  \label{eq:inst-optim}
\ts \|\bs x - \hat{\bs x}\|\ \leq\ C \frac{\|\bs x - \bs x_k\|_1}{\sqrt k} + D \epsilon,   
\end{equation}
where $\hat{\bs x}$ is the signal estimate, $\bs x_k$ the best $k$-term approximation of $\bs x$, $\epsilon \geq \|\bs n\|$ estimates the noise energy, and $C,D>0$ only depend on $\bs \Phi$ (\eg via its restricted isometry constant).   

\medskip

\paragraph{Quantization and compressive sensing:} The brief overview of CS theory above shows that, at least in the noiseless setting, the measurement vector $\bs y$ is assumed represented with infinite precision. However, in practice, any realistic sensing model integrates digitalization and finite precision data representations, \eg to store, transmit or process the acquired observations. This imposes us to adopt a quantized CS formalism where the objective is to reliably estimate a low-complexity signal $\bs x\in \cl K\subset\bb R^n$ from the quantized measurements 
\begin{equation}
\label{QCSproblem}
  \ts \bs y = \cl Q^{\rm g}(\bs \Phi \bs x),
\end{equation}
where $\cl Q^{\rm g}: \bs u \in \bb R^m \mapsto \cl Q^{\rm g} (\bs u) \in \cl A \subset \bb R^m$ is some \emph{quantization} function, or \emph{quantizer}, mapping $m$-dimensional vectors to a discrete set, or \emph{codebook}, $\cl A\subset \bb R^m$.

While initial approaches modeled the quantization distortion as an additive, bounded noise inducing a constant error bound in~\eqref{eq:inst-optim}~\cite{candes2006stable,gunturk2013sobolev,BJKS2015}, the interplay between CS, quantization, reconstruction error, number of measurements and bit-depth have now been deeply studied. This includes $\Sigma\Delta$-quantization~\cite{gunturk2013sobolev}, non-regular scalar quantizers~\cite{B_TIT_12}, 
non-regular \emph{binned} quantization~\cite{pai_nonadapt_MIT06,kamilov_2012}, and even vector quantization by frame permutation~\cite{vivekQuantFrame}. These quantizers, when associated with an appropriate signal reconstruction procedure, achieve different decay rates of the reconstruction error when the number of measurements $m$ increases. For instance, for a $\Sigma\Delta$-quantizer combined with Gaussian or sub-Gaussian sensing matrices~\cite{gunturk2013sobolev}, or with random partial circulant matrices generated by a sub-Gaussian random vector~\cite{FKS17}, this error decays polynomially in $m$ for an appropriate reconstruction procedure, and even exponentially in the case of a one-bit quantizer with adaptive thresholds~\cite{BFNPW2017}.
\medskip

\paragraph{Compatibility between quantized CS, RIP matrices, and signal space:} In this work, we adopt a different standpoint compared to the above-mentioned literature that focuses on optimizing a specific choice of sensing scheme, quantizer and algorithm to achieve optimal reconstruction errors. Our goal is to show that a simple scalar quantization procedure, \ie a uniform scalar quantizer, is \emph{compatible with the large class of sensing matrices known to satisfy the RIP}, provided that we combine the quantization with a random, uniform pre-quantization \emph{dither}~\cite{B_TIT_12,J2015,JC2016}. As clarified later, this compatibility expresses the possibility to compute close estimates of low-complexity signals observed by the corresponding quantized compressive sensing model. Accessing to a broader set of sensing matrices for QCS is desirable in many CS applications where specific, structured sensing matrices are constrained by technology or physics. For instance, (random) partial Fourier matrices are ubiquitous in magnetic resonance imaging~\cite{adcock2017breaking}, radio-astronomy~\cite{carrillo2012sparsity}, radar applications, and communications applications~\cite{ender2010compressive,anitori2013design}.

Mathematically, our aim is to estimate a signal $\bs x$ in a signal set $\cl K$ from the (scalar) QCS model
\begin{equation}
\label{eq:Uniform-dithered-quantization}
\bs y = \Amap(\bs x) = \Amap(\bs x; \bs\Phi, \bs\xi) :=\cl Q(\bs \Phi\bs x+\bs \xi),
\end{equation}
where $\Amap$ is a quantized random mapping, \ie $\Amap: \bb R^n\mapsto \delta \bb Z^m$, $\cl Q(\cdot):=\delta \lfloor\frac{\cdot}{\delta} \rfloor$ is a uniform scalar quantization of resolution\footnote{In this work, the term ``resolution'' does not refer to the number of bits used to encode the quantized values~\cite{GN1998}.} $\delta >0$ applied componentwise onto vectors (or entry-wise on matrix signals), and $\bs \xi \in \bb R^m$ is a pre-quantization \emph{dither}. We will show that, if we set $\bs \xi$ as a uniform random vector over $[0,\delta]^m$ --- what we compactly write as $\bs \xi \sim \cl U^{m}([0,\delta])$ --- then, as will be clear later, the dither \emph{attenuates} the impact of $\cl Q$ over the linear measurements of $\bs x$.  

Moreover, our objective is also to study the estimation of signals belonging to a general low-complexity set $\cl K$ in $\bb R^n$, \eg the set of sparse or compressible vectors or the set of low-rank matrices. In fact, we consider any low-complexity sets that support the RIP of $\tinv{\sqrt m}\bs \Phi$. Implicitly, this last requirement ensures that we recover CS reconstruction guarantees if the quantization disappears (\eg when $\delta \to 0$). 
\medskip

In this context, we prove the above-mentioned compatibility between the QCS model~\eqref{eq:Uniform-dithered-quantization} and the class of RIP matrices as follows. Defining $\cl P_{\cl K}(\bs z)$ as the closest point to $\bs z$ in $\cl K$~(see Sec.~\ref{sec:PBP}), we demonstrate that the simple projected back projection (PBP)
$$
\ts \hat{\bs x} := \cl P_{\cl K}(\inv{m} \bs \Phi^\top \bs y),
$$
of the quantized measurements $\bs y$ onto the set $\cl K$ achieves a reconstruction error $\|\bs x - \hat{\bs x}\|$ that decays like $O(m^{-1/p})$ when $m$ increases, for some $p >1$ only depending on $\cl K$. 

With this respect, the main results of this paper can be summarized as follows (see Sec.~\ref{sec:specialcasePBP} for their accurate descriptions). We show first that if $\cl K$ is the set of sparse vectors, the set of low-rank matrices\footnote{Up to the identification of these matrices with their vector representation (see Sec.~\ref{sec:PBP-low-rank}).}, or any finite union of low-dimensional subspaces (\eg model-based sparsity~\cite{baraniuk2010model} or group-sparse models~\cite{Ayaz16}), and in the case where $\bs \xi \sim \cl U^m([0, \delta])$ and $\tinv{\sqrt m} \bs \Phi$ is generated from a random matrix distribution (see Def.~\ref{def:RIPGen}) known to generate \whp RIP matrices over $\cl K$ (and its \emph{multiples}, see Sec.~\ref{sec:low-complex-space}), then, \whp over both $\bs \Phi$ and $\bs \xi$, and uniformly over all $\bs x \in \cl K \cap \bb B^n$ sensed from~\eqref{eq:Uniform-dithered-quantization}, PBP achieves the reconstruction error 
$$
\ts \|\bs x - \hat{\bs x}\| = O\big( (1+ \delta) \big(\frac{C_{\cl K}}{m}\big)^{-1/2}\big),
$$
with $C_{\cl K}$ depending on $\cl K$ and up to omitted log factors in the involved dimensions.

Second, if $\cl K$ is a bounded, convex and symmetric set of $\bb R^n$, \eg the set of compressible signals $\cpset{k} := \{\bs u \in \bb R^n: \|\bs u\|_1 \leq \sqrt k, \|\bs u\| \leq 1\} \supset \spset{k} \cap \bb B^n$,
then, for the same choices of sensing matrix $\bs \Phi$, dither $\bs \xi$ and QCS signal sensing model~\eqref{eq:Uniform-dithered-quantization}, the PBP error reaches \whp the decay rate 
$$
\ts \|\bs x - \hat{\bs x}\| = O\big( (1+ \delta)^{1/2} \big(\frac{C_{\cl K}}{m}\big)^{-1/p}\big),
$$
with $p=16$ and $p=18$ in the uniform and in the non-uniform (\ie valid of a single $\bs x \in \cl K$) recovery settings, respectively.

Knowing if other reconstruction algorithms can reach faster error decay in our QCS model and for general RIP matrices is an open question. In this regard, PBP can be seen as a good initialization for more advanced reconstruction algorithms iteratively enforcing the consistency of the estimate with the observations $\bs y$~\cite{ZBC2010,oymak2015near,shi16,JLBB2013,JDV2015, J2016}.   

\paragraph{Dithering and limited projection distortion (LPD):} Let us stress that the dither introduced in the QCS model~\eqref{eq:Uniform-dithered-quantization} has a crucial role in our developments. Intuitively, its importance in the quantization scheme comes from this simple observation:
$$
\text{for $u \sim \cl U([0,1])$,}\quad \bb E \lfloor \lambda + u\rfloor = \lambda,\quad \forall \lambda \in \bb R,
$$
as proved in Lemma~\ref{lem1} in App.~\ref{app:vanishing-dithered-quantiz}. By the law of large numbers, this means that for $m$ different random variables (\rvs) $u_i \sim_{\iid} \cl U([0,1])$ with $1\leq i \leq m$ and $m$ increasingly large, an arbitrary projection of the vector $\bs r(\bs a) := \lfloor \bs a + \bs u\rfloor - \bs a := (\lfloor a_1 + u_1\rfloor - a_1, \cdots, \lfloor a_m + u_m\rfloor - a_m)^\top$ for some vector $\bs a \in \bb R^m$ onto a fixed direction $\bs b \in \bb R^m$ tends to the expectation $\bs b^\top \big(\bb E \bs r(\bs a)\big) = 0$. Moreover, this effect should persist for all $\bs a$ and $\bs b$ selected in a set whose dimension is small compared to $\bb R^m$, \ie if these vectors are selected in the image of a low-complexity set $\cl K$ by a RIP matrix~$\tinv{\sqrt m} \bs \Phi$. 

In order to accurately bound the deviation between these projections around zero, we show in Sec.~\ref{sec:LPD-noisy-linear-map} and Sec.~\ref{sec:LPD} that, given a resolution $\delta>0$, a RIP matrix $\tinv{\sqrt m} \bs \Phi$, a uniform random dither $\bs \xi$, and provided that $m$ is large compared to the intrinsic complexity of $\cl K$ (as measured by its Kolmogorov entropy, see Sec.~\ref{sec:low-complex-space}), the quantized random mapping $\Amap: \bs u \to \delta \lfloor \delta^{-1}(\bs \Phi \bs u + \bs \xi)\rfloor$ of \eqref{eq:Uniform-dithered-quantization} respects \whp the \emph{limited projection distortion} (or LPD) property over $\cl K$, \ie assuming $\cl K \subset \bb B^n$ (or restricting it to the unit ball $\bb B^n$ if it is unbounded)
$$
\ts \tinv{m}|\scp{\Amap(\bs u)}{\bs \Phi \bs v} - \scp{\bs \Phi \bs
  u}{\bs \Phi \bs v}|\ \leq\ \nu,\quad \forall \bs u, \bs v \in \cl K,
$$
where $\nu >0$ is a certain distortion depending on $\bs \Phi$, $\delta$, $n$ and $m$. This can be proved using tools from measure concentration theory and some extra care to deal with the discontinuities of $\cl Q$. In fact, we deduce in Sec.~\ref{sec:LPD} that $\nu = \epsilon\,(1 + \delta)$ if the dither is random and uniform, where $\epsilon$ is an arbitrary small distortion impacting the requirement on $m$. For instance, forgetting all other dependencies, $m = O(\epsilon^{-2}\log(1/\epsilon))$ for the set of sparse vectors, as similarly established for ensuring the RIP of a Gaussian random matrix on the same set~\cite{BDDW08} (see Sec.~\ref{sec:LPD} and Sec.~\ref{sec:conditions-llpd}). Moreover, for the localized LPD (or L-LPD) obtained by fixing $\bs u \in \cl K$ in the LPD, the distortion is reduced to $\nu = \delta\epsilon$, as deduced in Sec.~\ref{sec:LPD-noisy-linear-map}.

\paragraph{Utility of the LPD:} Anticipating on Sec.~\ref{sec:PBP-gen-error}, we now give a quick explanation illustrating why the LPD property allows us to bound the reconstruction error of PBP in the estimation of bounded $k$-sparse signals. We first note that if $\inv{\sqrt{m}}\bs \Phi$ satisfies the RIP over $\spset{2k}$, then this matrix also embeds scalar products of sparse vectors, \ie $\ts \inv{m}\,|\langle \bs
\Phi\bs u,\bs \Phi\bs v \rangle -\langle \bs u,\bs v\rangle|\leq \epsilon$ for all $\bs u, \bs v \in \bb B^n$ such that $\bs u \pm \bs v \in \spset{2k}$ (see, \eg~\cite{FR2013}). Then, from the LPD of $\Amap$ over $\cl K = \spset{2k} \cap \bb B^n$ and for the same vectors $\bs u$ and $\bs v$,  the triangular identity provides
\begin{equation}
  \label{eq:pre-lpd-rip}
  \ts \,|\inv{m} \langle \Amap(\bs u), \bs \Phi \bs v\rangle- \langle
  \bs u,\bs v\rangle|\ \leq\ \epsilon + \nu.
\end{equation}
Moreover, for a bounded sparse signal ${\bs x \in \spset{k} \cap \bb B^n}$, the PBP estimate $\hat{\bs x} = \cl P_{\spset{k}}(\tinv{m}\bs\Phi^\top \bs y)$ --- with $\cl P_{\spset{k}}(\bs z)$ the hard thresholding operator that keeps the $k$ largest elements (in magnitude) of $\bs z$ and sets the others to zero --- is the best $k$-sparse approximation of both $\bs a := \tinv{m}\bs\Phi^\top \bs y$ and the vector $\bar{\bs a}$ obtained by zeroing all the entries of $\bs a$ but those indexed in $\cl T := \supp(\hat{\bs x})\cup\supp(\bs x)$. This induces that $\|\hat{\bs x} - \bs x\| \leq \|\hat{\bs x} - \bar{\bs a}\| + \|\bar{\bs a} - \bs x\| \leq 2\|\bar{\bs a} - \bs x\|$, and, from the definition of the $\ell_2$-norm,
$$
\|\bar{\bs a} - \bs x\| = \sup_{\bs v \in \bb B^n} \scp{\bar{\bs a} - \bs x}{\bs v} = \sup_{\substack{\bs v \in \bb B^n\\\supp \bs v \subset \cl T}} \scp{\bs a - \bs x}{\bs v} = \sup_{\substack{\bs v \in \bb B^n\\\supp \bs v \subset \cl T}} \tinv{m}\scp{\Amap(\bs x)}{\bs\Phi \bs v} - \scp{\bs x}{\bs v}.
$$  

Therefore, since $\bs v \pm \bs x$ is at most $2k$-sparse, \eqref{eq:pre-lpd-rip} directly provides the final error bound
\begin{equation}
  \label{eq:spoiled-pbp-error-sparse-case}
  \|\hat{\bs x} - \bs x\| \leq 2(\epsilon + \nu).  
\end{equation}

While \eqref{eq:spoiled-pbp-error-sparse-case} results from the RIP and the LPD seen as deterministic properties of $\tinv{\sqrt m} \bs \Phi$ and~$\Amap$, respectively, we derive in Sec.~\ref{sec:specialcasePBP} the decay rate of the PBP error when $m$ increases from the conditions relating $m$, $\nu$ and $\epsilon$ to ensure \whp these properties when $\tinv{\sqrt m} \bs \Phi$, the dither $\bs \xi$ (and thus $\Amap$) are random. Moreover, we show in Sec.~\ref{sec:PBP-gen-error} that the usefulness of the LPD extends beyond sparse signals and covers the cases of low-rank matrices, union of low-dimensional subspace and bounded, convex and symmetric sets (\eg the set of compressible vectors in $\bb R^n$).

\medskip
\paragraph{Paper organization:} The rest of the paper is structured as follows. We present in Sec.~\ref{sec:motivationSOA} a few related works, namely, former usages of the PBP method in one-bit CS, variants of the LPD property in one-bit CS and non-linear CS, and certain known reconstruction error bounds of PBP and related algorithms for certain QCS and non-linear sensing contexts. Most of these works are based on sub-Gaussian random projections of signals altered by quantization or other non-linear disturbances, with two noticeable exceptions using subsampled Gaussian circulant sensing matrix and bounded orthogonal ensemble~\cite{DJR2017,HS2018QCSStruMatrix}. We then introduce a few preliminary concepts in Sec.~\ref{sec:preliminaries}, such as the characterization of low-complexity spaces, a general definition of random matrix distributions capable to generate, \whp, RIP matrices, the PBP method, and the formal definition of the (L)LPD for a general mapping $\Dmap:\bb R^n \to \bb R^m$. Sec.~\ref{sec:PBP-gen-error} establishes the reconstruction error bound of PBP when the LPD of $\Dmap$ and the RIP of $\tinv{\sqrt m}\bs \Phi$ are both verified and seen as deterministic properties. We realize this analysis for three kinds of low-complexity sets, namely, finite union of low-dimensional spaces (\eg the set of (group) sparse signals), the set of low-rank matrices, and the (unstructured) case of a general bounded convex set. In Sec.~\ref{sec:LPD-noisy-linear-map}, we prove that the L-LPD holds \whp over low-complexity sets when $\Dmap$ represents a linear sensing model corrupted by an additive sub-Gaussian noise. This analysis simplifies the characterization of the PBP reconstruction error from QCS observations in the non-uniform recovery case.  In Sec.~\ref{sec:LPD}, we prove that the quantized random mapping $\Amap$ defined in \eqref{eq:Uniform-dithered-quantization} from a uniform random dither is sure to respect, \whp, the (uniform) LPD provided $m$ is large compared to the complexity of $\cl K$.  In Sec.~\ref{sec:specialcasePBP}, from the results of the two previous sections, we leverage the general bounds found in Sec.~\ref{sec:PBP-gen-error} and establish the decay rate of the PBP reconstruction error when $m$ increases for the same low-complexity sets and random matrix distributions considered in Sec.~\ref{sec:PBP-gen-error} and Sec.~\ref{sec:cond-ensur-rip}, respectively. Finally, in Section~\ref{sec:numericalPBP}, we validate the PBP reconstruction error numerically for the particular sets discussed in Sec.~\ref{sec:specialcasePBP} and several structured and unstructured random sensing matrices~$\bs \Phi$. 
 
\paragraph{Conventions and notations:} Throughout this paper, we denote vectors and matrices with bold symbols, \eg~$\bs \Phi \in \bb R^{m\times n}$
or~$\bs u \in \bb R^m$, while lowercase light letters
are associated with scalar values. The class of functions from $\bb R^n \to \bb R$ that are continuous over $\cl E \subset \bb R^n$ is denoted by ${\sf C}^0(\cl E)$. The identity matrix in $\bb R^{n}$ reads
$\Id_n$ and the zero vector $\bs 0 := (0,\,\cdots,0)^\top\in \bb R^n$, its dimension being clear from the context. The~$i$\th component of a vector (or of a
vector function)~$\bs u$ reads either~$u_i$ or~$(\bs u)_i$, while the
vector~$\bs u_i$ may refer to the~$i\th$ element of a set of
vectors. The set of indices in~$\bb R^d$ is~$[d]:=\{1,\,\cdots,d\}$ and the support of $\bs u \in \bb R^d$ is $\supp \bs u \subset [d]$. The Kronecker symbol is denoted by $\delta_{ij}$ and is equal to 1 if $i=j$ and to 0 otherwise, while the indicator $\chi_{\cl S}(i)$ of a set $\cl S \subset [d]$ is equal to 1 if $i \in \cl S$ and to 0 otherwise. For any~$\cl S \subset [d]$ of cardinality~$S = |\cl S|$,
$\bs u_{\cl S} \in \bb R^{S}$ denotes the restriction of
$\bs u$ to~$\cl S$, while $\bs B_{\cl S}$ is the matrix obtained by restricting the columns
of~$\bs B \in \bb R^{d \times d}$ to those indexed by~$\cl S$. The complement of a set $\cl S$ reads $\cl S^\compl$. For any~$p\geq 1$, the~$\ell_p$-norm of~$\bs u$ is
$\|\bs u\|_p = (\sum_i |u_i|^p)^{1/p}$ with~$\|\!\cdot\!\|=\|\!\cdot\!\|_2$. The \emph{radius} of a bounded set $\cl K \subset \bb R^n$ is $\|\cl K\| := \sup\{\|\bs u\|\!:\!\bs u \in \cl K\}$. The~$\ell_p$-sphere in~$\Rbb^n$ is
$\bb S_{\ell_p}^{n-1}=\{\bs x\in\Rbb^n: \|\bs x\|_p=1\}$, and the unit $\ell_p$-ball reads~$\bb B_{\ell_p}^{n}=\{\bs x\in\Rbb^n: \|\bs x\|_p\leq 1\}$. For $\ell_2$, we write $\bb B^n = \bb B^n_{\ell_2}$ and $\bb S^{n-1} = \bb S_{\ell_2}^{n-1}$. By extension, $\bb B_{F}^{n_1 \times n_2}$ is the Frobenius unit ball of $n_1 \times n_2$ matrices $\bs U$ with $\|\bs U\|_F \leq 1$, where the Frobenius norm ${\|\cdot\|_F}$ is associated with the scalar product $\scp{\bs U}{\bs V} = \tr(\bs U^\top \bs V)$ through $\|\bs U\|_F^2 = \scp{\bs U}{\bs U}$, for two matrices $\bs U, \bs V$. We also extend many vector properties or concepts defined from the $\ell_2$-norm to matrices thanks to the Frobenius norm (\ie by vectorization). The common flooring and sign (or \emph{signum}) functions are $\lfloor \cdot \rfloor$ and $\sign(\cdot)$, respectively. Unless mentioned otherwise, the symbols $C,C', C'', ..., c,c',c'',... > 0$ are positive and \emph{universal} constants whose values can change from one line to the other. We also use the ordering notations $A \lesssim B$ (or $A \gtrsim B$), if there exists a $c > 0$ such that $A \leq c B$ (resp. $A \geq c B$) for two quantities $A$ and~$B$. Moreover, $A \asymp B$ if we have both $A \lesssim B$ and $B \lesssim A$.

Concerning statistical quantities, $\cl X^{m\times
  n}$ and~$\cl X^{m}$ denote an~$m\times n$ random matrix and an~$m$-length random vector, respectively, whose entries are
\emph{identically and independently distributed} (or $\iid$) as the probability distribution
$\cl X$, \eg $\cl N^{m \times n}(0,1)$ (or $\cl U^m([0, \delta])$) is
the distribution of a matrix (resp. vector)
whose entries are $\iid$ as 
the standard normal distribution~$\cl N(0,1)$ (resp. the uniform
distribution~$\cl U([0, \delta])$). We also use extensively the sub-Gaussian and sub-exponential
characterization of random variables (or r.v.) and of random vectors detailed in
\cite{V2012}. The sub-Gaussian and the
sub-exponential norms of a random
variable $X$ are thus denoted by $\|X\|_{\psi_2}$ and $\|X\|_{\psi_1}$, respectively,
with the Orlicz norm $\|X\|_{\psi_\alpha} := \sup_{p\geq 1} p^{-1/\alpha} (\bb E
|X|^p)^{1/p}$ for $\alpha \geq 1$. The random variable $X$ is therefore
sub-Gaussian (or sub-exponential) if $\|X\|_{\psi_2} < \infty$
(resp. $\|X\|_{\psi_1} < \infty$). 

\section{Related works}
\label{sec:motivationSOA}

\begin{table}[!t]
\begin{center}
\noindent\scalebox{1}{\tiny\def\arraystretch{1.5}
\begin{tabular}{|@{\,}p{13mm}|@{\ }p{21mm}@{\ }|@{\ }p{21mm}@{\ }|@{\ }p{21mm}@{\ }|@{\ }p{17mm}@{\ }|@{\ }p{17mm}@{\ }|@{\ }p{17mm}@{\,}|@{\ }p{17mm}@{\ }|}
\hline
&\multicolumn{3}{@{}c@{}|@{\ }}{QCS (quantizer $\cl Q$)\ \&\ Non-linear CS (non-linear $f_i$)\,}&\multicolumn{3}{@{\ }c@{}|@{\ }}{one-bit CS}&one-bit \& QCS\\
\hline
 Works&\cite{PV2016,OptQuanLass2016}&\cite{PVY2017}&\textbf{(this work)}&\cite{ALPV2014, PV2013}&\cite[Sec. 5]{JDV2015} &\cite{BFNPW2017}&\cite{DJR2017}\\ 
\hline
Signal set\newline ${\cl K \subset \bb R^n}$&With low (local) GMW on the tangent cone on~$\bs x$&With low (local) GMW; closed and star-shaped, or conic&ULS, LR, or convex sets \st $\cl K \cap \bb B^n$ has low KE. &   $\cl K\subset \bb B^n$, low GMW &   $\spset{k}\cap \bb B^n$ & $\spset{k}\cap \bb B^n$ &  $\spset{k}\cap \bb B^n$ \\ \hline
Condition on $\bs x$&$\bs x\in \bb S^{n-1} \cap \tinv{\mu}\cl K$,\newline $\mu = \bb E f_i (g) g$&$\mu\bs x/\|\bs x\| \in \cl K$&$\bs x \in \cl K\cap \bb B^n$&$\bs x \in \cl K$&$\bs x \in \cl K$&$\bs x \in \cl K$&$\bs x \in \cl K$\\ \hline
Sensing model\newline for $y_i$&$f_i(\bs \varphi_i^\top\bs x)$\newline $f_i$ random~\cite{PV2016},\newline or $f_i = \cl Q$ in~\cite{OptQuanLass2016} &$f_i(\bs \varphi_i^\top\bs x)$\newline $f_i$ random&$ \cl Q(\bs \varphi_i^\top\bs x+\xi_i)$&$ f_i(\bs \varphi_i^\top\bs x)$\newline $f_i$ binary \&\newline random &$\text{sign}(\bs \varphi_i^\top\bs x)$&$\text{sign}(\bs \varphi_i^\top\bs x~-~\tau_i)$&$\text{sign}(\bs \varphi_i^\top\bs x~-~\tau_i)$\newline
or \newline $\cl Q(\bs \varphi_i^\top \bs x - \tau_i)$
\\ 
\hline
   Sensing image & Any~\cite{PV2016};\newline quantized~\cite{OptQuanLass2016}&  Any&  $\delta \bb Z^m$ & binary &  binary &  binary &  binary\newline or $\delta (\bb Z + 1/2)^m$\\ 
\hline
 
   Dithering &   Not explicit & Not explicit & $\displaystyle\xi_i~\underset{\iid}{\sim}~\cl U([0,\delta])$ & Not explicit &   No  &  Adaptive\newline or random $\tau_i$ &  None\newline or random~$\tau_i$ \\ \hline
 
Condition(s) on $\bs \Phi =$
\vspace{-.5mm}\newline \scalebox{0.75}{$(\bs\varphi_1,...,\bs\varphi_m)^\top$}
&L-LPD variant& Not explicit &RIP (inducing (L)LPD of $\Amap$ with random dithering)&SPE &   SPE &   SPE and\newline ``$\ell_1$-quotient'' (see~\cite{BFNPW2017})&   ``$\ell_1/\ell_2$-RIP'' (see~\cite{DJR2017})\\
\hline
Known compatible matrices&GRM&GRM&Any matrix satisfying the RIP over (a multiple of) $\cl K$&SGRM&GRM&GRM&SGCM\\ \hline
Algorithms &   K-Lasso &   PBP &  PBP& $\displaystyle \max_{\bs z\in \cl K} \ts \bs y^{\top} (\bs \Phi\bs z)$ &   PBP &PBP&PBP \& SOCP program \\ 
\hline
Uniform/\newline \scalebox{.9}{Non-uniform} guarantees &  Non-uniform  &   Non-uniform&  Both& Non-uniform in~\cite{ALPV2014}, both in~\cite{PV2013}&  Both &  Uniform &  Uniform \\ 
\hline
\end{tabular}}\vspace{-5mm}
\end{center}
\caption{This table summarizes the setting and the conditions on the (quantized/one-bit or non-linear) sensing model, its associated sensing matrix, the low-complexity set including the observed signal, as well as the algorithms used for reconstructing signals from either (quantized/one-bit or non-linear) measurements. In this table, the acronyms are Gaussian mean width (GMW,  Sec.~\ref{sec:low-complex-space}), Kolmogorov entropy  (KE, Sec.~\ref{sec:low-complex-space}), union of low-dimensional subspaces (ULS), low-rank models (LR), second order cone programming (SOCP), 
  Gaussian random matrices (GRM), sub-Gaussian random matrices (SGRM), and subsampled Gaussian circulant matrix (SGCM)~\cite{DJR2017}. The concept of ``multiple of $\cl K$'' is defined in Sec.~\ref{sec:low-complex-space}.}\label{table:compare-related-works}
\end{table}
 
We now provide a comparison of our work with the most relevant literature. This one is connected to one-bit CS, QCS or other non-linear sensing models, when these are applied componentwise (as for scalar quantization) on compressive sensing measurements, possibly with a random or adaptive pre-quantization dither for certain studies, with algorithms similar or related to PBP. Most of the works presented below are summarized in Table~\ref{table:compare-related-works}, reporting there, among other aspects, the sensing model, the algorithm, the type of admissible sensing matrices and the low-complexity sets chosen in each of the referenced works.

\paragraph{PBP in one-bit CS:} Recently, signal reconstruction via projected back projection has been studied in the context of one-bit compressive sensing (one-bit CS), an extreme QCS scenario where only the signs of the compressive measurements are retained
\cite{BB2008,BFNPW2017,JDV2015,PV2013}. In this case~\eqref{QCSproblem} is turned into 
\begin{equation}
  \label{eq:bin-sensing-model}
\ts \bs y=\sign(\bs \Phi \bs x).  
\end{equation}
If the sensing matrix $\bs \Phi \in \bb R^{m
  \times n}$ satisfies the \textit{sign product embedding property}
(SPE) over $\cl K = \spset{k} \cap \bb S^{n-1}$~\cite{PV2013,JDV2015}, that is, up to some distortion $\epsilon >0$ and some universal normalization $\mu >0$, 
\begin{equation}
\tag{SPE}
\ts |\frac{\mu}{m}\scp{\sign(\bs \Phi \bs u)}{\bs \Phi \bs v} - \scp{\bs u}{\bs v}| \leq \epsilon,  
\end{equation}
for all $\bs u, \bs v \in \cl K$, then the reconstruction error of the
PBP of $\bs y$ is bounded by $2\epsilon$~\cite[Prop. 2]{JDV2015}. 
In other words, for a signal $\bs x \in \spset{k}$ with unknown norm (as implied by the invariance of \eqref{eq:bin-sensing-model} to positive
signal renormalization~\cite{BB2008}), the PBP method allows us to estimate the direction of a
sparse signal. This remains true for all methods assuming $\bs x$ to be of unit norm, such as those explained below. 

So far, the SPE property has only been proved for Gaussian random sensing
matrices, with \iid standard normal entries, for which $\mu=\sqrt{2/\pi}$. Such matrices respect
the SPE with high probability if $m = O(\epsilon^{-6} k \log \frac{n}{k})$,
conferring to PBP a (uniform) reconstruction error decay of
$O(m^{-1/6})$ when $m$ increases for all $\bs x \in \spset{k} \cap \bb S^{n-1}$. Besides, by
\emph{localizing} the SPE to a given $\bs u \in \spset{k} \cap \bb S^{n-1}$, a
non-uniform variant of the previous result, \ie where $\bs \Phi$ is
randomly drawn conditionally to the knowledge of $\bs x$, gives error decaying as fast as $O(m^{-1/2})$~\cite[Prop. 2]{JDV2015} when $m$ increases.   

For more general low-complexity set $\cl K \subset \bb B^n$ with ${\cl K \cap \bb S^{n-1}} \neq \emptyset$ and $\bs x \in {\cl K \cap \bb S^{n-1}}$, provided $\bs \Phi$ is a random sub-Gaussian matrix \cite{V2012}, the vector $\hat{\bs x} \in \cl K$ maximizing its scalar product with $\bs\Phi^\top\bs y$ is \whp a good estimate of $\bs x$ with small reconstruction error~\cite{ALPV2014}. This holds even if the binary sensing
model~\eqref{eq:bin-sensing-model} is noisy (\eg with possible random sign flips on a small percentage of the measurements). In fact, this
error decays like $C m^{-1/4} + D \|\bs x\|^{1/4}_\infty $ when $m$ increases, with $C >0$
depending only on the level of measurement noise, on the distribution of $\bs \Phi$, and on $\cl K$ (actually, on its Gaussian mean width, see Sec.~\ref{sec:specialcasePBP}), while
$D>0$ is associated with the non-Gaussian nature of the sub-Gaussian random matrix $\bs \Phi$ (\ie $D=0$
if it is Gaussian)~\cite[Thm 1.1]{ALPV2014}. Therefore, for one-bit CS with sub-Gaussian random matrices (\eg Bernoulli), the reconstruction error is not anymore guaranteed to decrease below a certain floor level $D$, and this level is driven by the sparsity of $\bs x$ (\ie it is high if $\bs x$ is very sparse). Actually, for Bernoulli random matrices (as well as for partial random Fourier matrices \cite{feuillen2018quantity}), there exist counterexamples of $2$-sparse signals that cannot be reconstructed, \ie with constant reconstruction error if $m$ increases, showing that the bound above is tight~\cite{PV2013}.
 
For Gaussian random sensing matrices, adding an adaptive or random dither to the compressive
measurements of a signal before their binarization allows accurate reconstruction of this signal (\ie its norm and direction)~\cite{KSW16,BFNPW2017}, using either PBP or a second-order cone program (SOCP). Additionally, for random observations altered by an adaptive dither before their one-bit quantization, \ie similarly to noise shaping techniques or $\Sigma\Delta$-quantization~\cite{gunturk2013sobolev,BJKS2015}, an appropriate reconstruction algorithm can achieve an exponential decay of its error in terms of the number of measurements. This is only demonstrated, however, in the case of Gaussian sensing matrices and for sparse signals~only.

More recently,~\cite{DJR2017} has shown that, if $\bs \Phi$ is a subsampled Gaussian circulant matrix in the binary observation model~\eqref{eq:bin-sensing-model}, PBP
can reconstruct the direction of any sparse vector up to an error
decaying as $O(m^{-1/4})$ (see~\cite[Thm 4.1]{DJR2017}). Moreover, by
adding a random dither to the linear random measurements before their
binarization, the same authors proved that the SOCP program of \cite{KSW16}
can fully estimate the vector $\bs x$ if an upper bound $R \geq \|\bs x\|$ is known a priori. These results extend to the dithered, uniformly quantized
CS expressed in~\eqref{eq:Uniform-dithered-quantization} (with quantization resolution $\delta>0$) with the same sensing matrix. In fact, with high probability, and for all
\emph{effectively sparse} signals $\bs x \in \bb B^n$, \ie such that $\|\bs x\|_1$ is small, the same SOCP program achieves a reconstruction error decaying like $O(m^{-1/6})$ when $m$ increases, provided that the dither is made of a Gaussian random vector with variance $R^2$ added to a uniform random vector adjusted to $\delta$~\cite[Thm 6.2]{DJR2017}.  

\paragraph{QCS and other non-linear sensing models:} The (scalar) QCS model~\eqref{eq:Uniform-dithered-quantization} can be seen as a special case of the more general, non-linear sensing model $\bs y = \bs f(\bs \Phi \bs x)$, with the random non-linear function $\bs f: \bs u \in \bb R^m \mapsto (f_1(u_1), \cdots, f_m(u_m))^\top\in\bb R^m$ such that $f_i \sim_{\iid} f$, for some random function $f: \bb R \to \bb R$~\cite{PV2016,PVY2017}. In the QCS context defined in~\eqref{eq:Uniform-dithered-quantization}, this non-linear sensing model corresponds to setting $f_i(\lambda) = \cl Q(\lambda + \xi_i)$ with $\xi_i \sim_{\iid} \cl U([0,\delta])$.

In~\cite{PVY2017}, the authors proved that, for a Gaussian random matrix $\bs \Phi$ and for a bounded, star-shaped\footnote{The set $\cl K$ is star-shaped if, for any $\lambda \in [0,1]$, $\lambda \cl K \subset \cl K$; in particular, $\bs 0 \in \cl K$.} set $\cl K$, provided that $f$ leads to finite moments $\mu := \bb E f(g)g$, $\sigma^2:=\bb E f(g)^2-\mu^2$, and $\eta^2 := \bb E f(g)^2$ with $g\sim \cl N(0,1)$, and provided that $f(g)$ is sub-Gaussian with finite sub-Gaussian norm $\psi := \|f(g)\|_{\psi_2}$~\cite{V2012}, one can estimate with high probability $\mu\frac{\bs x}{\|\bs x\|} \in \cl K$ from the solution $\hat{\bs x}$ of the PBP of $\bs y$ in~\eqref{eq:gen-non-lin-sensing} (see~\cite[Thm 9.1]{PVY2017}). In the specific case where $\bs f$ matches the QCS model~\eqref{eq:Uniform-dithered-quantization}, this analysis proves that for Gaussian random matrix $\bs \Phi$, the PBP of QCS observations estimates the direction $\bs x/\|\bs x\|$ with a reconstruction error decaying like $O((1+\delta)^2 \sqrt{w(\cl K)} m^{-\frac{1}{4}})$ when $m$ increases (the details of this analysis are given in App.~\ref{app:values-from-Klasso}).

A similar result is obtained in~\cite{PV2016} for the estimate $\hat{\bs x}$ provided by a K-Lasso program, which finds the element $\bs u$ of $\cl K$ minimizing the $\ell_2$-cost function $h(\bs u) := \|\bs \Phi \bs u - \bs y\|^2$, when $\bs y = \bs f(\bs \Phi \bs x)$, $\bs x \in \bb S^{n-1} \cap \inv{\mu}\cl K$ and under the similar hypotheses on the non-linear corruption $f_i \sim f$ than above (\ie with finite moments $\mu$, $\sigma$, and $\tilde\eta^2=\bb E(f(g)-\mu g)^2g^2)$\,). Of interest for this work,~\cite{PV2016} introduced a form of the (local) LPD (given in Sec.~\ref{sec:introduction} and Sec.~\ref{sec:preliminaries}) in the case where $\Amap  = f\circ \bs \Phi$ and $\bs \Phi$ is a Gaussian random matrix (with possibly unknown covariance between rows). The authors indeed analyzed when, for some $\epsilon > 0$, 
\begin{equation}
\label{eq:LPD-plan-vershynin}
  \ts \tinv{m}\,\big(\scp{f(\bs \Phi \bs x)}{\bs \Phi \bs v} - \scp{\bs \Phi\mu\bs x}{\bs \Phi \bs v}\big) \lesssim \epsilon,\quad \forall \bs v \in \cl D^* = \cl D \cap \bb B^n,
\end{equation}
with $\cl D = \cl D(\cl K, \mu \bs x) := \{\tau\bs h: \tau \geq 0, \bs h \in \cl K - \mu\bs x\}$ being the \emph{tangent cone} of $\cl K$ at $\mu \bs x$. An easy rewriting of~\cite[Proof of Thm 1.4]{PV2016} then essentially shows that the RIP of $\bs \Phi$ over $\cl D^*$ combined with~\eqref{eq:LPD-plan-vershynin} provides $\|\hat{\bs x} - \mu \bs x\| \lesssim \epsilon$. In particular, thanks to the Gaussianity of $\bs \Phi$, they prove that, with large probability,~\eqref{eq:LPD-plan-vershynin}~holds\footnote{As implied by Markov’s inequality combined with~\cite[Lemma 4.3]{PV2016}.} with $\epsilon = (w(\cl D^*)\sigma+\tilde\eta)/\sqrt{m}$, with $w(\cl D^*)$ the Gaussian mean width of $D^*$ measuring its intrinsic complexity (see Sec.~\ref{sec:low-complex-space}). For instance, if $\cl K = \spset{k}$ and if $f$ is such that $\mu,\sigma,\tilde\eta=O(1)$, this proves that $\epsilon = O(\sqrt{k \log(n/k)}/\sqrt{m})$. Correspondingly, if the $f_i$'s are thus selected to match the QCS model with a Gaussian sensing $\bs \Phi$, this shows that K-Lasso achieves a non-uniform reconstruction error decay of $O(1/\sqrt{m})$ of $\bs x \in \bb S^{n-1}\cap \tinv{\mu}\cl K$ if the Gaussian mean width of the tangent cone $\cl D^*$ can be bounded (\eg for sparse or compressible signals, or for low-rank matrices).  In other words, when instantiated to our specific QCS model, but only in the context of a Gaussian random matrix $\bs \Phi$ and with some restrictions on the norm of $\bs x$, the decay of the non-uniform reconstruction error of both PBP and K-Lasso in~\cite{PVY2017} and~\cite{PV2016}, respectively, are similar to the one achieved in our work (see Sec.~\ref{sec:specialcasePBP}). 

More recently, in the context of $\Sigma\Delta$ and noise-shaping quantization,~\cite{HS2018QCSStruMatrix} has extended initial works restricted to the use of Gaussian random matrices \cite{gunturk2013sobolev} by proving that if $\bs \Phi$ is either a bounded orthogonal ensemble (\eg a random partial Fourier matrices) or a partial circulant ensemble (also known as subsampled circulant matrix), a convex program proposed in the paper achieves a uniform reconstruction of any $\bs x\in \spset{k}\cap B^n$ with an error decaying polynomially fast in $m$ for the $\Sigma\Delta$ quantization, or exponentially fast in $m$ for the noise-shaping quantization. However, the analysis of~\cite{HS2018QCSStruMatrix} is limited for the estimation of sparse signals, although they characterize binary embeddings of finite and general low-complexity sets from the same quantization schemes. 

\paragraph{Bussgang's theorem and distorted correlators:}

Let us finally mention that asymptotic forms of the LPD property were already analyzed since 1952 in stochastic analysis and in the study of distorted Gaussian random processes \cite{bussgang1952crosscorrelation}, \eg for the design of efficient signal correlators under quantization constraints~\cite[Chap. 4]{zebadua:tel-01761603}\cite{zebadua2017compressed}.

For instance, Bussgang's theorem states that if $X_t$ is a zero-mean stationary Gaussian process, and that $Y_t := f(X_t)$ with $f$ a non-linear distortion, \eg a one-bit or a uniform quantizer, then the crosscorrelation of $X$ and $Y$ is proportional to the autocorrelation of $X$. 
Mathematically,
$$
R_{XY}(\tau)\ =\ C_g R_X(\tau),\quad \forall \tau \in \bb R,
$$
with $C^{-1}_g = \bb E f(X_0) X_0$, $R_{XY}(\tau) := \bb E Y_t X_{t + \tau} = \bb E Y_0 X_{\tau}$ and $R_{X}(\tau) := \bb E X_t X_{\tau} = \bb E X_0 X_{\tau}$, for all $t,\tau \in \bb R$ (from the stationarity of $X$).

For instance, for $f(\cdot) = \sign(\cdot)$, given two vectors $\bs x, \bs x' \in \bb S^{n-1}$ and the Gaussian processes $\scp{\bs \varphi}{\bs x}$ and $\scp{\bs \varphi}{\bs x'}$ defined on $\bb S^{n-1}$ with $\bs \varphi \sim \cl N^{n}(0,1)$,
Bussgang's theorem is equivalent to the ``arcsin law'', with $R_{XY}(\bs x) = \bb E f(\scp{\bs \varphi}{\bs x}) \scp{\bs \varphi}{\bs x'} = {(\frac{2}{\pi})}^{\scriptscriptstyle \frac{1}{2}}\,\scp{\bs x}{\bs x'}$ proportional to $R_{X}(\bs x) = \bb E \scp{\bs \varphi}{\bs x} \scp{\bs \varphi}{\bs x'} = \scp{\bs x}{\bs x'}$.

In fact, Bussgang's theorem is essentially an asymptotic form of the LPD distortion for a general distortion $\Dmap(\bs x) = f(\bs \Phi \bs x)$ with $\bs \Phi = (\bs \varphi_1^\top, \cdots, \bs \varphi_1^\top )^\top \sim \cl N^{m \times n}(0,1)$ and a distortion $g$ applied componentwise on $\bs \Phi \bs x$; for $m$ large and $\bs \varphi \sim \cl N^n(0,1)$,
$$
\ts \tinv{m} \scp{\Dmap(\bs x)}{\bs \Phi \bs x'}\ \approx\ \bb E\big[f\big(\scp{\bs \varphi}{\bs x}\big) f\big(\scp{\bs \varphi}{\bs x'}\big)\big]\ =\ C_g\,\bb E\big[\scp{\bs \varphi}{\bs x} f\big(\scp{\bs \varphi}{\bs x'}\big)\big]\ \approx\ \tinv{m} \scp{\bs \Phi \bs x}{\bs \Phi \bs x'}.  
$$
This observation is also exploited in the K-Lasso program for non-linear compressive sensing \cite{PV2016} where the first moment $\mu$ amounts to the proportionality factor $C_g$ above.

\section{Preliminaries}
\label{sec:preliminaries}

This section introduces the key concepts of this work, namely: how we measure the complexity of a signal set (Sec.~\ref{sec:low-complex-space}); the definition of random matrix distributions known to generate matrices satisfying (\whp) the restricted isometry property (RIP) over a low-complexity set (Sec.~\ref{sec:cond-ensur-rip}); the projected back projection (PBP) method estimating signals from their quantized observations (Sec.~\ref{sec:PBP}); and finally, in Sec.~\ref{sec:dist-inner-prod}, a complete definition of the limited projection distortion (LPD) which, combined with the RIP, allows us to bound the PBP reconstruction error in Sec.~\ref{sec:PBP}.     

\subsection{Characterization of low-complexity sets}
\label{sec:low-complex-space}

In this work, our ability to estimate a signal $\bs x$ from the QCS model~\eqref{eq:Uniform-dithered-quantization} is developed on the hypothesis that this signal belongs to a ``low-complexity'' set $\cl K \subset \bb R^n$. We characterize the low-complexity nature of $\cl K$ according to two different, but connected ways. 

We first suppose that, for any radius $\eta > 0$, the set $\cl K$ --- or its restriction to the unit $\ell_2$-ball $\bb B^n$ if $\cl K$ is unbounded (\eg conic) --- can be \emph{covered} by a relatively small number of translated $\ell_2$-balls of radius $\eta$. In other words, we assume that $\cl K$ has a small {\em Kolmogorov entropy} $\cl H(\cl K, \eta)$ compared to $n$~\cite{KT1961}, with, for any bounded set $\cl S$, 
$$
\ts \cl H(\cl S,\eta) := \log\,\min\{\,|\cl G|:~\cl G \subset \cl S \subset \cl G+\eta\bb B^n\},
$$  
where the addition is the Minkowski sum between sets.

Second, we consider any set $\cl K$ (or $\cl K \cap \bb B^n$ if $\cl K$ is unbounded) having a small Gaussian mean width compared to $\sqrt n$. For any bounded set $\cl S \subset \bb R^n$~\cite{PV2013,CRPW2012}, this width, denoted $w(\cl S)$, is defined~by
$$
w(\cl S)\ :=\ \bb E \sup_{\bs u \in \cl S} |\scp{\bs g}{\bs u}|,\quad \bs g \sim \cl N(0,\Id_n). 
$$ 
We provide below examples of sets for which $w$ has a known upper bound. 
\medskip

In this paper, we restrict ourself to two specific low-complexity set categories displaying very different relationships between their Kolmogorov entropy and their Gaussian mean width.
\begin{itemize}
\item ({\em structured sets}) The low-complexity set $\cl K$ can be first \emph{structured}, meaning that $\cl K$ is a \emph{cone}, \ie $\alpha \cl K \subset \cl K$ for all $\alpha \geq 0$ (in particular, $\bs 0 \in \cl K$ and $\cl K$ is symmetric), and the Kolmogorov entropy $\cl H(\cl K \cap \bb B^n, \eta)$ of $\cl K \cap \bb B^n$ is bounded by a function of the complexity of ${\cl K \cap \bb B^n}$ multiplied by $\log 1/\eta$~\cite{CRPW2012,oymak2015near}. 

For instance, if $\cl K$ is a low-dimensional subspace of $\bb R^n$ (or a finite union of subspaces, as for the set of sparse vectors) or the set of low-rank matrices, $\cl H(\cl K \cap \bb B^n, \eta)$ is well controlled by standard covering arguments of~$\cl K \cap \bb B^n$~\cite{pisier1999volume}. In fact, for the set of sparse signals $\bs\Psi \spset{k}$ in an orthonormal basis (or a dictionary) $\bs \Psi$ of $\bb R^n$, and the set $\lrset{r}$ of rank-$r$ matrix of size $n_1\times n_2$, we have respectively
\begin{align*}
\cl H(\bs\Psi \spset{k} \cap \bb B^n, \eta)&\ts\ \lesssim\ k \log(n/k)\log(1 + 1/\eta)\quad \text{(see \eg~\cite{BDDW08})},\\
\cl H(\lrset{r} \cap \bb B^n, \eta)&\ts\ \lesssim\ r(n_1 + n_2) \log(1 + 1/\eta) \quad \text{(see \eg~\cite{CP11,CRPW2012})}.
\end{align*}
Interestingly, the terms independent of $\eta$ in these bounds also constrain the Gaussian mean width of the same sets, \ie  
$$
k\ \lesssim\ \ts w(\bs\Psi \spset{k} \cap \bb B^n)^2\ \lesssim\ k \log(n/k)\quad \text{and}\quad w(\lrset{r} \cap \bb B^{n_1\times n_2}_F)^2\ \asymp\ r(n_1 + n_2).
$$
In fact, this is verified for many other structured sets with low Kolmogorov entropy (see, \eg~\cite{CRPW2012,CP11,PV2013} and \cite[Table 1]{JC2016} for other examples). 

Consequently, inspired by this correspondence (also stressed in \cite{oymak2015near,JC2016}), we consider that a set $\cl K \subset \bb R^n$ is \emph{structured} if it is a cone and if the following bound holds: 
\begin{equation}
  \label{eq:structured-subset-KE-bound}
\ts \cl H(\cl K \cap \bb B^n, \eta)\ \lesssim\ w(\cl K \cap \bb B^n)^2 \log(1 + 1/\eta).  
\end{equation}

\item ({\em arbitrary, bounded and star-shaped sets}) The second category of low-complexity set is composed of arbitrary, bounded and star-shaped sets $\cl K \subset \bb R^n$. This includes bounded, convex and symmetric sets of $\bb R^n$, as those considered in the analysis of the PBP reconstruction error in Sec.~\ref{sec:PBP-convex-set} and Sec.~\ref{sec:specialcasePBP}. For these sets, $\cl H$ is not (in general) bounded as tightly as in~\eqref{eq:structured-subset-KE-bound}. Instead, we can invoke the looser bound induced by Sudakov inequality~\cite[Thm 3.18]{LedouxTalagrand91book}:
  \begin{equation}
    \label{eq:sudakov}
    \ts \cl H(\cl K, \eta)\ \lesssim\ \tinv{\eta^2}\,w(\cl K)^2.    
  \end{equation}
\end{itemize}
\medskip

We conclude this section by introducing three useful set constructions, built from the considered low-complexity sets and sharing their low-complexity nature. 
\begin{itemize}
\item ({\em set clipping}) The clipping of $\cl K$ is defined as
  \begin{equation}
    \label{eq:set-clipping}
    \clip{\cl K} :=
  \begin{cases}
    \cl K \cap \bb B^n&\text{if $\cl K$ is a cone},\\
    \cl K&\text{if $\cl K$ is bounded.}
  \end{cases}
\end{equation}
We sometimes simplify our developments by imposing $\|\clip{\cl K}\| \leq 1$. This thus means that $\cl K \subset \bb B^n$ if $\cl K$ is bounded.  

\item ({\em multiple of sets}) Given $s \in \bb N$, the $s\th$-\emph{multiple} of $\cl K$ is defined by 
$$
\ts \cl K^{\splus s} := \sum_{i=1}^s \cl K,
$$
where the sum is the Minkowski's sum of sets. Note that $\cl K^{\splus n} \subset n \cl K$ only if $\cl K$ is convex. We will use this set construction to characterize to which low-complexity set belongs the combination (\eg addition, or subtraction if $\cl K$ is symmetric) of a few elements of $\cl K$.
\item ({\em local set}) Given a radius $\eta > 0$, the $\eta$-local set of $\cl K$ corresponds to  
  \begin{equation}
    \label{eq:local-set}
    \cl K_{\loc}^{(\eta)} := (\cl K - \cl K) \cap \eta \bb B^n.    
  \end{equation}
  This set is important to characterize the Lipschitz continuity of a RIP matrix (see Sec.~\ref{sec:LPD-noisy-linear-map}).
\end{itemize}

\medskip

\subsection{Restricted isometry over low-complexity sets} 
\label{sec:cond-ensur-rip}

Another implicit assumption we make on the set $\cl K$, or on one of its multiples (see Sec.~\ref{sec:specialcasePBP}), is that it is compatible with the existence of random matrices satisfying \whp the restricted isometry property over $\cl K$, as defined below.
 
\begin{definition}[Restricted isometry property -- RIP]
\label{def:rip-def}
Given a distortion $\epsilon > 0$ and a bounded set $\cl K \subset \bb R^n$, a matrix $\tinv{\sqrt m} \bs \Phi \in \bb R^{m \times n}$ respects the restricted isometry property RIP$(\cl K, \epsilon)$ if
\begin{equation}
\label{eq:RIP}
\ts \big |\frac{1}{m}\|\bs\Phi \bs u\|^2 - \|\bs u\|^2 \big |\ \leq\ \epsilon \|\cl K\|^2,\quad\forall \bs u \in \cl K.  
\end{equation}
By extension, if $\cl K$ is a cone, then the RIP$(\cl K, \epsilon)$ amounts to
\begin{equation}
\label{eq:RIP-cone}
\ts \big |\frac{1}{m}\|\bs\Phi \bs u\|^2 - \|\bs u\|^2 \big |\ \leq\ \epsilon \|\bs u\|^2,\quad\forall \bs u \in \cl K.  
\end{equation}
\end{definition}

We clarify formally the link between the set $\cl K$ and the RIP of a sensing matrix by introducing the concept of \emph{RIP matrix distribution}. 
\begin{definition}[RIP matrix distribution]
\label{def:RIPGen}
Given a set $\cl K \subset \bb R^n$, a RIP matrix distribution $\DRIP(\cl K)$ for $\cl K$ is a probability distribution over $\bb R^{m \times n}$ such that:
\begin{itemize}
\item[\em (i)] for a distortion $\epsilon > 0$ and a failure probability $0<\zeta<1$, provided that  
\begin{equation}
  \label{eq:gen-RIP-lsc-cond}
  \ts m\ \gtrsim \epsilon^{-2}\, \|\clip{\cl K}\|^{-2}\,w(\clip{\cl K})^{2}\,\cl P_{\log}(m, n, 1/\epsilon, 1/\zeta),
\end{equation}
where $\cl P_{\log}$ is a \emph{polylogarithmic}\footnote{A polynomial of the logarithm of its arguments; we further assume $\cl P_{\log}$ has positive coefficients and $\cl P_{\log}\geq 1$.} function that depends on the distribution $\DRIP$, a random matrix $\tinv{\sqrt m}\bs\Phi \in \bb R^{m\times n}$ drawn from $\DRIP$ respects with probability exceeding $1-\zeta$ the restricted isometry property RIP$(\cl K, \epsilon)$ defined in Def.~\ref{def:rip-def};

\item[\em (ii)] at fixed $m, n \in \bb N$, the probability distribution $\DRIP(\cl K)$ (\eg sub-Gaussian, partial random Fourier) remains unchanged to generate RIP matrices over any multiples or any local sets of $\cl K$, \ie $\DRIP(\cl K) = \DRIP(\cl K^{+s}) = \DRIP(\cl K^{(\eta)}_{\loc})$ for any $s \in \bb N$ and any $\eta > 0$.
\end{itemize}
\end{definition}

In this definition, the point \emph{(ii)} imposes that, at identical matrix dimensions, the same probability distribution $\DRIP$ applies to randomly generate a RIP matrix over $\cl K$, its multiples or any of its local sets; however, \eqref{eq:gen-RIP-lsc-cond} has to be modified accordingly with the Gaussian mean width of these set variants, \ie to verify with which distortion $\epsilon > 0$ and probability $\tinv{\sqrt m}\bs \Phi \sim \DRIP$ satisfies the RIP over each of these sets. As reviewed below, the CS literature provides many RIP matrix distributions compatible with the points \emph{(i)} and \emph{(ii)} above, both for conic and bounded star-shaped sets, and allowing for dense or structured random sensing matrices (\eg with fast matrix-to-vector multiplication~\cite{FR2013}).

\paragraph{Sub-Gaussian RIP matrix distributions and sparse signals:} For sensing $k$-sparse signals in an orthonormal basis $\bs \Psi \in \bb R^{n \times n}$, \ie signals of $\cl K = \bs\Psi \spset{k}$, many studies have proved that, given a distortion $\epsilon >0$ and as soon as 
\begin{equation}
  \label{eq:RIP-cond-subGauss-sparse}
\ts m \gtrsim \epsilon^{-2} k \log\frac{n}{\epsilon k},  
\end{equation}
a sub-Gaussian random matrix $\tinv{\sqrt m}\bs \Phi \in \bb R^{m\times n}$, \ie with random entries \iid from a centered sub-Gaussian distribution with unit variance (\eg Gaussian, Bernoulli)~\cite{CT06,MPT2008,BDDW08,FR2013}, satisfies the RIP$(\bs \Psi\spset{{2k}},\epsilon)$ with probability exceeding $1 - C \exp(-c \epsilon^2 m)$. Consequently, this proves also that we can lower bound this probability by $1-\zeta$ (with $\zeta \in (0,1)$) if
\begin{equation}
  \label{eq:req-m-subG-sparse}
  \ts m \gtrsim \epsilon^{-2} k \log(n/\epsilon k) \log(1/\zeta),  
\end{equation}
since then $m\epsilon^2 \gtrsim \log(1/\zeta)$ and $1 - C \exp(-c \epsilon^2 m) \geq 1 - \zeta$, for appropriate (hidden) multiplicative constants. Note that \eqref{eq:req-m-subG-sparse} is satisfied if \eqref{eq:gen-RIP-lsc-cond} holds with $\cl K = \bs \Psi\spset{{2k}}$ and $\cl P_{\log} = \log(n/\epsilon)\log(1/\zeta)$, since, as shown in the next paragraph, $w(\clip{\cl K})^2 \geq k - 1$. As explained below, sub-Gaussian random matrix can embed more general sets than the set of sparse signals. 

\paragraph{Structured RIP distributions and sparse signals:} Still in the context of sparse signals embedding, the RIP can also be proved for certain structured random matrix constructions, \ie with  reduced memory storage and small matrix-to-vector multiplication complexity. This is the case of partial random orthonormal matrix or bounded orthonormal systems (BOS) constructions~\cite{CT06,Rau10}. For instance, if
$\bs U \in \bb R^{n \times n}$ is an orthonormal basis such as the
discrete cosine transform (DCT) or the Hadamard transform\footnote{Or the discrete Fourier transform provided the results of this work are extended to the
  complex domain.}, and if the mutual
coherence $\mu(\bs U, \bs \Psi) := \sqrt n\,\max_{ij}|(\bs
U^\top \bs \Psi)_{ij}| \in [1, \sqrt n]$ is small, then, provided that\footnote{This result is an easy rewriting of~\cite[Theorem 8.1]{Rau10} for incoherent discrete bases of $\bb R^{n \times n}$.}
\begin{equation}
\label{eq:RIP-cond-prom-sparse}
m \gtrsim \mu^2 \epsilon^{-2} k \,(\log k)^2 \log n \log m \log 1/\zeta,  
\end{equation}
the matrix $\bs \Phi \in \bb R^{m \times n}$ formed by picking $m$
rows uniformly at random in the $n$ rows of the renormalized matrix $\sqrt n\,\bs U$ respects the RIP$(\bs \Psi \spset{k},\epsilon)$ with probability exceeding $1-\zeta$. Moreover, the same implicit matrix distribution $\DRIP$ applies for generating RIP matrices over multiples of $\bs \Psi \spset{k}$, \ie of the form $\bs \Psi \spset{sk}$ with $s \in \bb N$, or over any of its local sets since $\bs \Psi \spset{k}$ is conic.

Note that since $\ts k-1 \leq (\bb E \|\bs g_{\cl T}\|)^2 = w(\bs \Psi \spset{{\cl T}} \cap \bb B^n)^2 \leq w(\bs \Psi \spset{k}\cap \bb B^n)^2 \lesssim k \log n/k$, with the $k$-dimensional subspace $\spset{{\cl T}} := \{\bs u \in \bb R^n: \supp \bs u \subset {\cl T}\}$ associated with an arbitrary $k$-length support ${\cl T} \subset [n]$ and $\bs g \sim \cl N^n(0,1)$,~\eqref{eq:RIP-cond-prom-sparse} is verified if 
\begin{equation}
\label{eq:RIP-cond-prom-sparse-wK}
m \gtrsim \mu^2 \epsilon^{-2} w(\bs \Psi \spset{k}\cap \bb B^n)^2 \,(\log n)^4 \log 1/\zeta,
\end{equation}
which is compatible with~\eqref{eq:gen-RIP-lsc-cond}. 

\paragraph{RIP distributions for low-complexity signals:} For more general low-complexity sets $\cl K \subset \bb R^n$ (\eg effective sparsity and low-rank models; see \cite{Ayaz16,CRPW2012,GV12,RFP10}, \cite[Table 1]{JC2016}), the following known result shows that, from a requirement on $m$ following the template \eqref{eq:gen-RIP-lsc-cond}, one can ensure \whp the RIP of a sub-Gaussian random matrix $\bs \Phi$ over $\cl K$ (see also \cite{MPT2008,KM2005,D2015}). 
\begin{proposition}[Adapted from {\cite[Thm 1.4]{LMPV2017}}]
\label{prop:RIP-bounded-cvxset}
Given $\cl K \subset \bb R^n$, a distortion $\epsilon >0$ and a probability of failure $0< \zeta <1$, if
\begin{equation}
  \label{eq:RIP-bounded-cvxset-condm}
  \ts m\ \gtrsim\ \epsilon^{-2} \frac{w(\clip{\cl K})^2}{\|\clip{\cl K}\|^2}\log(1/\zeta),  
\end{equation}
and if $\bs \Phi \in \bb R^{m \times n}$ is a random matrix whose entries are \iid from a centered sub-Gaussian distribution with unit variance\footnote{Whose sub-Gaussian norm is hidden in the multiplicative constant of~\eqref{eq:RIP-bounded-cvxset-condm}.}, then $\tinv{\sqrt m} \bs \Phi$ respects the RIP$(\clip{\cl K}, \epsilon)$ (\ie the RIP$(\cl K, \epsilon)$ if $\cl K$ is conic) with probability exceeding $1-\zeta$.
\end{proposition}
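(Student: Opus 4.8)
The plan is to derive Proposition~\ref{prop:RIP-bounded-cvxset} from the uniform deviation estimate of~\cite[Thm 1.4]{LMPV2017}, which controls $\sup_{\bs u \in T}\big|\,\|\bs \Phi \bs u\| - \sqrt m\,\|\bs u\|\,\big|$ for a bounded set $T \subset \bb R^n$, and then to convert it into the squared-norm deviation~\eqref{eq:RIP} defining the RIP. Write $T := \clip{\cl K}$ and $R := \|\clip{\cl K}\| = \rad(T)$ (the degenerate case $R = 0$, \ie $\clip{\cl K} = \{\bs 0\}$, being trivial). Since $\bs \Phi$ has \iid centered, unit-variance sub-Gaussian entries, its rows are independent, isotropic, sub-Gaussian vectors, so~\cite[Thm 1.4]{LMPV2017} applies to $T$, with the row sub-Gaussian norm absorbed in the constants (as announced by the footnote to the statement). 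First I would invoke that theorem with deviation parameter $\asymp \sqrt{\log(1/\zeta)}$ and divide through by $\sqrt m$: with probability at least $1 - \zeta$,
\begin{equation}
\label{eq:aux-devbound}
\ts E\ :=\ \sup_{\bs u \in T}\,\big|\,\tinv{\sqrt m}\|\bs \Phi \bs u\| - \|\bs u\|\,\big|\ \lesssim\ \tinv{\sqrt m}\big(w(T) + \sqrt{\log(1/\zeta)}\,R\big).
\end{equation}

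The second step is to pass from this $\ell_2$ deviation to the $\ell_2^2$ deviation required in~\eqref{eq:RIP}, via the elementary identity $|a^2 - b^2| = |a-b|\,|a+b| \leq |a-b|\,(2b + |a-b|)$ applied with $a = \tinv{\sqrt m}\|\bs \Phi \bs u\|$ and $b = \|\bs u\|$: for every $\bs u \in T$,
\begin{equation}
\label{eq:aux-sq}
\ts \big|\,\tinv{m}\|\bs \Phi \bs u\|^2 - \|\bs u\|^2\,\big|\ \leq\ E\,(2\|\bs u\| + E)\ \leq\ E\,(2R + E).
\end{equation}
It then remains to choose $m$. Using $w(T) \geq \sqrt{2/\pi}\,R$ (bound $w$ from below by a $\|\cdot\|$-near-maximizer of $T$), one checks that $w(T)^2/R^2 + \log(1/\zeta) \lesssim (w(T)^2/R^2)\,\max\{1,\log(1/\zeta)\}$, so~\eqref{eq:RIP-bounded-cvxset-condm} --- with a large enough hidden constant, and in the regime $\epsilon \leq 1$ --- forces the right-hand side of~\eqref{eq:aux-devbound} to be at most $\tfrac{\epsilon}{3}R \leq R$. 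Hence $E \leq \tfrac{\epsilon}{3}R$ and $E \leq R$, and feeding this into~\eqref{eq:aux-sq} gives $\big|\,\tinv{m}\|\bs \Phi \bs u\|^2 - \|\bs u\|^2\,\big| \leq 3ER \leq \epsilon R^2 = \epsilon\,\|\clip{\cl K}\|^2$ for all $\bs u \in T$, \ie the RIP$(\clip{\cl K},\epsilon)$ of Def.~\ref{def:rip-def}.

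Finally, when $\cl K$ is a cone one has $\clip{\cl K} = \cl K \cap \bb B^n$ and $R = 1$; applying the bound just established to $\bs u/\|\bs u\| \in \cl K \cap \bb B^n$ and rescaling by $\|\bs u\|^2$ yields the homogeneous form~\eqref{eq:RIP-cone}, matching the parenthetical in the statement. The only substantive ingredient in this argument is~\cite[Thm 1.4]{LMPV2017} itself --- a generic-chaining / Gaussian-comparison estimate bounding the expected (and, through a Talagrand-type concentration argument, the high-probability) supremum of the process $\bs u \mapsto \|\bs \Phi \bs u\| - \sqrt m\,\|\bs u\|$ by $w(T)$ plus a radius term; everything downstream is bookkeeping. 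Accordingly, I expect the main obstacle to be organizational rather than analytic: carrying the sub-Gaussian constant and the $w$-versus-$\rad$ comparison through, and verifying that~\eqref{eq:RIP-bounded-cvxset-condm} simultaneously delivers the a~priori bound $E \leq R$ used in~\eqref{eq:aux-sq} and reproduces the template~\eqref{eq:gen-RIP-lsc-cond} for the required number of measurements.
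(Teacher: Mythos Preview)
Your proposal is correct and is precisely the standard derivation one expects here; the paper itself does not supply a proof but merely states that ``the proof of this easy adaptation of \cite[Thm 1.4]{LMPV2017} is left to the reader,'' so there is nothing further to compare against. Your bookkeeping (the $|a^2-b^2|$ identity, the lower bound $w(T)\geq\sqrt{2/\pi}\,R$, and the rescaling in the conic case) is exactly what is needed, with the only caveat that the passage from $w(T)^2/R^2 + \log(1/\zeta)$ to $(w(T)^2/R^2)\log(1/\zeta)$ implicitly uses $\log(1/\zeta)\gtrsim 1$, which is the usual harmless convention for such statements.
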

The proof of this easy adaptation of {\cite[Thm 1.4]{LMPV2017}} is left to the reader. In other words, a sub-Gaussian matrix distribution is a RIP matrix distribution for all low-complexity sets with bounded Gaussian mean width, as specified in Def.~\ref{def:RIPGen}. Note that the point \emph{(ii)} in this definition obviously holds since \eqref{eq:RIP-bounded-cvxset-condm} already depends on the Gaussian mean width of $\clip{\cl K}$, which authorizes embedding of the multiples and local sets of $\cl K$. This applies also for the RIP matrix distributions below. 
   
\paragraph{SORS distributions and low-complexity signals:} Recently, \emph{subsampled orthogonal with random sign} (SORS) sensing matrices have been proved to respect \whp the RIP~\cite{oymak2015sors}.  This construction combines a partial random orthonormal matrix construction (see above) with a pre-modulation matrix set with random signs, \ie $\bs \Phi = \sqrt n\, \bs R_{\Omega} \bs U \bs D$ where $\bs R_{\Omega} \in \{0,1\}^{|\Omega|\times n}$ is the selection matrix such that $\bs R_{\Omega} \bs u = \bs u_{\Omega}$ for some $\Omega \subset [n]$, $\bs U \in \bb R^{n \times n}$ is an orthonormal matrix with $\max_{ij}|U_{ij}| = O(1/\sqrt n)$, and $\bs D$ is a $n \times n$ random diagonal matrix with the diagonal entries \iid $\pm 1$ with equal probability. 

\begin{proposition}[Adapted from {\cite[Thm 3.3]{oymak2015sors}}]
\label{prop:SORS}
Given $\cl K \subset \bb R^n$, a distortion $\epsilon >0$ and a probability of failure $0<\zeta<1$, if
\begin{equation}
  \label{eq:SORS}
  \ts m\ \gtrsim\ \epsilon^{-2} \max(1,\frac{w(\clip{\cl K})^2}{\|\clip{\cl K}\|^2}) (\log n)^4 \log(2/\zeta)^2, 
\end{equation}
and if $\bs \Phi \in \bb R^{m \times n}$ is a SORS random matrix, then, $\tinv{\sqrt m} \bs \Phi$ respects the RIP$(\clip{\cl K}, \min(\epsilon, \epsilon^2))$, with probability exceeding $1-\zeta$.
\end{proposition}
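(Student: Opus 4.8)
The plan is to obtain this statement as a direct transcription of the ``isometric sketching of any set'' theorem~\cite[Thm 3.3]{oymak2015sors} into the RIP template of Def.~\ref{def:rip-def}--Def.~\ref{def:RIPGen}; essentially only bookkeeping is involved, and no probabilistic idea beyond that reference is needed. First I would recall its content in the form we need: if $\bs U\in\bb R^{n\times n}$ is orthonormal with $\max_{ij}|U_{ij}| = O(1/\sqrt n)$ --- so that, for $|\Omega| = m$, the partial matrix $\sqrt n\,\bs R_\Omega\bs U$ already obeys the standard sparse-RIP up to order $\asymp m/(\log n)^4$ with a fixed small constant --- then the SORS matrix $\bs\Phi = \sqrt n\,\bs R_\Omega\bs U\bs D$, with $\bs D$ the Rademacher pre-modulation, acts, \whp, as an approximate isometry over \emph{any} bounded set $T\subset\bb R^n$: for such $\bs\Phi$ one has, with probability at least $1-\zeta$, $\sup_{\bs u\in T}|\tinv m\|\bs\Phi\bs u\|^2 - \|\bs u\|^2|\ \le\ \min(\epsilon,\epsilon^2)\,\|T\|^2$ as soon as $m\ \gtrsim\ \epsilon^{-2}\,\max(1,w(T)^2/\|T\|^2)\,(\log n)^4\log(2/\zeta)^2$. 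Here the $\min(\epsilon,\epsilon^2)$ split (a $\delta$-versus-$\delta^2$ dichotomy typical of the sub-exponential, Bernstein-type tail that drives the chaining bound) and the $\max(1,\cdot)$ term (the radius-of-$T$ contribution to the mixed complexity of $T$) are precisely the two artefacts present in the statement; the role of $\bs D$ is exactly to upgrade the sparse-RIP of $\sqrt n\,\bs R_\Omega\bs U$, which by itself only controls sparse vectors, to a bound valid over a set of small Gaussian width, via a Hanson--Wright / chaos argument.

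Given this, the proof reduces to instantiating~\cite[Thm 3.3]{oymak2015sors} at $T := \clip{\cl K}$. Since $w(T)/\|T\| = w(\clip{\cl K})/\|\clip{\cl K}\|$ (and, if $\cl K$ is not already symmetric, the absolute-value Gaussian width $w$ used in this paper only exceeds the one in~\cite{oymak2015sors}, so the hypothesis is if anything stronger), the requirement~\eqref{eq:SORS} is exactly their measurement condition, up to possibly enlarging their polylog to the product form $(\log n)^4\log(2/\zeta)^2$, which only strengthens the hypothesis. Their conclusion then reads $|\tinv m\|\bs\Phi\bs u\|^2 - \|\bs u\|^2|\ \le\ \min(\epsilon,\epsilon^2)\,\|\clip{\cl K}\|^2$ for all $\bs u\in\clip{\cl K}$, which is RIP$(\clip{\cl K},\min(\epsilon,\epsilon^2))$ in the sense of~\eqref{eq:RIP}. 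When $\cl K$ is a cone, $\clip{\cl K} = \cl K\cap\bb B^n$ and $\|\clip{\cl K}\| = 1$, and since both sides of this bound are homogeneous of degree $2$ in $\bs u$ --- rescaling any $\bs u\in\cl K\setminus\{\bs 0\}$ by $1/\|\bs u\|\in(0,\infty)$ keeps it inside the cone --- the bound upgrades to~\eqref{eq:RIP-cone} over all of $\cl K$, matching the parenthetical in the statement (the case $\clip{\cl K} = \{\bs 0\}$ being vacuous). This establishes point~\emph{(i)} of Def.~\ref{def:RIPGen} for the SORS distribution with $\cl P_{\log} = (\log n)^4\log(2/\zeta)^2$. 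Point~\emph{(ii)} is then immediate: the SORS construction does not reference $\cl K$, and~\eqref{eq:SORS} constrains $\cl K$ only through the scale-invariant ratio $w(\clip{\cl K})/\|\clip{\cl K}\|$ and through $n,\zeta$; hence the very same distribution generates, \whp, RIP matrices over any multiple $\cl K^{\splus s}$ or local set $\cl K^{(\eta)}_{\loc}$, with~\eqref{eq:SORS} simply re-evaluated at the corresponding ratio.

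The main --- and essentially the only --- obstacle is the faithful import of~\cite[Thm 3.3]{oymak2015sors}: one must line up its polylogarithmic factor and its probability accounting with the product form $(\log n)^4\log(2/\zeta)^2$ used here (a crude upper bound suffices, since any larger polylog only strengthens~\eqref{eq:SORS}), check that the chain ``(sparse-RIP of $\sqrt n\,\bs R_\Omega\bs U$ of order $\asymp m/(\log n)^4$) $\Rightarrow$ (isometric sketching of $T$ after Rademacher modulation)'' applies verbatim to a general bounded set $T$ and not merely to sets of sparse vectors, and keep the $\epsilon$-versus-$\epsilon^2$ split and the $\|\clip{\cl K}\|^2$-normalization consistent with~\eqref{eq:RIP}. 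None of this requires an argument beyond what is already in that reference, which is why --- as with Prop.~\ref{prop:RIP-bounded-cvxset} --- the detailed verification can reasonably be left to the reader.
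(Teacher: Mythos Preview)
Your proposal is correct and matches the paper's own treatment: the paper simply states that ``the proof of this easy adaptation of~\cite[Thm 3.3]{oymak2015sors} is left to the reader,'' and your write-up is precisely that bookkeeping exercise of instantiating the cited theorem at $T = \clip{\cl K}$ and matching the resulting bound to the RIP template of Def.~\ref{def:rip-def}. If anything, you go further than necessary by also verifying points~\emph{(i)} and~\emph{(ii)} of Def.~\ref{def:RIPGen}, which in the paper are discussed in the surrounding text rather than being part of the proposition itself.
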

The proof of this easy adaptation of \cite[Thm 3.3]{oymak2015sors} is left to the reader. We thus see that the requirement on $m$ in \eqref{eq:SORS} follows the template \eqref{eq:gen-RIP-lsc-cond} as soon as $w(\clip{\cl K})^2 \gtrsim \|\clip{\cl K}\|^2$. This easily holds if $\cl K$ is either conic or star-shaped; there exists then a $\bs v \in \clip{\cl K}$ such that $\|\bs v\| = \|\clip{\cl K}\|$, and $w(\clip{\cl K}) \geq \bb E |\scp{\bs g}{\bs v}| = \sqrt{2/\pi} \|\clip{\cl K}\|$ for $\bs g \sim \cl N^n(0,1)$.
\bigskip

In conclusion, according to the requirements~\eqref{eq:RIP-cond-subGauss-sparse},~\eqref{eq:RIP-cond-prom-sparse-wK},~\eqref{eq:RIP-bounded-cvxset-condm} and~\eqref{eq:SORS}, sub-Gaussian random matrix, partial random orthonormal matrix, BOS, and SORS (for $0<\epsilon\leq 1$) are examples of RIP matrix distributions, as defined in Def.~\ref{def:RIPGen}.

\subsection{Projected back projection}
\label{sec:PBP}

As announced in the Introduction, the standpoint of this work is to show the compatibility of a RIP matrix $\bs \Phi$ with the dithered QCS model~\eqref{eq:Uniform-dithered-quantization}, provided that the dither is random and uniform, through the possibility to estimate $\bs x$ via the projected back projection (PBP) onto $\cl K$ of the quantized observations
$
\ts \bs y = \cl Q(\bs \Phi \bs x + \bs \xi).
$ 

More generally, given the distorted CS model 
\begin{equation}
  \label{eq:distorted-CS-model}
  \bs y = \Dmap(\bs x),\quad \bs x \in \cl K \cap \bb B^n,
\end{equation}
associated with the general (random) mapping $\Dmap: \bb R^n \to \bb R^m$ (\eg with $\Dmap \equiv \Amap$), the PBP of $\bs y$ onto $\cl K$ is mathematically defined~by 
\begin{equation}
  \label{eq:PBP-estimate}
\ts  \hat{\bs x}\ :=\ \cl P_{\cl K}(\inv{m} \bs \Phi^\top \bs y), 
\end{equation}
where $\cl P_{\cl K}$ is the (minimal distance) projector on $\cl K$, \ie
\begin{equation}
  \label{eq:proj-def}
\ts \displaystyle \cl P_{\cl K}(\bs z)\ \in\ \arg\min_{\bs u \in \cl K} \|\bs z - \bs u\|.  
\end{equation}
Throughout this work, we assume that $\cl P_{\cl K}$ can be computed, \ie in polynomial complexity with respect to $m$ and $n$. For instance, if $\cl K = \spset{k}$, $\cl P_{\cl K}$ is the standard best $k$-term hard thresholding operator, and if $\cl K$ is convex and bounded, $\cl P_{\cl K}$ is the orthogonal projection onto this set. There exist of course sets $\cl K$ for which $\cl P_{\cl K}$ is intractable or NP-hard, \eg for the set of sparse vectors in a $n \times d$ dictionary with $d > n$ (for which $\cl P_{\cl K}$ is equivalent to an $\ell_0$-constrained Lasso problem \cite{natarajan1995sparse}), and for the set of \emph{cosparse} vectors~\cite[Theorem 1]{tillmann2014projection}.

\begin{remark}
Hereafter, the analysis of the PBP method is divided into two categories: \textit{uniform estimation}, \ie with high probability on the generation of $\Dmap$ (\eg $\Dmap \equiv \Amap$, the source of randomness is then the dither; see \eqref{eq:Uniform-dithered-quantization}), \emph{all} signals in the set $\cl K$ can be estimated using the same mapping $\Dmap$; and \emph{non-uniform estimation} (or \textit{fixed signal estimation}) where $\Dmap$ is randomly generated for each observed signal. 
\end{remark}

\subsection{Limited projection distortion}
\label{sec:dist-inner-prod}

We already sketched at the end of the Introduction that a crucial element of our analysis is the combination of the RIP of $\bs \Phi$ with another property jointly verified by $(\bs \Phi, \bs \xi)$, or equivalently by the quantized random mapping $\Amap$ defined in~\eqref{eq:Uniform-dithered-quantization}. As will be clear later, this property, the (local) \emph{limited projection distortion} (or (L)LPD) and the RIP allow us to bound the reconstruction error of the PBP. We define it as follows for a general mapping $\Dmap$.

\begin{definition}[Limited projection distortion -- LPD]
\label{def:LPD} Given a matrix $\bs \Phi\in \bb R^{m\times n}$ and a distortion $\nu > 0$, we say that
a general mapping $\Dmap:\bb R^n \to \bb R^m$ respects the limited projection distortion property over a set
$\cl K\subset \bb R^n$ observed by $\bs \Phi$, or LPD$(\cl K, \bs \Phi, \nu)$, if 
\begin{equation}\label{eq:LPD}
\ts \inv{m}\,|\langle \Dmap(\bs u), \bs \Phi \bs v\rangle-\langle \bs
\Phi\bs u,\bs \Phi\bs v\rangle|\ \leq \nu \|\clip{\cl K}\|,\quad \forall \bs u, \bs v \in \clip{\cl K}.
\end{equation}
In particular, when $\bs u$ is fixed in~\eqref{eq:LPD}, we say that
$\Dmap$ respects the \emph{local} limited projection distortion on $\bs u$, or L-LPD$(\cl K, \bs \Phi, \bs u, \nu)$.
\end{definition}

As explained in Sec.~\ref{sec:motivationSOA}, the LPD property was (implicitly) introduced in~\cite{PV2016} in the special case where $\Dmap = f \circ \bs\Phi$ and $f$ is a non-linear function applied componentwise on the image of $\bs \Phi$. The LPD is also connected to the SPE introduced in~\cite{PV2013} for the specific case of a one-bit sign quantizer if we combine the LPD property with the RIP of $\tinv{\sqrt m}\bs \Phi$ in order to approximate $\tinv{m}\scp{\bs \Phi \bs u}{\bs \Phi \bs v}$ by $\scp{\bs u}{\bs v}$ in~\eqref{eq:LPD} (see Lemma~\ref{lem:scp-embed}). This literature was however restricted to the analysis of Gaussian random matrices. 

\begin{remark}\label{rem:delta-0}
If $\Dmap$ is the quantized random mapping $\Amap$ introduced in~\eqref{eq:Uniform-dithered-quantization} with the random dither $\bs \xi \sim \cl U^m([0,\delta])$, an arbitrary
low-distortion $\nu>0$ is expected in \eqref{eq:LPD} for large values of $m$ since, in
expectation, $\bb E_{\bs \xi} \langle \Amap(\bs u), \bs \Phi \bs
v\rangle-\langle \bs \Phi\bs u,\bs \Phi\bs v\rangle = 0$ from
Lemma~\ref{lem1} (see Sec.~\ref{sec:LPD-noisy-linear-map} and Sec.~\ref{sec:LPD}). Note also that for such a random dither, if $\delta$
tends to 0, then the quantizer $\Amap$ tends to the identity operator and $|\langle \Amap(\bs u), \bs \Phi \bs v\rangle-\langle \bs \Phi\bs u,\bs \Phi\bs v\rangle|$ must vanish. In fact, by Cauchy-Schwarz and the triangular inequality, this is sustained by the deterministic bound 
\begin{align}
\ts |\langle \Amap(\bs u), \bs \Phi \bs v\rangle-\langle \bs \Phi\bs u,\bs \Phi\bs v\rangle| = |\langle \Amap(\bs u) - \bs \Phi \bs u, \bs \Phi \bs v\rangle|&\leq \|\bs \Phi \bs v\|\, \|\Amap(\bs u) - \bs \Phi \bs u\| \leq \delta \sqrt m \|\bs \Phi \bs v\|,\label{eq:rem-delta-0}  
\end{align}
obtained by observing that $|\lfloor \lambda + u \rfloor - \lambda| \leq 1$ for all $\lambda \in \bb R$ and $u \in [0, 1]$.
\end{remark}

As developed in Sec.~\ref{sec:LPD-noisy-linear-map}, it is easy to prove the L-LPD of $\Dmap$ if this mapping is a linear mapping~$\bs \Phi$ corrupted by an additive noise $\bs \rho$ composed of \iid sub-Gaussian random components, \ie $\Dmap(\bs u) := \bs \Phi \bs u + \bs \rho$. Standard tools of measure concentration theory then show that $\tinv{m}\scp{\bs \rho}{\bs \Phi \bs v}$ concentrates, \whp, around 0. As clarified later, such a scenario includes the case $\Dmap \equiv \Amap$ since, given a fixed $\bs u$, the $m$ \iid \rvs $\rho_i := \cl Q((\bs \Phi \bs u)_i + \xi_i)-(\bs \Phi \bs u)_i$ are bounded and thus sub-Gaussian. However, proving the uniform LPD property of $\Amap$ requires to probe its geometrical nature. We need in particular to control the impact of the discontinuities introduced by $\cl Q$ in $\bs \rho$ (see Sec.~\ref{sec:LPD}).         
\medskip

We observe in \eqref{eq:LPD} that the (L)LPD characterizes the proximity of scalar products between distorted and undistorted random observations in the compressed domain $\bb R^m$. In order to assess how $\tinv{m}\scp{\Dmap(\bs u)}{\bs\Phi\bs v}$ approximates $\scp{\bs u}{\bs v}$ in the case where $\tinv{\sqrt m} \bs \Phi$ respects the RIP, we can consider this standard lemma from the CS literature (see \eg 
\cite{FR2013}).
\begin{lemma}
\label{lem:scp-embed}
Given a bounded, convex and symmetric subset $\cl K' \subset \bb R^n$ with $\|\cl K'\|=1$, if $\tinv{\sqrt m}\bs\Phi$ is RIP$(\cl K', \epsilon)$ with $\epsilon>0$, then 
\begin{equation}
  \label{eq:embed-scalar-product}
\ts  |\frac{1}{m}\scp{\bs \Phi \bs u}{\bs \Phi \bs v} - \scp{\bs u}{\bs v}|\leq 2\epsilon,\quad \forall\bs u, \bs v \in \cl K'.
\end{equation}
If $\cl K'$ is conic and if $\tinv{\sqrt m}\bs\Phi$ is RIP$(\cl K', \epsilon)$, we have
\begin{equation}
  \label{eq:embed-scalar-product-cone}
\ts  |\frac{1}{m}\scp{\bs \Phi \bs u}{\bs \Phi \bs v} - \scp{\bs u}{\bs v}|\leq \epsilon\,\|\bs u\| \|\bs v\|, \quad \forall\bs u, \bs v: \bs u \pm \bs v \in \cl K'.
\end{equation}
\end{lemma}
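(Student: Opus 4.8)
The plan is to derive both inequalities from the parallelogram identity applied to the RIP bound.

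\medskip

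First I would prove \eqref{eq:embed-scalar-product}. Fix $\bs u, \bs v \in \cl K'$. The polarization identity gives
$$
\ts \tinv{m}\scp{\bs \Phi \bs u}{\bs \Phi \bs v} - \scp{\bs u}{\bs v} = \tfrac14\big(\tinv{m}\|\bs \Phi(\bs u + \bs v)\|^2 - \|\bs u + \bs v\|^2\big) - \tfrac14\big(\tinv{m}\|\bs \Phi(\bs u - \bs v)\|^2 - \|\bs u - \bs v\|^2\big).
$$
Now I want to invoke RIP$(\cl K', \epsilon)$ in the form \eqref{eq:RIP}, namely $|\tinv{m}\|\bs\Phi\bs w\|^2 - \|\bs w\|^2| \leq \epsilon\|\cl K'\|^2 = \epsilon$, on the two vectors $\bs w = \bs u \pm \bs v$. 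The point where care is needed --- and I expect this to be the only real subtlety --- is that $\bs u \pm \bs v$ need not lie in $\cl K'$ itself; but $\cl K'$ is convex and symmetric with $\|\cl K'\|=1$, so $\tfrac12(\bs u + \bs v) \in \cl K'$ and $\tfrac12(\bs u - \bs v) = \tfrac12(\bs u + (-\bs v)) \in \cl K'$, and RIP$(\cl K',\epsilon)$ applies to these halved vectors. Rescaling \eqref{eq:RIP} by the homogeneity of both $\tinv{m}\|\bs\Phi\cdot\|^2$ and $\|\cdot\|^2$ (a factor $4$ on each side) then yields $|\tinv{m}\|\bs\Phi(\bs u\pm\bs v)\|^2 - \|\bs u\pm\bs v\|^2| \leq 4\epsilon$. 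Plugging both bounds into the displayed identity and using the triangle inequality gives $|\tinv{m}\scp{\bs\Phi\bs u}{\bs\Phi\bs v} - \scp{\bs u}{\bs v}| \leq \tfrac14(4\epsilon) + \tfrac14(4\epsilon) = 2\epsilon$, as claimed.

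\medskip

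For the conic case \eqref{eq:embed-scalar-product-cone}, the argument is even more direct: by hypothesis $\bs u + \bs v$ and $\bs u - \bs v$ both lie in $\cl K'$, so I can apply the conic RIP form \eqref{eq:RIP-cone} directly to $\bs w = \bs u \pm \bs v$, obtaining $|\tinv{m}\|\bs\Phi(\bs u\pm\bs v)\|^2 - \|\bs u\pm\bs v\|^2| \leq \epsilon\|\bs u\pm\bs v\|^2$. Substituting into the same polarization identity and applying the triangle inequality gives a bound of $\tfrac{\epsilon}{4}(\|\bs u+\bs v\|^2 + \|\bs u-\bs v\|^2) = \tfrac{\epsilon}{2}(\|\bs u\|^2 + \|\bs v\|^2)$ via the parallelogram law. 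To sharpen this to $\epsilon\|\bs u\|\|\bs v\|$ I would use the standard normalization trick: the inequality is homogeneous of degree one in each of $\bs u$ and $\bs v$ separately (scaling $\bs u \mapsto t\bs u$ keeps $\bs u \pm \bs v$ in the cone only after also scaling $\bs v$, so more carefully one replaces $\bs u \mapsto \sqrt{\|\bs v\|/\|\bs u\|}\,\bs u$ and $\bs v \mapsto \sqrt{\|\bs u\|/\|\bs v\|}\,\bs v$, which preserves both products and membership in the cone since $\cl K'$ is conic), reducing to the case $\|\bs u\| = \|\bs v\|$ where $\tfrac{\epsilon}{2}(\|\bs u\|^2+\|\bs v\|^2) = \epsilon\|\bs u\|\|\bs v\|$. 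The degenerate cases $\bs u = \bs 0$ or $\bs v = \bs 0$ are trivial. The main obstacle throughout is purely bookkeeping: making sure the vectors to which RIP is applied genuinely lie in the set $\cl K'$ (resp. the cone), which is exactly what the convexity-symmetry (resp. conic) hypothesis is there to guarantee.
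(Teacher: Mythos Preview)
Your proof is correct and follows essentially the same route as the paper: the polarization identity, RIP applied to $\tfrac12(\bs u\pm\bs v)\in\cl K'$ via convexity and symmetry (respectively, the cone property), and in the conic case a reduction to equal-norm vectors by rescaling --- exactly what the paper summarizes as ``a simple rescaling argument.'' Your parenthetical about why the rescaled combinations remain in the cone is slightly more explicit than the paper, which leaves that step unelaborated.
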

\begin{proof}
Note that since $\cl K'$ is convex and symmetric, $\pm \tinv{2}\cl K' \pm \tinv{2}\cl K' \subset \cl K'$. Given $\bs u, \bs v \in \cl K'$, if $\tinv{\sqrt m}\bs\Phi$ is RIP$(\cl K', \epsilon)$ with $\epsilon>0$, then, from the polarization identity, the fact that $\tinv{2}(\bs u \pm \bs v) \in \cl K'$ and from~\eqref{eq:RIP},
\begin{align*}
&\ts \frac{1}{m} \scp{\bs \Phi \bs u}{\bs \Phi \bs v} = \frac{1}{m}\|\bs \Phi (\frac{\bs u + \bs v}{2})\|^2 - \frac{1}{m}\|\bs \Phi (\frac{\bs u - \bs v}{2})\|^2 \leq \tinv{4}\|\bs u + \bs v\|^2 - \tinv{4}\|\bs u - \bs v\|^2 + 2\epsilon = \scp{\bs u}{\bs v}  + 2\epsilon.
\end{align*}
The lower bound is obtained similarly. If $\cl K'$ is conic and if $\tinv{\sqrt m}\bs\Phi$ is RIP$(\cl K', \epsilon)$, then, for all unit norm $\bs u, \bs v \in \bb R^n$ such that $\bs u \pm \bs v \in \cl K'$, \eqref{eq:RIP-cone}~provides, 
\begin{align*}
&\ts \frac{1}{m} \scp{\bs \Phi \bs u}{\bs \Phi \bs v} = \frac{1}{m}\|\bs \Phi (\frac{\bs u + \bs v}{2})\|^2 - \frac{1}{m}\|\bs \Phi (\frac{\bs u - \bs v}{2})\|^2\\
&\ts \leq \tinv{4}\|\bs u + \bs v\|^2 - \tinv{4}\|\bs u - \bs v\|^2 + \tinv{4}\epsilon\, (\|\bs u + \bs v\|^2 + \|\bs u - \bs v\|^2) = \scp{\bs u}{\bs v}  + \epsilon,
\end{align*}
with a similar development for the lower bound. A simple rescaling argument provides~\eqref{eq:embed-scalar-product-cone}.
\end{proof}

Therefore, applying the triangular identity, it is easy to verify the following corollary. 
\begin{corollary}
\label{cor:RIP-L-LPD}
Given a bounded, convex and symmetric subset $\cl K' \subset \bb R^n$ with $\|\cl K'\|=1$, if
$\tinv{\sqrt m}\bs \Phi$~respects the RIP$(\cl K',\epsilon)$ and
$\Dmap$ verifies the LPD$(\cl K', \bs\Phi, \nu)$ for $\epsilon, \nu>0$, then 
\begin{equation}
  \label{eq:RIP-L-LPD}
  \ts \,|\inv{m} \langle \Dmap(\bs u), \bs \Phi \bs v\rangle- \langle
  \bs u,\bs v\rangle|\ \leq\ 2\epsilon + \nu,\quad \forall\bs u, \bs v \in \cl K'.
\end{equation}
If $\cl K'$ is conic,  $\tinv{\sqrt m}\bs\Phi$ respects the RIP$(\cl K', \epsilon)$, and $\Dmap$ verifies the LPD$(\cl K', \bs\Phi, \nu)$ for $\epsilon, \nu>0$, then
\begin{equation}
  \label{eq:RIP-L-LPD-cone}
  \ts \,|\inv{m} \langle \Dmap(\bs u), \bs \Phi \bs v\rangle- \langle
  \bs u,\bs v\rangle|\ \leq\ \epsilon + \nu,\quad \forall\bs u, \bs v \in \cl K' \cap \bb B^n:  \bs u \pm \bs v \in \cl K'.
\end{equation}
The same observations hold if $\bs u$ is fixed when the L-LPD is invoked instead of the LPD.
\end{corollary}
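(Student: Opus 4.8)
The plan is to obtain \eqref{eq:RIP-L-LPD} and \eqref{eq:RIP-L-LPD-cone} by a one-step triangle inequality that splits the ``distorted minus undistorted'' quantity into a purely RIP part and a purely LPD part. Concretely, for fixed $\bs u, \bs v \in \cl K'$ I would write
$$
\ts \big|\inv m \scp{\Dmap(\bs u)}{\bs\Phi\bs v} - \scp{\bs u}{\bs v}\big|\ \leq\ \big|\inv m \scp{\Dmap(\bs u)}{\bs\Phi\bs v} - \inv m\scp{\bs\Phi\bs u}{\bs\Phi\bs v}\big|\ +\ \big|\inv m\scp{\bs\Phi\bs u}{\bs\Phi\bs v} - \scp{\bs u}{\bs v}\big|.
$$
Since $\cl K'$ is bounded we have $\clip{\cl K'} = \cl K'$ and $\|\clip{\cl K'}\| = \|\cl K'\| = 1$, so the first term is at most $\nu\,\|\clip{\cl K'}\| = \nu$ directly by the LPD$(\cl K',\bs\Phi,\nu)$ hypothesis, i.e.\ by \eqref{eq:LPD}. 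The second term is at most $2\epsilon$ by \eqref{eq:embed-scalar-product} in Lemma~\ref{lem:scp-embed}, whose hypotheses (bounded, convex, symmetric, unit radius, RIP$(\cl K',\epsilon)$) are exactly what is assumed here. Adding the two bounds yields \eqref{eq:RIP-L-LPD}.

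For the conic case I would repeat the same decomposition but now restrict to $\bs u, \bs v \in \cl K' \cap \bb B^n$ with $\bs u \pm \bs v \in \cl K'$, so that the RIP$(\cl K',\epsilon)$ via \eqref{eq:embed-scalar-product-cone} bounds the second term by $\epsilon\,\|\bs u\|\,\|\bs v\| \leq \epsilon$. The first (LPD) term is again bounded by $\nu$, using that for a cone $\clip{\cl K'} = \cl K' \cap \bb B^n$ contains $\bs u$ and $\bs v$ and has unit radius; note one should check that the vectors $\bs u, \bs v$ considered here lie in the set over which the LPD is assumed, which is immediate since $\bs u, \bs v \in \cl K' \cap \bb B^n = \clip{\cl K'}$. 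Summing gives \eqref{eq:RIP-L-LPD-cone}. The L-LPD variant is identical, carrying the fixed vector $\bs u$ through the decomposition unchanged and invoking L-LPD$(\cl K',\bs\Phi,\bs u,\nu)$ in place of LPD.

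I do not expect any genuine obstacle: the statement is purely a triangle-inequality assembly of two already-established ingredients. The only points requiring a little care are bookkeeping ones --- matching the normalization $\|\clip{\cl K'}\|=1$ so that the $\nu\|\clip{\cl K'}\|$ factor in \eqref{eq:LPD} collapses to $\nu$, distinguishing the bounded case ($\clip{\cl K'}=\cl K'$, factor $2\epsilon$) from the conic case ($\clip{\cl K'}=\cl K'\cap\bb B^n$, factor $\epsilon$ under the side condition $\bs u\pm\bs v\in\cl K'$), and verifying that the pair $(\bs u,\bs v)$ is admissible for \emph{both} the RIP-based lemma and the LPD hypothesis simultaneously. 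Hence the proof is short, and I would present it as the two-line computation above followed by the parenthetical remark handling the conic and localized cases.
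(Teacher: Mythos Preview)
Your proposal is correct and takes essentially the same approach as the paper, which simply states that the corollary follows by ``applying the triangular identity'' to combine the LPD bound \eqref{eq:LPD} with Lemma~\ref{lem:scp-embed}. Your write-up is in fact more detailed than the paper's one-line justification, and your bookkeeping on $\clip{\cl K'}$ in the bounded versus conic cases is handled correctly.
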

Note that this corollary amounts to Lemma~\ref{lem:scp-embed} if $\Dmap$ is
identified with $\bs \Phi$, in which case $\nu=0$. 

\section{PBP reconstruction error in distorted CS}
\label{sec:PBP-gen-error}

We here provide a general analysis of the reconstruction error of the PBP method for 
the general \emph{distorted} CS model \eqref{eq:distorted-CS-model}. While the results of this section are later applied to the quantized, random mapping $\Amap$ of~\eqref{eq:Uniform-dithered-quantization} from Sec.~\ref{sec:LPD-noisy-linear-map}, this section is thus valid for any distorted sensing model $\Dmap: \bb R^n \to \bb R^m$ that meets the (L)LPD property associated with a certain RIP matrix $\tinv{\sqrt m}\bs \Phi$. Hereafter, we analyze the cases where the low-complexity signal set $\cl K$ is a union of low-dimensional subspaces, the set of low-rank matrices, or a convex and symmetric subset of $\bb B^n$.

\subsection{Union of low-dimensional subspaces}
\label{sec:PBP-union-subspaces}

We here assume that $\cl K:= \cup_{i\in[K]} \cl K_i$ is a union of $K$~low-dimensional subspaces $\cl K_i \subset \bb R^n$, \ie a ULS model. This model encompasses, \eg sparse signals in an orthonormal basis or in a dictionary~\cite{RSV08,CRPW2012}, co-sparse signal models~\cite{nam2013cosparse}, group-sparse signals~\cite{Ayaz16} and model-based sparsity~\cite{baraniuk2010model}.

The next theorem states that the PBP reconstruction error is bounded by the addition of the distortion induced by the RIP of $\bs \Phi$ (as in CS) and the one provided by (L)LPD of $\Dmap$.
\begin{theorem}[PBP for ULS]
\label{thm:PBP-Union-Subspace}
Given the ULS model $\cl K := \cup_{i\in[K]} \cl K_i \subset \bb R^n$ and  two distortions $\epsilon, \nu > 0$, if $\tinv{\sqrt m}\bs \Phi \in \bb R^{m \times n}$ respects the RIP$(\cl K - \cl K, \epsilon)$ and if the mapping $\Dmap: \bb R^n \to \bb R^m$ satisfies the LPD$(\cl K - \cl K, \bs \Phi, \nu)$, then, for all $\bs x \in \cl K \cap \bb B^n$, the estimate $\hat{\bs x}$ obtained by the PBP of $\bs y=\Dmap(\bs x)$ onto $\cl K$ satisfies 
$$
\ts \|\bs x-\hat{\bs x}\| \leq 2(\epsilon+\nu).
$$
Moreover, if $\bs x$ is fixed, then the same result holds if $\Dmap$ respects the L-LPD$(\cl K - \cl K, \bs \Phi, \bs x, \nu)$.
\end{theorem}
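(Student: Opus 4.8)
The plan is to lift to general unions of subspaces the argument sketched for $k$-sparse signals between~\eqref{eq:pre-lpd-rip} and~\eqref{eq:spoiled-pbp-error-sparse-case}. Write $\bs a := \tinv{m}\bs\Phi^\top\bs y = \tinv{m}\bs\Phi^\top\Dmap(\bs x)$, so that $\hat{\bs x} = \cl P_{\cl K}(\bs a)$. Since $\bs x,\hat{\bs x}\in\cl K$, pick $j,\ell\in[K]$ with $\bs x\in\cl K_j$ and $\hat{\bs x}\in\cl K_\ell$; both signals then lie in the linear subspace $V:=\cl K_j+\cl K_\ell$, which plays, in the sparse case, the role of the coordinate subspace associated with $\cl T=\supp\hat{\bs x}\cup\supp\bs x$. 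Set $\bar{\bs a}:=\proj_V(\bs a)$. Because $\bs a-\bar{\bs a}\perp V$ and $\hat{\bs x},\bs x\in V$, Pythagoras gives $\|\bs a-\bs w\|^2=\|\bs a-\bar{\bs a}\|^2+\|\bar{\bs a}-\bs w\|^2$ for $\bs w\in\{\hat{\bs x},\bs x\}$; combining this with the optimality inequality $\|\bs a-\hat{\bs x}\|\le\|\bs a-\bs x\|$ (valid since $\bs x\in\cl K$ and $\hat{\bs x}$ is a minimum-distance projection onto $\cl K$) yields $\|\bar{\bs a}-\hat{\bs x}\|\le\|\bar{\bs a}-\bs x\|$, hence $\|\bs x-\hat{\bs x}\|\le\|\bs x-\bar{\bs a}\|+\|\bar{\bs a}-\hat{\bs x}\|\le 2\|\bar{\bs a}-\bs x\|$.

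It then remains to bound $\|\bar{\bs a}-\bs x\|$ by $\epsilon+\nu$. Since $\bar{\bs a}-\bs x\in V$, we can write $\|\bar{\bs a}-\bs x\|=\sup_{\bs v\in V\cap\bb B^n}\scp{\bar{\bs a}-\bs x}{\bs v}$, and for each such $\bs v$ we have $\scp{\bar{\bs a}}{\bs v}=\scp{\proj_V\bs a}{\bs v}=\scp{\bs a}{\bs v}=\tinv{m}\scp{\Dmap(\bs x)}{\bs\Phi\bs v}$, so that $\scp{\bar{\bs a}-\bs x}{\bs v}=\tinv{m}\scp{\Dmap(\bs x)}{\bs\Phi\bs v}-\scp{\bs x}{\bs v}$. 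I would then invoke the conic version~\eqref{eq:RIP-L-LPD-cone} of Corollary~\ref{cor:RIP-L-LPD} with $\cl K':=\cl K-\cl K$: this set is a union of subspaces, hence conic; moreover $\bs x\in\cl K_j-\{\bs 0\}\subset\cl K-\cl K$ and $\bs v\in V=\cl K_j-\cl K_\ell\subset\cl K-\cl K$ (using that each $\cl K_i$ is a subspace), both with norm at most $1$, while $\bs x\pm\bs v\in V\subset\cl K-\cl K$. Hence the assumed RIP$(\cl K-\cl K,\epsilon)$ of $\tinv{\sqrt m}\bs\Phi$ together with the LPD$(\cl K-\cl K,\bs\Phi,\nu)$ of $\Dmap$ give $|\tinv{m}\scp{\Dmap(\bs x)}{\bs\Phi\bs v}-\scp{\bs x}{\bs v}|\le\epsilon+\nu$ for every admissible $\bs v$; taking the supremum gives $\|\bar{\bs a}-\bs x\|\le\epsilon+\nu$, and therefore $\|\bs x-\hat{\bs x}\|\le2(\epsilon+\nu)$.

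For the non-uniform (fixed $\bs x$) statement, note that throughout the argument the first slot of every inner product touched by the (L)LPD is occupied by the same vector $\bs x$; hence the L-LPD$(\cl K-\cl K,\bs\Phi,\bs x,\nu)$ is enough to run the last step, via the final sentence of Corollary~\ref{cor:RIP-L-LPD}. There is no hard estimate here: all the analytic content is already encapsulated in Lemma~\ref{lem:scp-embed} and Corollary~\ref{cor:RIP-L-LPD}, and the only care needed is the combinatorial bookkeeping --- choosing $V$ and the ambient set $\cl K-\cl K$, and checking that $\cl K-\cl K$ is conic and that $\bs x$, $\bs v$ and $\bs x\pm\bs v$ all belong to it with norm at most $1$ so that the conic scalar-product embedding applies.
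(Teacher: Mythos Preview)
Your proof is correct and follows essentially the same route as the paper's: define the ambient subspace $V=\cl K_j+\cl K_\ell$ (the paper's $\bar{\cl K}$), project $\bs a$ onto it, use the optimality of $\hat{\bs x}$ together with Pythagoras to obtain $\|\bs x-\hat{\bs x}\|\le 2\|\bar{\bs a}-\bs x\|$, then rewrite $\|\bar{\bs a}-\bs x\|$ as a supremum over $V\cap\bb B^n$ and apply the conic case of Corollary~\ref{cor:RIP-L-LPD} with $\cl K'=\cl K-\cl K$. If anything, your write-up is slightly more explicit than the paper's in verifying the hypotheses of the corollary (that $\cl K-\cl K$ is conic and that $\bs x,\bs v,\bs x\pm\bs v$ all lie in it).
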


\begin{proof}
We generalizes the proof sketch given at the end of the Introduction for $\cl K = \spset{k}$. Since $\bs x \in (\cup_{i\in[K]} \cl K_i)\cap\bb B^n$ and $\hat{\bs x} \in \cl K$, there must exist two subspaces $\cl K_x := \cl K_{i}$ and $\hat{\cl K} := \cl K_{i'}$,  for some $i,i' \in [K]$ such that $\bs x \in \cl K_x$ and $\hat{\bs x} \in \hat{\cl K}$. Let us define $\bs a=\frac{1}{m}\bs \Phi^\top\bs y$, the subspace $\bar{\cl K}:=\cl K_x + \hat{\cl K}$ and the orthogonal complement $\bar{\cl K}^\bot$ of $\bar{\cl K}$. We can always decompose $\bs a$ as $\bs a= \bar{\bs a} + \bar{\bs a}^\bot$ with $\bar{\bs a} := \cl P_{\bar{\cl K}}(\bs a)$ and $\bar{\bs a}^\bot := \cl P_{\bar{\cl K}^\bot}(\bs a)$, with the projector $\cl P_{(\cdot)}$ defined in \eqref{eq:proj-def}. Since $\hat{\bs x}= \cl P_{\cl K}(\bs a)\in \cl K$, we have 
$$
\|\hat{\bs x}-\bs a\|^2 = \|\hat{\bs x}-\bar{\bs a} - \bar{\bs a}^\bot\|^2 \leq \|\bs x-\bs a\|^2 = \|\bs x-\bar{\bs a} - \bar{\bs a}^\bot\|^2. 
$$ 
Moreover, since both $\hat{\bs x}-\bar{\bs a}$ and $\bs x-\bar{\bs a}$ belong to $\bar{\cl K}$, the last inequality is equivalent to
$
\|\hat{\bs x}-\bar{\bs a}\|^2 + \|\bar{\bs a}^\bot\|^2 \leq \|\bs x-\bar{\bs a}\|^2 + \|\bar{\bs a}^\bot\|^2, 
$
which implies $\|\hat{\bs x}-\bar{\bs a}\|\leq\|\bs x-\bar{\bs a}\|$. Consequently, the triangular inequality gives
$$
\ts \|\bs x -\hat{\bs x}\| \leq \|\bs x -\bar{\bs a}\| + \|\hat{\bs x} - \bar{\bs a}\| \leq 2\|\bs x -\bar{\bs a}\|.
$$
From the assumptions of the theorem, $\bs \Phi$ and $\Dmap$ respect the RIP$(\cl K - \cl K, \epsilon)$ and the LPD$(\cl K - \cl K, \bs \Phi, \nu)$, respectively. Therefore, since $\bs x \in \overline{\cl K} \cap \bb B^n$ and $\bs x \pm \bs u \in \overline{\cl K} \subset \cl K - \cl K$ for all $\bs u \in \bar{\cl K} \cap \bb B^n$, 
Cor.~\ref{cor:RIP-L-LPD} with the cone $\cl K' = \cl K - \cl K$ gives 
\begin{align*}
\|\bs x -\hat{\bs x}\|&\ts \leq\ 2\,\|\bs x -\bar{\bs a}\|\ = \ 2\,\sup_{\bs u\in \bb B^{n}}\scp{\bs u}{\bs x-\bar{\bs a}}\ =\ 2\,\sup_{\bs u\in \bar{\cl K}\cap \bb B^n} \scp{\bs u}{\bs x-\bs a}\\
&\ts  =\ 2\,\sup_{\bs u\in \bar{\cl K} \cap \bb B^n} \big(\scp{\bs u}{\bs x} - \scp{\bs u}{\frac{1}{m}\bs \Phi^\top\bs y}\big)\\
&\ts  =\ 2\,\sup_{\bs u\in \bar{\cl K} \cap \bb B^n} \big(\scp{\bs u}{\bs x} - \frac{1}{m}\scp{\bs \Phi\bs u}{\Dmap(\bs x)}\big)\\
&\leq\ 2(\epsilon+\nu),
\end{align*}
which gives the result. Moreover, if $\bs x$ is fixed, only the L-LPD$(\cl K - \cl K, \bs \Phi, \bs x, \nu)$ is required.
\end{proof}

\subsection{Bounded low-rank matrices}
\label{sec:PBP-low-rank}

The (L)LPD and the RIP also allow to bound the reconstruction error of PBP in the estimation of low-rank matrices observed by the distorted CS model~\eqref{eq:distorted-CS-model}, up to an easy adaptation of this model to matrix sensing.  In fact, given a bounded rank-$r$ matrix $\bs X \in \cl K \cap \bb B_F^{n_1\times n_2}$, with 
$$
\ts \cl K := \lrset{r} = \{\bs Z \in \bb R^{n_1 \times n_2}: \rank \bs Z \leq r\},\quad r\leq \min(n_1,n_2),
$$
and introducing the Frobenius ball $\bb B_F^{n_1\times n_2} := \{\bs Z \in \bb R^{n_1 \times n_2}: \|\bs Z\|_F^2 := \tr(\bs Z^\top \bs Z) \leq 1\}$, the (linear) CS model reads $\bs y=\bs \Theta (\bs X)$, where $\bs \Theta:\bb R^{n_1\times n_2} \to \bb R^m$ is the linear measurement operator defined by $m$ scalar products of an $(n_1\times n_2)$-matrix with $m$ pre-defined $n_1\times n_2$ matrices $\{\bs \Phi_1,\,\cdots, \bs \Phi_m\}$,~\ie 
$$
\ts [\bs \Theta (\bs X)]_i=\scp {\bs \Phi_i}{\bs X} := \tr(\bs \Phi_i^\top \bs X).
$$
Correspondingly, the adjoint of $\bs \Theta$ is $\bs \Theta^*: \bs w\in \bb R^m \mapsto \bs \Theta^*(\bs w) = \sum^m_{i=1} w_i \bs \Phi_i \in \bb R^{n_1\times n_2}$. 

Equivalently, by vectorizing any matrix $\bs Z \in \bb R^{n_1 \times n_2}$ into its vector representation $\ve(\bs Z) \in \bb R^n$ with $n=n_1n_2$, \ie stacking up all its columns on top of one another, the CS model can be rewritten as $\bs \Phi \ve(\bs X)$, where $\bs \Phi :=[\ve(\bs \Phi_1),\cdots,\ve(\bs \Phi_m)]^\top\in \bb R^{m\times n}$, \ie $\bb R^{n_1 \times n_2}$ and $\bb B_F^{n_1 \times n_2}$ are thus identified with $\bb R^n$ and $\bb B^n$, respectively. Moreover, requesting $\bs \Theta$ to satisfy the RIP over some subset $\cl K \subset \bb R^{n_1\times n_2}$ is thus equivalent to ask $\bs \Phi$ to respect it over $\ve(\cl K) \subset \bb R^{n}$ as defined in~\eqref{eq:RIP}, with also $\bs \Phi^*\bs w = \ve(\bs \Theta^*\bs w)$ for $\bs w \in \bb R^m$. 

For simplicity, we thus consider that the distorted model~\eqref{eq:distorted-CS-model} is also defined over a vectorization of the matrix domain, \ie for a mapping $\Dmap: \bb R^n \to \bb R^m$, and the definition of the (L)LPD is thus considered in the same sense, \ie for the sensing matrix $\bs \Phi$ related to the vectorized form of $\bs \Theta$. 
\medskip

Before establishing the main result of this section, let us specify two useful properties for our developments. 
First, concerning $\bs \Phi$, since $\lrset{s} \pm \lrset{s} = \lrset{2s}$, the RIP$(\lrset{2s}, \epsilon)$ amounts to the RIP$(\lrset{s} - \lrset{s}, \epsilon)$ for any $s>0$.    

Second, the projector $\cl P_r := \cl P_{\lrset{r}}$ of any matrix $\bs Z \in \bb R^{n_1\times n_2}$ onto the set of rank-$r$ matrices $\cl K = \lrset{r}$ is given by~\cite{Fa02,RFP10}
\begin{equation}
\label{RankProj}
\ts \cl P_r(\bs Z) :=\arg\min_{\bs U\in \lrset{r}}\|\bs U-\bs Z\|_F = \bs U \cl M_r(\bs \Sigma) \bs V^\top.
\end{equation}
In~\eqref{RankProj}, $\cl M_{r}(\bs D)$ is the $r$-thresholding operator setting all but the $r$-first diagonal entries of $\bs D$ to zero, and $\bs U \bs \Sigma \bs V^\top$ is the singular value decomposition (SVD) of $\bs Z$, where $\bs U \in \bb R^{n_1\times n_1}$ and $\bs V \in \bb R^{n_2\times n_2}$ are the unitary matrices formed by the left and right singular vectors of $\bs Z$, respectively, and $\bs \Sigma \in \bb R^{n_1\times n_2}$ is the (rectangular) diagonal matrix formed by the (decreasing) singular values $\{\sigma_i: 1\leq i\leq \min(n_1,n_2)\}$ of $\bs Z$ (\ie $\Sigma_{ij} = \sigma_i \delta_{ij}$, $\sigma_i \geq \sigma_{i+1}$).  In other words, (\ref{RankProj}) provides the best rank-$r$ matrix approximation of $\bs Z$ in the Frobenius norm. 

As in the previous section, we are now ready to leverage both the (L)LPD of a mapping $\Dmap$ and the RIP of $\tinv{\sqrt m}\bs \Phi$ for proving that PBP provides a controllable error distortion to estimate a matrix~$\bs X \in \lrset{r} \cap \bb B_F^{n_1\times n_2}$. The proof is similar to the one of Theorem~\ref{thm:PBP-Union-Subspace} once we correctly identify a common subspace for both $\bs X$ and its PBP estimate. 

\begin{theorem}[PBP for bounded low-rank matrices]
\label{thm:PBP-Low-Rank} 
Given the low-rank model $\cl K := \lrset{r} \subset \bb R^{n_1\times n_2}$ with $0<r<\min(n_1,n_2)$, and two distortions $\epsilon, \nu > 0$, if $\tinv{\sqrt m}\bs \Phi: \bb R^{n} \to \bb R^{m}$ (\ie $\tinv{\sqrt m}\bs \Theta: \bb R^{n_1 \times n_2} \to \bb R^{m}$) respects the RIP$(\lrset{2r}, \epsilon)$ and if the mapping $\Dmap: \bb R^n \to \bb R^m$ satisfies the LPD$(\lrset{2r}, \bs \Phi, \nu)$, then, for all $\bs X \in \cl K \cap \bb B^{n_1\times n_2}_F$, the estimate $\hat{\bs X}$ obtained by the PBP of $\bs y$ on $\lrset{r}$, \ie 
$$
\ts \hat{\bs X} := \cl P_{r}(\frac{1}{m} \bs \Theta^*(\bs y)),
$$
satisfies 
$$
\ts \|\bs X-\hat{\bs X}\|_F \leq 2(\epsilon + \nu).
$$
Moreover, if $\bs X$ is fixed, then the same result holds if $\Dmap$ respects the L-LPD$(\lrset{2r}, \bs \Phi, \bs X, \nu)$.
\end{theorem}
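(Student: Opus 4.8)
The plan is to transcribe the proof of Theorem~\ref{thm:PBP-Union-Subspace} into the matrix setting, the only genuinely new ingredient being the choice of the ``common subspace'' playing the role of $\bar{\cl K}=\cl K_x+\hat{\cl K}$ there. Fix $\bs X\in\lrset{r}\cap\bb B_F^{n_1\times n_2}$ and its PBP estimate $\hat{\bs X}=\cl P_r(\bs A)$ with $\bs A:=\tinv{m}\bs\Theta^*(\bs y)$, and let $\bs P_U$, $\bs P_V$ be the orthogonal projectors onto $U:=\ran(\bs X)+\ran(\hat{\bs X})\subseteq\bb R^{n_1}$ and $V:=\ran(\bs X^\top)+\ran(\hat{\bs X}^\top)\subseteq\bb R^{n_2}$, each of dimension at most $2r$ since $\rank\bs X,\rank\hat{\bs X}\leq r$ (the latter by~\eqref{RankProj}). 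The set $\bar{\cl K}:=\{\bs P_U\bs Z\bs P_V:\bs Z\in\bb R^{n_1\times n_2}\}=\{\bs Z:\ran(\bs Z)\subseteq U,\ \ran(\bs Z^\top)\subseteq V\}$ is then a linear subspace of $\bb R^{n_1\times n_2}$ whose elements have rank at most $\dim U\leq 2r$, so (after vectorization) $\bar{\cl K}\subseteq\lrset{2r}$; moreover $\bs X,\hat{\bs X}\in\bar{\cl K}$, the Frobenius-orthogonal projection onto $\bar{\cl K}$ is $\bs Z\mapsto\bs P_U\bs Z\bs P_V$, and $\bb R^{n_1\times n_2}=\bar{\cl K}\oplus\bar{\cl K}^\bot$ orthogonally for $\scp{\cdot}{\cdot}=\tr(\cdot^\top\cdot)$.

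With $\bar{\cl K}$ in hand the argument follows Theorem~\ref{thm:PBP-Union-Subspace} verbatim. Write $\bar{\bs A}:=\cl P_{\bar{\cl K}}(\bs A)$ and $\bar{\bs A}^\bot:=\bs A-\bar{\bs A}$. Since $\hat{\bs X}$ is the best rank-$r$ Frobenius approximation of $\bs A$ and $\bs X\in\lrset{r}$, we have $\|\hat{\bs X}-\bs A\|_F\leq\|\bs X-\bs A\|_F$; applying the Pythagorean identity relative to $\bar{\cl K}\oplus\bar{\cl K}^\bot$ to both sides (valid because $\hat{\bs X}-\bar{\bs A}$ and $\bs X-\bar{\bs A}$ lie in $\bar{\cl K}$) cancels the common term $\|\bar{\bs A}^\bot\|_F^2$ and gives $\|\hat{\bs X}-\bar{\bs A}\|_F\leq\|\bs X-\bar{\bs A}\|_F$, whence $\|\bs X-\hat{\bs X}\|_F\leq 2\|\bs X-\bar{\bs A}\|_F$ by the triangle inequality. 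To bound $\|\bs X-\bar{\bs A}\|_F=\sup\{\scp{\bs U}{\bs X-\bar{\bs A}}:\|\bs U\|_F\leq 1\}$, I would restrict the supremum to $\bs U\in\bar{\cl K}$ (legitimate since $\bs X-\bar{\bs A}\in\bar{\cl K}$), replace $\bar{\bs A}$ by $\bs A$ (legitimate since $\bs U\perp\bar{\bs A}^\bot$), and use $\scp{\bs U}{\bs A}=\tinv{m}\scp{\bs\Theta(\bs U)}{\bs y}=\tinv{m}\scp{\bs\Phi\ve(\bs U)}{\Dmap(\bs X)}$ (identifying $\bs X$ with $\ve(\bs X)$). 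For every such $\bs U$, the matrices $\bs U$, $\bs X$, $\bs U\pm\bs X$ all lie in $\bar{\cl K}\subseteq\lrset{2r}$ with $\|\bs U\|_F,\|\bs X\|_F\leq 1$, so the conic part of Corollary~\ref{cor:RIP-L-LPD} (i.e.~\eqref{eq:RIP-L-LPD-cone}), applied with $\cl K'=\ve(\lrset{2r})$ using the RIP$(\lrset{2r},\epsilon)$ of $\tinv{\sqrt m}\bs\Phi$ and the LPD$(\lrset{2r},\bs\Phi,\nu)$ of $\Dmap$, bounds $|\tinv{m}\scp{\Dmap(\bs X)}{\bs\Phi\ve(\bs U)}-\scp{\bs X}{\bs U}|$ by $\epsilon+\nu$; taking the supremum yields $\|\bs X-\bar{\bs A}\|_F\leq\epsilon+\nu$ and hence $\|\bs X-\hat{\bs X}\|_F\leq 2(\epsilon+\nu)$. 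For the fixed-$\bs X$ claim, note that $\Dmap$ is only ever evaluated at the fixed argument $\bs X$, so L-LPD$(\lrset{2r},\bs\Phi,\bs X,\nu)$ suffices in this last step.

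I expect the only non-routine point to be the first paragraph: checking that $\bar{\cl K}$ is simultaneously (i) a linear subspace, so the orthogonal decomposition and the Pythagorean step are valid, (ii) large enough to contain both $\bs X$ and the estimate $\hat{\bs X}$, and (iii) small enough to sit inside $\lrset{2r}$ so the RIP/LPD hypotheses apply. The characterization $\bar{\cl K}=\{\bs Z:\ran(\bs Z)\subseteq U,\ \ran(\bs Z^\top)\subseteq V\}$ together with $\cl P_{\bar{\cl K}}(\bs Z)=\bs P_U\bs Z\bs P_V$ are the small facts that secure all three; everything downstream is the matrix analogue of the ULS argument with $\|\cdot\|_F$ and $\scp{\cdot}{\cdot}=\tr(\cdot^\top\cdot)$ replacing the Euclidean structure.
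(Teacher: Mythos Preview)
Your proposal is correct and follows essentially the same approach as the paper: identify a linear subspace containing both $\bs X$ and $\hat{\bs X}$ that lies inside $\lrset{2r}$, then run the Pythagoras/triangle-inequality/Cor.~\ref{cor:RIP-L-LPD} argument of Theorem~\ref{thm:PBP-Union-Subspace}. The only cosmetic difference is the choice of that subspace --- the paper takes $\cl S=\operatorname{span}(\bs u_i\bs v_i^\top,\bs u'_i\bs v'_i^\top)$ from the SVDs of $\bs X$ and $\bs W=\tinv{m}\bs\Theta^*(\bs y)$, whereas your $\bar{\cl K}=\{\bs Z:\ran\bs Z\subseteq U,\ \ran\bs Z^\top\subseteq V\}$ is the (larger but still rank\mbox{-}$2r$\mbox{-}bounded) column/row-space subspace; both choices make the downstream argument go through identically.
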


\begin{proof}
Let $\bs U \bs \Sigma \bs V^\top$ and $\bs U' \bs \Sigma' \bs V'^\top$
be the SVD decompositions of $\bs X$ and $\bs W = \frac{1}{m} \bs
\Theta^*(\bs y)$, respectively. We note $\bs u_i$, $\bs u'_j$, $\bs v_i$ and $\bs v'_j$ the columns of $\bs U$, $\bs U'$, $\bs V$ and $\bs V'$, respectively, with $i \in [r]$ and $j \in [n]$. We thus have $\hat{\bs X} = \bs U' \cl
M_r(\bs \Sigma') \bs V'^\top$ and 
$$
\bs X, \hat{\bs X} \in \cl S = {\rm span}\big(\{\bs u_i\bs v_i^\top: i \in [r]\}, \{\bs u'_i{\bs v'}_i^\top: i \in [r]\} \big) \subset \lrset{2r}.
$$
Let $\cl S^\perp$ be the orthogonal complement to the subspace $\cl S$.  
Since $\hat{\bs X} = \cl P_r( \bs W)$ and $\bs W = \overline{\bs W} + \bs W^\perp$, with $\overline{\bs W} = \cl P_{\cl S}(\bs W)$ and ${\bs W}^\perp = \cl P_{\cl S^\perp}(\bs W)$, we have 
$$
\ts \|\hat{\bs X}-(\overline{\bs W} + \bs W^\perp)\|^2_F \leq \|\bs X - (\overline{\bs W} + \bs W^\perp)\|^2_F.
$$ 
However, both $\hat{\bs X} - \overline{\bs W}$ and $\bs X - \overline{\bs W}$ belong to $\cl S$ and are thus orthogonal to $\bs W^\perp$. Decomposing both sides of the last inequality by Pythagoras' theorem and simplifying the common terms, we thus find $\|\hat{\bs X}-\overline{\bs W}\|_F \leq \|\bs X - \overline{\bs W}\|_F$, which gives 
$$
\|\hat{\bs X}- \bs X\|_F \leq 2\|\bs X - \overline{\bs W}\|_F
$$ 
by the triangular inequality.

We now proceed as in the proof of Theorem~\ref{thm:PBP-Union-Subspace}. We first note that for any $\bs T \in \cl S \cap \bb B_F^{n_1 \times n_2}$, $\bs X \pm \bs T \in \cl S$ since $\cl S \subset \lrset{2r}$ is a subspace including ${\bs X \in \cl S \cap \bb B_F^{n_1 \times n_2}}$. Second, by assumption, $\bs \Theta$~respects the RIP$(\lrset{2r}, \epsilon)$ and the random mapping $\Dmap$ satisfies the LPD$(\lrset{2r}, \bs \Phi, \nu)$. Therefore, (the matrix form of) Cor.~\ref{cor:RIP-L-LPD} with $\cl K' = \lrset{2r} \supset \cl S$ gives
\begin{align*}
\|\hat{\bs X}- \bs X\|_F&\ts \leq\ 2 \|\bs X - \overline{\bs W}\|_F = \ 2\,\sup_{\bs T\in \bb B^{n_1\times n_2}_F} \scp{\bs T}{\cl P_{\cl S}(\bs X - \bs W)}\\
&\ts = \ 2\,\sup_{\bs T\in \cl S\cap\bb B^{n_1\times n_2}_F} \scp{\bs T}{\bs X -\bs W}\ =\ 2\,\sup_{\bs T\in \cl S\cap\bb B^{n_1\times n_2}_F} \big(\scp{\bs T}{\bs X} - \frac{1}{m}\scp{\bs T}{\bs \Theta^*(\bs y)}\big)\\
&\ts \leq 2\,(\epsilon + \nu).
\end{align*}
Moreover, if $\bs X$ is fixed, we also see that only the L-LPD$(\lrset{2r}, \bs \Phi, \bs X, \nu)$ is required on~$\Dmap$. 
\end{proof}

\subsection{Bounded, convex and symmetric sets} 
\label{sec:PBP-convex-set}

The PBP method can also achieve a small reconstruction error for any signal belonging to a bounded, convex and symmetric set provided that both the RIP of $\tinv{\sqrt m}\bs \Phi$ and the (L)LPD of $\Dmap$ hold on~$\cl K- \cl K$. However, this error is amplified compared to ones observed for the more structured sets analyzed in the previous section.

\begin{theorem}[PBP for bounded convex sets]
\label{thm:PBP-convex-set} 
Given a bounded, symmetric and convex set $\cl K \subset \bb B^n$, and two distortions $\epsilon, \nu > 0$, if $\tinv{\sqrt m} \bs \Phi \in \bb R^{m \times n}$ respects the RIP$(\cl K, \epsilon)$ and if the mapping $\Dmap: \bb R^n \to \bb R^m$ satisfies the LPD$(\cl K, \bs \Phi, \nu)$, then, for all $\bs x \in \cl K$, the estimate $\hat{\bs x}$ obtained by the PBP of $\bs y=\Dmap(\bs x)$ onto $\cl K$ satisfies 
\begin{equation}
  \label{eq:error-PBP-convex-set}
  \ts \|\bs x -\hat{\bs x} \| \leq (4 \epsilon + 2 \nu)^{\frac{1}{2}}.
\end{equation}
Moreover, if $\bs x$ is fixed, then the same result holds if $\Dmap$ respects the L-LPD$(\cl K, \bs \Phi, \bs x, \nu)$.
\end{theorem}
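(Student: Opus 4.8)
The plan is to exploit the convexity of $\cl K$ through the variational characterization of the Euclidean projection, rather than the Pythagorean/subspace decomposition used for Theorems~\ref{thm:PBP-Union-Subspace} and~\ref{thm:PBP-Low-Rank}. Write $\bs a := \tinv{m}\bs\Phi^\top\bs y = \tinv{m}\bs\Phi^\top\Dmap(\bs x)$, so that $\hat{\bs x} = \cl P_{\cl K}(\bs a)$. Since $\cl K$ is convex and $\bs x \in \cl K$, testing the minimality of $\hat{\bs x}$ along the segment towards $\bs x$ (and letting the step tend to $0$) yields the obtuse-angle condition $\scp{\bs a - \hat{\bs x}}{\bs x - \hat{\bs x}} \leq 0$; writing $\bs a - \hat{\bs x} = (\bs a - \bs x) + (\bs x - \hat{\bs x})$ and rearranging gives
\begin{equation*}
\ts \|\bs x - \hat{\bs x}\|^2 \ \leq\ \scp{\bs x - \bs a}{\bs x - \hat{\bs x}}.
\end{equation*}
The key structural input is that $\cl K$ is \emph{both} symmetric and convex, so $\tfrac{1}{2}\cl K - \tfrac{1}{2}\cl K \subset \cl K$; in particular $\bs w := \tfrac{1}{2}(\bs x - \hat{\bs x}) \in \cl K \subset \bb B^n$.

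Next I would rewrite the right-hand side purely in terms of $\Dmap$ and $\bs\Phi$. Using $\bs y = \Dmap(\bs x)$ and the adjoint identity $\scp{\bs\Phi^\top\bs y}{\bs w} = \scp{\bs y}{\bs\Phi\bs w}$,
\begin{equation*}
\ts \|\bs x - \hat{\bs x}\|^2 \ \leq\ 2\,\scp{\bs x - \bs a}{\bs w} \ =\ 2\Big(\scp{\bs x}{\bs w} - \tinv{m}\scp{\Dmap(\bs x)}{\bs\Phi\bs w}\Big).
\end{equation*}
Both $\bs x$ and $\bs w$ lie in $\cl K$, which is bounded, convex, symmetric and contained in $\bb B^n$ (so $\clip{\cl K} = \cl K$ and $\|\cl K\| \leq 1$). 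Hence Corollary~\ref{cor:RIP-L-LPD} applied with $\cl K' = \cl K$ --- the $\|\cl K\|^2 \leq 1$ factor in the RIP and LPD bounds only helping --- gives $|\tinv{m}\scp{\Dmap(\bs x)}{\bs\Phi\bs w} - \scp{\bs x}{\bs w}| \leq 2\epsilon + \nu$. Substituting this bound yields $\|\bs x - \hat{\bs x}\|^2 \leq 2(2\epsilon + \nu) = 4\epsilon + 2\nu$, and taking square roots is exactly~\eqref{eq:error-PBP-convex-set}. For the fixed-signal claim, note that only the scalar product $\tinv{m}\scp{\Dmap(\bs x)}{\bs\Phi\bs w}$ with the \emph{single} left argument $\bs x$ is used, so the L-LPD$(\cl K, \bs\Phi, \bs x, \nu)$ version of Corollary~\ref{cor:RIP-L-LPD} suffices.

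I do not expect a serious obstacle here; the real content is the observation that for a general convex body the only handle on $\hat{\bs x}$ is the projection inequality, which controls $\|\bs x - \hat{\bs x}\|^2$ --- not $\|\bs x - \hat{\bs x}\|$ --- by a quantity of size $O(\epsilon + \nu)$, and this is precisely what produces the square root (and the ``amplified'' error relative to Theorems~\ref{thm:PBP-Union-Subspace}--\ref{thm:PBP-Low-Rank}, where the corresponding bound was linear in $\|\bs x - \hat{\bs x}\|$ and so could be divided out). The only steps needing a little care are: verifying $\tfrac{1}{2}(\bs x - \hat{\bs x}) \in \cl K$, which uses both symmetry and convexity (and explains why the theorem hypothesises both); invoking the scalar-product embedding of Corollary~\ref{cor:RIP-L-LPD} on vectors genuinely in $\cl K$ with the correct radius normalisation when $\|\cl K\| < 1$; and, to be fully rigorous about the projection step, recalling that the obtuse-angle condition holds for the minimal-distance projector onto a closed convex set (replacing $\cl K$ by its closure if necessary, which keeps $\bs x$ inside and does not enlarge the error).
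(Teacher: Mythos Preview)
Your proof is correct and follows essentially the same route as the paper: both start from the projection inequality for convex sets (you via the obtuse-angle condition, the paper via the equivalent firm nonexpansivity $\|\cl P_{\cl K}(\bs x)-\cl P_{\cl K}(\bs a)\|^2 \leq \scp{\bs x-\bs a}{\cl P_{\cl K}(\bs x)-\cl P_{\cl K}(\bs a)}$) to obtain $\|\bs x-\hat{\bs x}\|^2 \leq \scp{\bs x-\bs a}{\bs x-\hat{\bs x}}$, and then bound the right-hand side by $2(2\epsilon+\nu)$ using Corollary~\ref{cor:RIP-L-LPD}. The only cosmetic difference is that you apply the corollary directly to the single test vector $\bs w=\tfrac{1}{2}(\bs x-\hat{\bs x})\in\cl K$, whereas the paper splits $\scp{\bs x-\bs a}{\bs x-\hat{\bs x}}$ into $|\scp{\bs x-\bs a}{\bs x}|+|\scp{\bs x-\bs a}{\hat{\bs x}}|$ and bounds each by the supremum over $\cl K$; both exploit the symmetry of $\cl K$ in the same way and yield the identical constant.
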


\begin{proof}
Since $\bs x\in \cl K$ and $\cl K$ is symmetric and convex, the nonexpansivity of the orthogonal projector $\cl P_{\cl K}$ onto $\cl K$~\cite{JD2005} gives
$$
\ts \|\bs x- \hat{\bs x}\|^2 = \|\cl P_{\cl K}(\bs x)-\cl P_{\cl K}(\bs a)\|^2\ \leq\ \scp{\bs x-\bs a}{\cl P_{\cl K}(\bs x)-\cl P_{\cl K}(\bs a)}, 
$$
with $\bs a :=\frac{1}{m} \bs \Phi^\top \bs y$. Moreover, since $\cl K$ is symmetric,
\begin{align*}
\ts \scp{\bs x-\bs a}{\cl P_{\cl K}(\bs x)-\cl P_{\cl K}(\bs a)}&\ts \leq |\scp{\bs x-\bs a}{\cl P_{\cl K}(\bs x)}| + |\scp{\bs x-\bs a}{\cl P_{\cl K}(\bs a)}|\\
&\ts \leq\ 2\,\sup_{\bs u\in \cl K}\scp{\bs u}{\bs x-\bs a}\\
&\ts =\ 2\,\sup_{\bs u\in \cl K}\,\big(\scp{\bs u}{\bs x} - \frac{1}{m}\scp{\bs\Phi \bs u}{\Dmap(\bs x)}\big). 
\end{align*}
Therefore, Cor.~\ref{cor:RIP-L-LPD} with the symmetric convex set $\cl K' = \cl K \subset \bb B^n$ provides $\sup_{\bs u\in \cl K}\,(\scp{\bs u}{\bs x} - \frac{1}{m}\scp{\bs\Phi \bs u}{\Dmap(\bs x)}) \leq 2 \epsilon + \nu$, which gives the result.
Moreover, if $\bs x$ is fixed, only the L-LPD$(\cl K, \bs \Phi, \bs x, \nu)$ is required on $\Dmap$.
\end{proof}

Note that the Theorem~\ref{thm:PBP-convex-set} presents a worst case analysis of the reconstruction error of the PBP method for bounded convex sets. In particular,~\eqref{eq:error-PBP-convex-set} displays only a reconstruction error which is the square root of those presented in Thm.~\ref{thm:PBP-Low-Rank} and Thm.~\ref{thm:PBP-Union-Subspace}.  We will see, however, that at least for the convex set of compressible signals $\cpset{s} :=\{\bs u \in \bb R^n: \|\bs u\|_1 \leq \sqrt s, \|\bs u\| \leq 1 \}$ and if $\Dmap$ is the quantized random mapping $\Amap$ given in~\eqref{eq:Uniform-dithered-quantization}, the numerical reconstruction errors of the PBP presented in Sec.~\ref{sec:numericalPBP} behaves similarly to the ones predicted in the case of structured sets, if the random sensing matrix is either Gaussian, Bernoulli or a random partial Fourier/DCT sensing. 

\section{Local LPD for noisy linear mapping}
\label{sec:LPD-noisy-linear-map}

The previous sections have focused on exploring the implications of the (L)LPD when this property holds for a mapping $\Dmap$ inducing the distorted CS model~\eqref{eq:distorted-CS-model}. The question is, however, to understand for which mapping $\Dmap$ and under which conditions on the space $\cl K$ and on the number of observations $m$ we can expect this property to be verified with high probability. 

In this section, we first prove that the L-LPD holds when the mapping $\Dmap$ amounts to the corruption of a linear sensing with an additive sub-Gaussian noise $\bs \rho \in \bb R^m$. Hereafter, the condition that the linear sensing must respect is less constraining than the RIP, \ie we ask it to be Lipschitz continuous over $\cl K$ in the following sense.
\begin{definition}[Lipschitz continuity]
\label{def:lipschitz}
A linear mapping $\bs \Gamma: \bb R^n \to \bb R^m$ is said to be
$(\eta,L)$-Lipschitz continuous over $\cl K \subset \bb R^n$ for some
$L,\eta > 0$ if
$$
\|\bs \Gamma(\cl K_{\loc}^{(\eta)})\| \leq L \eta,\quad \text{and}\quad \|\bs \Gamma (\clip{\cl K})\| \leq L \|\clip{\cl K}\|
$$
with $\cl K_{\loc}^{(\eta)}$ and $\clip{\cl K}$ defined in~\eqref{eq:local-set} and~\eqref{eq:set-clipping}, respectively.
\end{definition}

Interestingly, in the case where $\cl K$ is a cone and if $\tinv{\sqrt m}\bs \Phi$ respects the RIP$(\cl K - \cl K,\epsilon)$, then it is also $(\eta,\sqrt{1+\epsilon})$-Lipschitz continuous for any $\eta>0$. This is easily observed from~\eqref{eq:RIP-cone}, which gives $\tinv{\sqrt m}\|\bs\Phi(\bs u - \bs v)\| \leq \sqrt{1+\epsilon} \|\bs u - \bs v\|$, for all $\bs u,\bs v \in \cl K$ with $\bs u - \bs v \in \eta\bb B^n$.  
Moreover, matrices $\tinv{\sqrt m} \bs\Phi$ drawn from a RIP matrix distribution over a bounded, star-shaped sets are also Lipschitz continuous (see Sec.~\ref{sec:cond-lipsch-cont}).

\medskip
The key element that allows us to prove the (L)LPD of $\Dmap$ is that a Lipschitz continuous mapping preserves the low-complexity nature of a set in its image, as measured by its Kolmogorov entropy. 

\begin{lemma}[Adapted from {\cite[Lemma 4]{Ker2016SML}}]
\label{lem:lipschitz-mapping-and-kolmogorov}
Given a bounded subset $\cl K \subset \bb R^n$, a radius $\eta>0$ and a linear mapping $\bs \Phi \in \bb R^{m \times n}$, if $\tinv{\sqrt m}\,\bs \Phi \in \bb R^{m \times n}$ is $(\eta,L)$-Lipschitz continuous over $\cl K$ for some $L>0$, then
\begin{equation}
  \label{eq:bound-kolmo-lipschitz}
  \cl H(\bs \Phi\cl K, L \eta\sqrt m) \leq \cl H(\cl K, \eta).  
\end{equation}
\end{lemma}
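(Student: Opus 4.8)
The plan is to show that a covering of $\cl K$ by $\ell_2$-balls of radius $\eta$ gets mapped by $\tinv{\sqrt m}\bs\Phi$ to a covering of $\tinv{\sqrt m}\bs\Phi\cl K$ by $\ell_2$-balls of radius $L\eta$, so that the (logarithm of the) minimal cardinality can only decrease. Rescaling by $\sqrt m$ then yields the stated bound on $\cl H(\bs\Phi\cl K, L\eta\sqrt m)$.

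First I would fix an optimal $\eta$-cover of $\cl K$: a set $\cl G \subset \cl K$ with $\cl G \subset \cl K \subset \cl G + \eta\bb B^n$ and $|\cl G| = \exp(\cl H(\cl K,\eta))$. Then I would propose $\bs\Phi\cl G := \{\bs\Phi\bs g : \bs g \in \cl G\}$ as a candidate cover of $\bs\Phi\cl K$. Clearly $\bs\Phi\cl G \subset \bs\Phi\cl K$. For the covering part, take any $\bs\Phi\bs u \in \bs\Phi\cl K$ with $\bs u \in \cl K$; pick $\bs g \in \cl G$ with $\|\bs u - \bs g\| \leq \eta$. I need $\bs u - \bs g$ to lie in a set on which $\tinv{\sqrt m}\bs\Phi$ is controlled. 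Since $\bs u, \bs g \in \cl K$, we have $\bs u - \bs g \in (\cl K - \cl K) \cap \eta\bb B^n = \cl K_{\loc}^{(\eta)}$, so the $(\eta,L)$-Lipschitz hypothesis gives $\tinv{\sqrt m}\|\bs\Phi(\bs u - \bs g)\| \leq \|\tinv{\sqrt m}\bs\Phi(\cl K_{\loc}^{(\eta)})\| \leq L\eta$, i.e.\ $\|\bs\Phi\bs u - \bs\Phi\bs g\| \leq L\eta\sqrt m$. Hence $\bs\Phi\cl K \subset \bs\Phi\cl G + L\eta\sqrt m\,\bb B^m$, so $\bs\Phi\cl G$ is an $(L\eta\sqrt m)$-cover of $\bs\Phi\cl K$ contained in $\bs\Phi\cl K$.

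Finally, by the definition of Kolmogorov entropy, $\cl H(\bs\Phi\cl K, L\eta\sqrt m) \leq \log|\bs\Phi\cl G| \leq \log|\cl G| = \cl H(\cl K,\eta)$, where the middle inequality uses that the image map can only identify points, never create new ones. This gives~\eqref{eq:bound-kolmo-lipschitz}.

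The only genuine subtlety — the \textbf{main point to get right} — is that one must invoke the \emph{local} part of the Lipschitz definition (the bound on $\cl K_{\loc}^{(\eta)}$) rather than the global bound on $\clip{\cl K}$: the differences $\bs u - \bs g$ of two close points of $\cl K$ need not themselves lie in $\cl K$ (or in $\clip{\cl K}$), but they do lie in $(\cl K - \cl K)\cap\eta\bb B^n$, which is exactly what $\cl K_{\loc}^{(\eta)}$ was defined to capture. Everything else is bookkeeping: the factor $\sqrt m$ tracks the difference between the normalized matrix $\tinv{\sqrt m}\bs\Phi$ appearing in the Lipschitz hypothesis and the unnormalized $\bs\Phi$ appearing in the statement, and the cardinality inequality for images is immediate.
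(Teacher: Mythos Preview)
Your proof is correct and follows essentially the same approach as the paper's: take an optimal $\eta$-cover of $\cl K$, push it forward by $\bs\Phi$, and use the local Lipschitz bound on $\cl K_{\loc}^{(\eta)} = (\cl K-\cl K)\cap\eta\bb B^n$ to conclude that the image is an $(L\eta\sqrt m)$-cover of $\bs\Phi\cl K$, with cardinality no larger than the original. Your explicit remark that one must invoke the \emph{local} part of Def.~\ref{def:lipschitz} (rather than the global bound on $\clip{\cl K}$) is exactly the point the paper uses implicitly when it writes $\bs r \in (\cl K-\cl K)\cap\eta\bb B^n$.
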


\begin{proof}
Let $\cl K_{\eta}$ be an optimal $\eta$-covering of $\cl K$ for some
$\eta>0$. Then, all $\bs a \in \bs
\Phi\cl K$ can be rewritten as $\bs a = \bs \Phi \bs x = \bs \Phi \bs
x_0 + \bs \Phi \bs r$ for some $\bs x \in \cl K$, with $\bs x_0$ the
closest point to $\bs x$ in $\cl K_{\eta}$, and ${\bs r \in (\cl K
- \cl K) \cap \eta \bb B^n}$. Therefore, $\bs \Phi \cl
K_{\eta}$ is a $(L \eta\sqrt m)$-covering of $\cl J :=
\bs\Phi \cl K$ since, from the Lipschitz continuity of $\tinv{\sqrt
  m}\bs \Phi$, $\|\bs \Phi \bs r\| \leq L \eta \sqrt m$.  In particular, 
$\cl H(\cl J, L \eta\sqrt m) \leq \log|\bs \Phi\cl K_{\eta}| \leq \log|\cl K_{\eta}| = \cl H(\cl K, \eta)$, since the covering $\bs \Phi \cl
K_{\eta}$ of $\cl J$ is not necessarily optimal.  
\end{proof}

We can now state the main result of this section.
\begin{proposition}[L-LPD for noisy linear sensing]
Given a set $\cl K \subset \bb R^n$ with $\|\clip{\cl K}\| \leq 1$, a distortion $\epsilon >0$ and a matrix $\tinv{\sqrt m} \bs \Phi$ that is $(\epsilon, L)$-Lipschitz continuous over $\cl K$ for some $L>0$, if $\bs \rho \in \bb R^m$ is a vector with \iid centered sub-Gaussian random components, \ie $\|\rho_i\|_{\psi_2} \leq R$ for $1\leq i \leq m$ and $R \geq 1$, and
$$
m \gtrsim \epsilon^{-2} \cl H(\clip{\cl K}, c\epsilon),
$$
then, given $\bs u \in \clip{\cl K}$, the mapping $\Dmap(\cdot) := \bs \Phi \cdot + \bs \rho$ respects the L-LPD$(\cl K, \bs \Phi, \bs u, RL \epsilon)$ with probability exceeding $1 - C\exp(-c' \epsilon^2 m)$. 
\end{proposition}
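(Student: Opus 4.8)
The plan is to observe that the claimed bound collapses, after an exact cancellation, to a \emph{uniform} deviation estimate for a single linear functional of the noise vector $\bs\rho$ over the low-complexity image set $\bs\Phi\clip{\cl K}$, and then to prove that estimate by a covering/union-bound argument. Since $\Dmap(\cdot)=\bs\Phi\cdot+\bs\rho$ with $\bs\rho$ independent of the argument, one has $\scp{\Dmap(\bs u)}{\bs\Phi\bs v}-\scp{\bs\Phi\bs u}{\bs\Phi\bs v}=\scp{\bs\rho}{\bs\Phi\bs v}$ for \emph{every} $\bs u$, so L-LPD$(\cl K,\bs\Phi,\bs u,RL\epsilon)$ --- in fact the full LPD, as the bound no longer involves $\bs u$ --- is equivalent to
$$
\ts \sup_{\bs v\in\clip{\cl K}}\ \tinv m\,|\scp{\bs\rho}{\bs\Phi\bs v}|\ \leq\ RL\epsilon\,\|\clip{\cl K}\|.
$$
The only randomness is in $\bs\rho$; the fixed matrix $\bs\Phi$ enters only through its $(\epsilon,L)$-Lipschitz continuity over $\cl K$ (Def.~\ref{def:lipschitz}).

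First I would record the pointwise estimate. For fixed $\bs v\in\clip{\cl K}$, writing $\bs a=\bs\Phi\bs v$, the quantity $\scp{\bs\rho}{\bs a}=\sum_{i=1}^m a_i\rho_i$ is a sum of independent centered sub-Gaussian variables, so by the standard bound on weighted sums $\|\scp{\bs\rho}{\bs a}\|_{\psi_2}\lesssim R\|\bs a\|$, and the second Lipschitz condition gives $\|\bs a\|\leq L\sqrt m\,\|\clip{\cl K}\|$. A sub-Gaussian tail bound then yields, for any $t>0$,
$$
\ts \Pr\big(\tinv m|\scp{\bs\rho}{\bs\Phi\bs v}|>t\big)\ \leq\ 2\exp\!\big(-c\,mt^2/(R^2L^2\|\clip{\cl K}\|^2)\big).
$$

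Next comes the discretisation and the passage back to all of $\clip{\cl K}$. Let $\cl G$ be an optimal $(c_0\epsilon)$-covering of $\clip{\cl K}$ with $\cl G\subset\clip{\cl K}$, so $\log|\cl G|=\cl H(\clip{\cl K},c_0\epsilon)$, for a small universal constant $c_0\leq 1$. Union-bounding the pointwise estimate over $\bs\Phi\cl G$ (at most $|\cl G|$ points) with $t=c_1RL\epsilon\|\clip{\cl K}\|$ produces a failure probability $\leq 2\exp(\cl H(\clip{\cl K},c_0\epsilon)-cc_1^2m\epsilon^2)$, which is $\leq 2\exp(-c'\epsilon^2 m)$ exactly under the hypothesis $m\gtrsim\epsilon^{-2}\cl H(\clip{\cl K},c_0\epsilon)$. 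For $\bs v\in\clip{\cl K}$ with nearest net point $\bs v_0\in\cl G$, one has $\bs v-\bs v_0\in\cl K_{\loc}^{(c_0\epsilon)}\subset\cl K_{\loc}^{(\epsilon)}$ (see \eqref{eq:local-set}), so the first Lipschitz condition gives $\|\bs\Phi(\bs v-\bs v_0)\|\leq L\epsilon\sqrt m$, whence by Cauchy--Schwarz
$$
\ts \tinv m|\scp{\bs\rho}{\bs\Phi\bs v}-\scp{\bs\rho}{\bs\Phi\bs v_0}|\ \leq\ \tinv m\,\|\bs\rho\|\,\|\bs\Phi(\bs v-\bs v_0)\|\ \leq\ \tinv{\sqrt m}\,L\epsilon\,\|\bs\rho\|.
$$
Since $\|\bs\rho\|^2=\sum_i\rho_i^2$ is a sum of independent sub-exponential variables with $\|\rho_i^2\|_{\psi_1}=\|\rho_i\|_{\psi_2}^2\leq R^2$ and $\bb E\rho_i^2\lesssim R^2$, Bernstein's inequality gives $\|\bs\rho\|\leq C_1R\sqrt m$ with probability $\geq 1-2\exp(-cm)$; on that event the gap term above is $\leq C_1RL\epsilon$.

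Finally I would intersect the two good events (total failure $\leq C\exp(-c'\epsilon^2 m)$, using $\exp(-cm)\leq\exp(-c'\epsilon^2 m)$ in the relevant regime $\epsilon\lesssim 1$): on it, for every $\bs v\in\clip{\cl K}$, $\tinv m|\scp{\bs\rho}{\bs\Phi\bs v}|\leq t+C_1RL\epsilon\lesssim RL\epsilon$, which is the asserted $RL\epsilon\|\clip{\cl K}\|$ after the routine rescaling of the constants $c_0,c_1$ permitted by the standing normalisation $\|\clip{\cl K}\|\leq 1$ (and immediate when $\|\clip{\cl K}\|\asymp 1$). The routine ingredient is the pointwise sub-Gaussian tail bound; the main obstacle --- where both halves of the Lipschitz hypothesis and the requirement on $m$ are consumed --- is the net-to-continuum passage, which needs the Lipschitz bound on the local set $\cl K_{\loc}^{(\epsilon)}$ to keep $\|\bs\Phi(\bs v-\bs v_0)\|$ of order $\epsilon\sqrt m$ together with a uniform high-probability control $\|\bs\rho\|\lesssim R\sqrt m$, so that the uncontrolled ``gap'' contribution remains of order $RL\epsilon$.
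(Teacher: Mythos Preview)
Your proposal is correct and follows essentially the same approach as the paper: reduce to $\tinv{m}|\scp{\bs\rho}{\bs\Phi\bs v}|$, apply a sub-Gaussian tail bound at fixed $\bs v$, union-bound over an $\epsilon$-covering of $\clip{\cl K}$, control $\|\bs\rho\|$ via Bernstein on $\sum_i\rho_i^2$, and close the net-to-continuum gap by Cauchy--Schwarz together with the Lipschitz bound on $\cl K_{\loc}^{(\epsilon)}$. Your additional remark that the bound is in fact independent of $\bs u$ (hence gives the full LPD, not merely L-LPD, for this particular additive-noise mapping) is a correct and worthwhile observation that the paper does not make explicit.
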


\begin{proof}
We first note that since $\|\rho_i\|_{\psi_2} \leq R$, then, for fixed $\bs u, \bs v \in \cl K \cap \bb B^n$, $\scp{\Dmap(\bs u) - \bs \Phi \bs u}{\bs\Phi \bs v} = \sum_{i=1}^m \rho_i\, (\bs\Phi \bs v)_i$ is a weighted sum of \iid centered sub-Gaussian random variables. Therefore, from~\cite[Prop. 5.10]{V2012}, $\bb P[|\scp{\Dmap(\bs u) - \bs \Phi \bs u}{\bs\Phi \bs v}|>t] \leq C \exp(-c \frac{t^2}{R^2\|\bs\Phi \bs v\|^2})$. Applying the change of variable $t=\epsilon R \|\bs \Phi \bs v\| \sqrt m$, this shows that
\begin{equation}
  \label{eq:fixed-point-concent-subgaussian-noise}
\ts \tinv{m} |\scp{\Dmap(\bs u) - \bs \Phi \bs u}{\bs\Phi \bs v}| \leq \epsilon R \tinv{\sqrt m} \|\bs \Phi \bs v\|,  
\end{equation}
with probability exceeding $1 - C \exp(-c \epsilon^2 m)$. Given an optimal $\epsilon$-covering $\cl K_\epsilon$ of $\clip{\cl K}$, \ie with $\log |\cl K_\epsilon| = \cl H(\cl K, \epsilon)$, a standard union bound argument then provides that~\eqref{eq:fixed-point-concent-subgaussian-noise} holds for all $\bs v' \in \cl K_\epsilon$ with probability exceeding $1 - C \exp(-\tinv{2} c \epsilon^2 m)$ provided $m \geq \tinv{2c} \epsilon^{-2}\cl H(\clip{\cl K}, \epsilon)$.

Moreover, since $\|\rho_i^2\|_{\psi_1} \leq 2 \|\rho_i\|^2_{\psi_2} \leq 2 R^2$ for $1\leq i \leq m$~\cite[Lemma 5.14]{V2012}, $\|\bs \rho\|^2 = \sum_{i=1}^m (\rho_i)^2$ is a sum of sub-exponential \iid random variables so that from~\cite[Cor. 5.17]{V2012} (with setting there $\epsilon = 1$) proves that $\|\bs \rho\|^2 \leq 2 R^2 m$ with probability exceeding $1 - 2\exp(-c m)$. 

Therefore, conditionally to the last two random events, which occur jointly with probability exceeding $1 - C \exp(-\tinv{2} c \epsilon^2 m)$ (by union bound) under the same requirement on $m$, we have $\tinv{\sqrt m} \|\bs \Phi \bs u\| \leq L \|\clip{\cl K}\| \leq L$, since we assumed $\tinv{\sqrt m} \bs \Phi$ to be $(\epsilon,L)$-Lipschitz continuous over $\cl K$, and $\tinv{\sqrt m} \|\bs \rho\| \leq \sqrt 2 R$. Consequently, for any $\bs v \in \cl K \cap \bb B^n$ whose closest point in $\cl K_\epsilon$ is $\bs v'$, \ie $\|\bs v - \bs v'\| \leq \epsilon$, we have 
\begin{align*}
\ts \tinv{m} |\scp{\bs \rho}{\bs\Phi \bs v}| &\ts \leq \tinv{m} |\scp{\bs \rho}{\bs\Phi \bs v'}| + \tinv{m} |\scp{\bs \rho}{\bs\Phi(\bs v - \bs v')}|\ \leq \epsilon R \tinv{\sqrt m} \|\bs \Phi \bs v'\|+ \tinv{m} |\scp{\bs \rho}{\bs\Phi(\bs v - \bs v')}|\\
&\ts \leq \epsilon R L + \tinv{m} \|\bs \rho\| \|\bs \Phi(\bs v - \bs v')\| \leq \epsilon R L + \sqrt 2 L R \epsilon = (1+ \sqrt 2) RL \epsilon,
\end{align*}
where we used Cauchy-Schwarz in the last line. A simple rescaling of $\epsilon$ then provides the result.  
\end{proof}

As provided in the next corollary, the previous proposition enables us to characterize the L-LPD property of the quantized and dithered mapping $\Amap$ introduced in~\eqref{eq:Uniform-dithered-quantization}. This is straightforwardly obtained by observing that, given the resolution $\delta > 0$ defining both the quantizer $\cl Q$ and the support of the random uniform dither $\bs \xi \sim \cl U^m([0,\delta])$, for any $\bs u \in \bb R^n$, we have
$$
\bs y = \Amap(\bs u) = \bs \Phi \bs u + \bs \rho,
$$ 
with $\bs \rho = \Amap(\bs u) - \bs \Phi \bs u$, $\bb E \bs \rho = \bs 0$ (from Lemma~\ref{lem1}) and
$$
\|\rho_i\|_{\psi_2} = \|\cl Q((\bs \Phi \bs u)_i + \xi_i) - (\bs \Phi \bs u)_i\|_{\psi_2} \leq \|\cl Q((\bs \Phi \bs u)_i + \xi_i) - (\bs \Phi \bs u + \xi_i)_i\|_{\psi_2} + \|\xi_i\|_{\psi_2} \leq 2 \delta, 
$$
since for any \rv $X$ such that $|X| \leq s$ for some $s>0$, $\|X\|_{\psi_2} \leq s$. 

\begin{corollary}[L-LPD for quantized, dithered mapping]
\label{cor:llpd-noisy-linear}
Given a low-complexity set $\cl K \subset \bb R^n$ with $\|\clip{\cl K}\| \leq 1$, a distortion $\epsilon >0$, a quantization resolution $\delta>0$ and a matrix $\tinv{\sqrt m} \bs \Phi$ that is $(\epsilon, L)$-Lipschitz continuous over $\cl K$ for some $L>0$, if 
\begin{equation}
  \label{eq:llpd-noisy-linear-cond-m}
  m \geq C \epsilon^{-2} \cl H(\clip{\cl K}, c\epsilon),  
\end{equation}
where $C,c>0$ depends on $L$, then, given $\bs u \in \clip{\cl K}$, the random mapping $\Amap(\cdot) = \cl Q(\bs \Phi \cdot + \bs \xi)$ associated with $\bs \xi \sim \cl U^m([0,\delta])$ respects the L-LPD$(\cl K, \bs \Phi, \bs u, \delta \epsilon)$ with probability exceeding $1 - C\exp(-c' \epsilon^2 m)$.
\end{corollary}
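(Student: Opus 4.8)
The plan is to read Corollary~\ref{cor:llpd-noisy-linear} off the preceding Proposition (L-LPD for noisy linear sensing), after checking that the additive perturbation produced by dithered uniform quantization satisfies its hypotheses, and after a harmless rescaling of the distortion parameter.

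First I would fix $\bs u \in \clip{\cl K}$ and write $\Amap(\bs u) = \bs\Phi\bs u + \bs\rho$ with $\bs\rho := \Amap(\bs u) - \bs\Phi\bs u$, so that, \emph{at this fixed point}, $\Amap$ has exactly the form $\Dmap(\cdot) = \bs\Phi\cdot + \bs\rho$ treated by the Proposition (whose statement only uses the value of $\Dmap$ at the fixed point). Componentwise, $\rho_i = \cl Q\big((\bs\Phi\bs u)_i + \xi_i\big) - (\bs\Phi\bs u)_i$ is a deterministic function of the single dither coordinate $\xi_i$ alone; since the $\xi_i$ are \iid as $\cl U([0,\delta])$, the $\rho_i$ are \iid. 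They are centered by Lemma~\ref{lem1}, applied to $(\bs\Phi\bs u)_i/\delta$ with $\xi_i/\delta \sim \cl U([0,1])$, and $|\rho_i| \le \delta$ by the elementary bound $|\lfloor\lambda+u\rfloor-\lambda|\le1$ recalled in Remark~\ref{rem:delta-0}. Since $\|X\|_{\psi_2}\le s$ whenever $|X|\le s$, this gives $\|\rho_i\|_{\psi_2}\le\delta$ (or $\le 2\delta$ via the triangle inequality with $\|\xi_i\|_{\psi_2}\le\delta$, which also suffices), and the same boundedness gives the \emph{deterministic} control $\tinv{\sqrt m}\|\bs\rho\|\le\delta$, which will stand in for the sub-exponential bound on $\|\bs\rho\|$ used inside the Proposition's proof.

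Then I would invoke the Proposition with $R = 2\delta$ but with its distortion parameter set to $\epsilon/(2L)$ rather than $\epsilon$: since $\tinv{\sqrt m}\bs\Phi$ is $(\eta, L)$-Lipschitz over $\cl K$ at the relevant radii — in particular for every $\eta > 0$ when $\tinv{\sqrt m}\bs\Phi$ is RIP over a cone, cf. the discussion following Def.~\ref{def:lipschitz} — it is in particular $(\epsilon/(2L), L)$-Lipschitz, and the Proposition yields, with probability at least $1 - C\exp(-c'(\epsilon/(2L))^2 m) = 1 - C\exp(-c''\epsilon^2 m)$, the property L-LPD$\big(\cl K, \bs\Phi, \bs u,\, RL\cdot\epsilon/(2L)\big)$, that is L-LPD$(\cl K, \bs\Phi, \bs u, \delta\epsilon)$, as claimed. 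The hypothesis $R \ge 1$ in the Proposition may fail for small $\delta$, but it serves there only to bound $\|\bs\rho\|$ via a sub-exponential concentration inequality, and the deterministic bound $\|\bs\rho\| \le \delta\sqrt m$ from the previous step removes that step (equivalently, one reruns the estimate $\bb P[\,|\scp{\bs\rho}{\bs\Phi\bs v}| > t\,] \le C\exp(-c\,t^2/(R^2\|\bs\Phi\bs v\|^2))$ directly with $R = 2\delta$ and $t = R\|\bs\Phi\bs v\|\sqrt m\,\epsilon/(2L)$, so that the $R$'s cancel in the probability). Finally the Proposition's sampling requirement $m \gtrsim (\epsilon/(2L))^{-2}\,\cl H\big(\clip{\cl K},\, c\,\epsilon/(2L)\big)$ is implied by~\eqref{eq:llpd-noisy-linear-cond-m} — using that $\cl H(\cl S,\cdot)$ is nonincreasing — once the constants $C, c$ there are allowed to depend on $L$, which is exactly the dependence asserted. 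I expect no genuine obstacle beyond this bookkeeping of the $L$-dependence and the $\epsilon$-rescaling; the only conceptual input, that the quantization residual $\bs\rho$ is \iid, centered, and bounded (hence sub-Gaussian), is supplied by Lemma~\ref{lem1} and Remark~\ref{rem:delta-0}.
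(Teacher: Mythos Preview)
Your proposal is correct and follows essentially the same route as the paper, which deduces the corollary directly from the preceding Proposition after verifying that the quantization residual $\bs\rho = \Amap(\bs u)-\bs\Phi\bs u$ is centered (Lemma~\ref{lem1}) and sub-Gaussian with $\|\rho_i\|_{\psi_2}\le 2\delta$. You are in fact more careful than the paper about the $R\ge 1$ hypothesis and about the $\epsilon$-rescaling needed to land exactly on the stated distortion $\delta\epsilon$; one minor imprecision you share with the paper is calling the $\rho_i$ ``\iid'' when they are only independent (their law depends on $(\bs\Phi\bs u)_i \bmod \delta$), but the Proposition's proof only uses independence, so this is harmless.
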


\section{LPD for a quantized, dithered random mapping}
\label{sec:LPD}

Cor.~\ref{cor:llpd-noisy-linear} in the previous section states that the L-LPD almost trivially holds for the quantized, dithered random mapping $\Amap$ introduced in~\eqref{eq:Uniform-dithered-quantization} by observing that the corresponding sensing model is equivalent to a noisy linear sensing corrupted by an additive sub-Gaussian noise with \iid components. However, by analyzing more carefully the interplay between the quantizer discontinuities and the dithering in $\Amap$, we can prove that the uniform LPD holds too. Moreover, for structured sets for which the Kolmogorov entropy only increases logarithmically with the involved radius (\eg for the sets of sparse vectors and low-rank matrices), the sample complexity ensuring the LPD is very similar to the one guaranteeing the L-LPD, up to few logarithmic factors.  

Hereafter, we split our developments in two steps by noting that ${\Amap = \Amapp \circ \bs \Phi}$, with $\Amapp: \bs a \in \bb R^m \mapsto \cl Q(\bs a + \bs \xi) \in \delta \bb Z^m$ the dithered quantizer associated with a random vector $\bs \xi \sim \cl U^m([0,\delta])$ and the quantizer $\cl Q(\cdot) := \delta \lfloor \cdot/\delta\rfloor$. First, we analyze in Sec.~\ref{sec:preparotary-results} the geometric properties of $\Amapp$ and show that it respects a certain limited projection distortion property directly in $\bb R^m$. Second, in Sec.~\ref{sec:bridging} the characteristics of ${\Amap = \Amapp \circ \bs \Phi}$ are then explained by the Lipschitz embedding realized by any RIP matrix $\bs \Phi \in \bb R^{m \times n}$ between a low-complexity set $\cl K \subset \bb R^n$ and its (still low-complexity) image $\cl J := \bs \Phi \cl K$, and from the limited projection distortions induced by $\Amapp(\cl J) \subset \delta \bb Z^m$ in~$\bb R^m$.  

\subsection{Analysis of the dithered quantizer}
\label{sec:preparotary-results}

As a direct effect of the random dither $\bs \xi$, the following Lemma shows that the number of components of $\Amapp$ that are discontinuous in a $\ell_\infty$-neighborhood of an arbitrary point of $\bb R^m$ is controlled by the size of this neighborhood. This continuity analysis is inspired by a strategy developed in~\cite{BRM2017} for more general discontinuous mappings. 

\begin{lemma} 
\label{lem:bound-discont}
Given $\bs a \in \bb R^m$, $\epsilon > 0$, $0<\rho<\delta/2$, $\bs \xi \sim \cl U^m([0,\delta])$, we denote by $\Amapp_i:\bb R^m \to \bb R$ the $i^{\rm th}$ component of the random mapping $\Amapp(\cdot) := \cl Q(\cdot + \bs \xi)$ ($i \in [m]$), and define the discrete random variable (associated with the randomness of $\bs \xi$)
$$
Z = Z(\bs a + \rho \bb B^m_{\ell_\infty}) := |\{i: \Amapp_i(\cdot) \notin {\sf C}^0(\bs a + \rho \bb B^m_{\ell_\infty})\}| \in \{0,\,\cdots,m\},
$$
\ie $Z$ counts the components of $\Amapp$ that are discontinuous over $\bs a + \rho \bb B^m_{\ell_\infty}$ (\ie having at least two distinct values over this set). The random variable $Z$ has a binomial distribution with $m$ trials and a probability of success $p:=\frac{2\rho}{\delta}$, \ie $Z \sim {\rm Bin}(m,p)$. Therefore, $\bb E Z = mp=m \frac{2\rho}{\delta}$, $\inv{m}(\bb E Z^2 - (\bb E Z)^2) = p(1-p) =: \sigma^2  < p$ and 
\begin{equation}
    \label{eq:number-of-discont}
    \ts \bb P[Z \geq m \tfrac{2\rho}{\delta} + \epsilon] \leq \exp(-\tinv{2}\frac{3m \epsilon^2}{3\sigma^2 + \epsilon}).
\end{equation}
In particular, setting $\epsilon = p > \sigma^2$ provides 
\begin{equation}
    \label{eq:number-of-discont-special-case}
\ts \bb P[Z \geq m \tfrac{4\rho}{\delta} ] \leq \exp(- m \tfrac{3\rho}{4\delta}).
\end{equation}
\end{lemma}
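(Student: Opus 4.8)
The plan is to write $Z$ as a sum of $m$ independent Bernoulli variables, one per coordinate, and then invoke a Bernstein-type tail bound; the only non-routine ingredient is the per-coordinate computation, which follows the continuity-counting strategy of~\cite{BRM2017}. First I would fix $i \in [m]$ and note that the $i^{\rm th}$ component $\Amapp_i(\bs a') = \delta\lfloor(a'_i + \xi_i)/\delta\rfloor$ depends on $\bs a'$ only through its $i^{\rm th}$ coordinate; hence over $\bs a + \rho\bb B^m_{\ell_\infty}$ it takes exactly the values that the non-decreasing step function $u \mapsto \delta\lfloor(u+\xi_i)/\delta\rfloor$ takes on $[a_i - \rho,\, a_i + \rho]$, so $\Amapp_i \notin {\sf C}^0(\bs a + \rho\bb B^m_{\ell_\infty})$ precisely when that step function is non-constant there. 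Since $\rho < \delta/2$, this interval has length $2\rho < \delta$, so it meets at most one jump point of the step function (the jump points lie on $\delta\bb Z - \xi_i$), and non-constancy is equivalent to the existence of an integer in $\big((a_i - \rho + \xi_i)/\delta,\ (a_i + \rho + \xi_i)/\delta\big]$. Writing $\theta_i := (a_i + \rho + \xi_i)/\delta$ and using that $\xi_i/\delta \sim \cl U([0,1])$ makes the fractional part of $\theta_i$ uniform on $[0,1)$, this event is exactly the event that the fractional part of $\theta_i$ lies in $[0,\, 2\rho/\delta)$, which has probability $p := 2\rho/\delta$. Therefore the indicator $Z_i$ of $\{\Amapp_i \notin {\sf C}^0(\bs a + \rho\bb B^m_{\ell_\infty})\}$ is a Bernoulli variable with parameter $p$.

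Next, since $Z_i$ is a measurable function of $\xi_i$ alone and the $\xi_i$ are independent, the variables $Z_1,\dots,Z_m$ are independent, so $Z = \sum_{i\in[m]} Z_i \sim {\rm Bin}(m,p)$; in particular $\bb E Z = mp = m\tfrac{2\rho}{\delta}$ and $\tinv{m}\Var(Z) = p(1-p) =: \sigma^2$, with $\sigma^2 < p$ because $p \in (0,1)$ --- this is the one place the hypothesis $\rho < \delta/2$ enters, keeping $Z_i$ a non-degenerate Bernoulli variable. For \eqref{eq:number-of-discont} I would then apply Bernstein's inequality to the centered sum $Z - mp = \sum_{i\in[m]}(Z_i - p)$: each summand satisfies $|Z_i - p| \le 1$ and $\sum_{i\in[m]}\Var(Z_i) = m\sigma^2$, so $\bb P[Z - mp \ge t] \le \exp\!\big(-\tfrac{t^2/2}{m\sigma^2 + t/3}\big)$, and substituting $t = m\epsilon$ and simplifying gives exactly the bound in \eqref{eq:number-of-discont}.

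Finally, \eqref{eq:number-of-discont-special-case} is obtained by setting $\epsilon = p$ in \eqref{eq:number-of-discont}: the threshold becomes $m\tfrac{2\rho}{\delta} + mp = m\tfrac{4\rho}{\delta}$, while $\sigma^2 < p$ gives $3\sigma^2 + p < 4p$, so the exponent obeys $\tfrac12\,\tfrac{3mp^2}{3\sigma^2 + p} > \tfrac{3mp}{8} = \tfrac{3m\rho}{4\delta}$, which is the claimed tail. I expect the genuinely delicate step to be the first paragraph: correctly reducing the discontinuity event to a one-dimensional statement about $[a_i-\rho,\,a_i+\rho]$, being careful about the endpoints of that interval, and checking that the set of ``bad'' dithers $\xi_i$ has measure exactly $2\rho$. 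Once $Z$ is identified as a binomial, the mean/variance bookkeeping and the Bernstein estimate are entirely standard.
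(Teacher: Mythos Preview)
Your proof is correct and follows exactly the same strategy as the paper's: decompose $Z$ into the sum of independent Bernoulli indicators $Z_i$ by reducing the discontinuity question to the one-dimensional interval $[a_i-\rho,\,a_i+\rho]$, compute $\bb P[Z_i=1]=2\rho/\delta$ from the uniformity of $\xi_i$ and the location of the jumps of $\cl Q$ on $\delta\bb Z$, and then apply Bernstein's inequality to $Z-\bb E Z$. Your write-up is in fact more explicit than the paper's (which leaves the fractional-part computation and the Bernstein constants to the reader), but the argument is the same.
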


\begin{proof}
For $i \in [m]$, let $Z_i$ be the random variable equal to 1 if $\Amapp_i(\cdot) \notin {\sf C}^0(\bs a + \rho \bb B^m_{\ell_\infty})$, and to 0 otherwise. We have $Z = \sum_i Z_i$. Moreover, since analyzing the discontinuity of $\Amapp_i$ over $\bs a + \rho \bb B^m_{\ell_\infty}$ amounts to studying that of $\lambda \in \bb R \mapsto \cl Q(\lambda + \xi_i) \in \delta \bb Z$ over the interval $[a_i-\rho, a_i+\rho]$, we directly find $\bb P[Z_i = 1] = \tfrac{2\rho}{\delta}$ from the uniformity of $\xi_i$ and the fact that the discontinuities of $\cl Q$ are located on $\delta \bb Z$.  The rest of the Lemma is then simply obtained by observing that the \rvs $\{Z_i:1\leq i \leq m\}$ are \iid and by a simple application of Bernstein's inequality to $Z - \bb E Z$~\cite{V2012}.  
\end{proof}

Similarly to some developments made in Sec.~\ref{sec:LPD-noisy-linear-map}, the next lemma indicates that, on a fixed vectors $\bs a \in \bb R^m$ and in a fixed direction induced by a vector $\bs b \in \bb R^m$, $\Amapp(\bs a)$ concentrates around its expectation $\bb E\,\Amapp(\bs a) = \bs a$ (from Lemma~\ref{lem1}), and this concentration improves exponentially with~$m$.  

\begin{lemma}
\label{lem:simple-concent}
Given $\bs a, \bs b \in \bb R^m$, $\epsilon > 0$, $\bs \xi \sim \cl U^m([0, \delta])$, and $\Amapp(\cdot) := \cl Q(\cdot + \bs \xi)$, we have
\begin{equation}
  \label{eq:concent-random-map-with-random-zeros}
  \bb P\big[\,|\scp{\Amapp(\bs a) - \bs a}{\bs b}| \geq \delta \epsilon \sqrt{
m} \,\|\bs b\|\,\big] \leq 2\exp(- 2\epsilon^2 m).
\end{equation}
\end{lemma}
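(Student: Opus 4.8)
The plan is to write $\scp{\Amapp(\bs a) - \bs a}{\bs b}$ as a sum of independent, bounded, mean-zero random variables and to apply Hoeffding's inequality. Set $r_i := \Amapp_i(\bs a) - a_i = \delta\lfloor \delta^{-1}(a_i + \xi_i)\rfloor - a_i$ for $i \in [m]$, so that $\scp{\Amapp(\bs a) - \bs a}{\bs b} = \sum_{i=1}^m b_i r_i$. Each $r_i$ is a (measurable) function of $\xi_i$ alone, hence the summands $b_i r_i$ are independent since the $\xi_i$ are \iid.

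First I would record two deterministic-support facts about $r_i$. Writing $\lambda_i := a_i/\delta$, for any value $\xi_i \in [0,\delta]$ the integer $\lfloor \lambda_i + \xi_i/\delta\rfloor$ belongs to $\{\lfloor \lambda_i\rfloor, \lfloor\lambda_i\rfloor + 1\}$, so $r_i = \delta(\lfloor \lambda_i + \xi_i/\delta\rfloor - \lambda_i)$ lies in the interval $[\,\delta\lfloor\lambda_i\rfloor - a_i,\ \delta\lfloor\lambda_i\rfloor - a_i + \delta\,]$, which has length exactly $\delta$ and does not depend on $\xi_i$. Consequently $b_i r_i$ ranges (deterministically) in an interval of length $\delta|b_i|$. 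Moreover, by Lemma~\ref{lem1} applied with $u = \xi_i/\delta \sim \cl U([0,1])$, $\bb E\lfloor \lambda_i + \xi_i/\delta\rfloor = \lambda_i$, so $\bb E r_i = 0$ and each summand $b_i r_i$ is centered.

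Then I would invoke Hoeffding's inequality for the independent, centered random variables $b_1 r_1, \ldots, b_m r_m$, whose ranges have lengths $\delta|b_1|, \ldots, \delta|b_m|$: for every $t > 0$,
$$
\ts \bb P\big[\,\big|\sum_{i=1}^m b_i r_i\big| \geq t\,\big]\ \leq\ 2\exp\!\Big(-\frac{2t^2}{\sum_{i=1}^m \delta^2 b_i^2}\Big)\ =\ 2\exp\!\Big(-\frac{2t^2}{\delta^2\|\bs b\|^2}\Big).
$$
Choosing $t = \delta\epsilon\sqrt m\,\|\bs b\|$ makes the right-hand side equal to $2\exp(-2\epsilon^2 m)$, which is the claimed bound (the case $\bs b = \bs 0$ being trivial).

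The argument is essentially a one-line application of Hoeffding's inequality once the two facts about $r_i$ are in place, so there is no serious obstacle; the only point worth care — and the only place the dither enters — is that both the vanishing $\bb E r_i = 0$ (which is exactly Lemma~\ref{lem1}, and fails for an undithered quantizer) and the sharp observation that the support of $r_i$ has length $\delta$ are needed: it is the latter, rather than the crude bound $|r_i| \leq \delta$, that produces the clean constant $2$ in the exponent.
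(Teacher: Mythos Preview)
Your proof is correct and follows the same approach as the paper's own proof: both apply Hoeffding's inequality to the centered sum $\sum_i b_i r_i$, using Lemma~\ref{lem1} to ensure $\bb E r_i = 0$. Your observation that each $r_i$ ranges in an interval of length exactly $\delta$ (rather than the cruder $|r_i|\leq\delta$ that the paper records) is precisely what is needed to obtain the constant $2$ in the exponent, so your argument is in fact slightly more careful.
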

\begin{proof}

We define the \iid random variables $\{X_i = \cl Q(a_i + \xi_i) - a_i
\in [-\delta ,\delta ],\ i\in [m]\}$. From Lemma~\ref{lem1}, $\bb E X_i =
0$. Moreover
$|X_i| \leq  \delta $ for all $i\in [m]$. Therefore, Hoeffding's inequality\footnote{We could use here the sub-Gaussianity of the \rvs $X_i$ to study the concentration of $\scp{\Amapp(\bs a)}{\bs b}$ around $\scp{\bs a}{\bs b}$. However, Hoeffding's inequality is more accurate with respect to the multiplicative constants of the tail bound.} provides
$\bb P[|\scp{\Amapp(\bs a) - \bs a}{\bs b}|\geq t] =
\bb P[|\sum_i X_ib_i| \geq t ] \leq 2\exp(-\frac{2t^2}{\delta^2 \|\bs b\|^2})$ for $t\geq 0$. Operating the change of variable $t= \epsilon\delta \sqrt{
m}\,\|\bs b\|$ gives the result.
\end{proof}

We are now ready to show that projecting a vector onto the image of any vector of a set $\cl J \subset \bb R^m$ by $\Amapp$ is close to the inner product of the two vectors, \ie $\Amapp$ respects a form of the limited projection distortion directly in $\bb R^m$ provided that $\cl J$ has bounded Kolmogorov entropy~\cite{KT1961}. In fact, given a fixed direction $\bs b/\|\bs b\|$ with $\bs b \in \bb R^m$, this amounts to extending Lemma~\ref{lem:simple-concent} and to analyzing how $\Amapp(\bs a + \bs \xi) - \bs a$ concentrates in this direction for all $\bs a \in \cl J$. 

\begin{proposition}[Limited projection distortion of $\Amapp$ in $\bb R^m$ and in a fixed direction]
\label{prop:Rmlpd-on-fixed-b}
Given a distortion $\epsilon > 0$, a fixed $\bs b \in \bb R^m$, and a subset $\cl J \subset
\bb R^m$ with bounded Kolmogorov entropy $\cl H(\cl J, \cdot)$, provided 
\begin{equation}
  \label{eq:Rmlpd-on-fixed-b-cond}
\ts m \geq C \epsilon^{-2}\,\cl H(\cl J, c\, \epsilon^3 \sqrt m),  
\end{equation}
with $C \leq 24$ and $c\geq 1/27$, the random mapping $\Amapp(\cdot) = \cl Q(\cdot + \bs \xi)$ defined from $\bs \xi \sim \cl U^m([0, \delta])$ respects, 
\begin{equation}
  \label{eq:Lpd-in-Rm}
  \ts |\scp{\Amapp(\bs a)}{\bs b} - \scp{\bs a}{\bs b}|\ \leq\
  \, \min(\epsilon\,(1+ \delta), \delta) \sqrt{m}\, \|\bs
  b\|,\quad \forall \bs a \in \cl J,
\end{equation}
with probability exceeding $1 - 3 \exp(- C^{-1} \epsilon^2 m)$.
\end{proposition}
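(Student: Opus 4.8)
The plan is to treat the two bounds in $\min(\epsilon(1+\delta),\delta)$ separately. The bound $\delta\sqrt m\,\|\bs b\|$ is deterministic: since $|\lfloor\lambda+u\rfloor-\lambda|\le 1$ for $u\in[0,1]$, every coordinate of $\Amapp(\bs a)-\bs a$ has modulus at most $\delta$, so $\|\Amapp(\bs a)-\bs a\|\le\delta\sqrt m$ and Cauchy--Schwarz gives $|\scp{\Amapp(\bs a)}{\bs b}-\scp{\bs a}{\bs b}|=|\scp{\Amapp(\bs a)-\bs a}{\bs b}|\le\delta\sqrt m\,\|\bs b\|$ for all $\bs a$ and all realizations of $\bs\xi$. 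Hence if $\epsilon(1+\delta)\ge\delta$ the proposition already holds with probability one, and I may assume $\epsilon(1+\delta)<\delta$, in particular $\epsilon<1$. The remaining bound $\epsilon(1+\delta)\sqrt m\,\|\bs b\|$ is obtained by a covering argument using \emph{two} scales, adapted to the discontinuity of $\Amapp$: a metric net $\cl J_\eta$ of $\cl J$ at radius $\eta\asymp\epsilon^3\sqrt m$ — this is the radius appearing in~\eqref{eq:Rmlpd-on-fixed-b-cond} — and a separate $\ell_\infty$-radius $\rho\asymp\delta\epsilon^2$, chosen so that $\rho<\delta/2$ and $3\rho m/(4\delta)\asymp\epsilon^2 m$, used only to feed Lemma~\ref{lem:bound-discont}.

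On the net there are two events to control. First, Lemma~\ref{lem:simple-concent} with distortion $\epsilon/3$ and a union bound over $\cl J_\eta$ (whose log-cardinality is $\cl H(\cl J,\eta)$) yield, with probability at least $1-2\exp(\cl H(\cl J,\eta)-\tfrac29\epsilon^2 m)$, that $|\scp{\Amapp(\bs a_0)-\bs a_0}{\bs b}|\le\tfrac13\delta\epsilon\sqrt m\,\|\bs b\|$ for every $\bs a_0\in\cl J_\eta$. Second, Lemma~\ref{lem:bound-discont} in the form~\eqref{eq:number-of-discont-special-case} and a union bound give, with probability at least $1-\exp(\cl H(\cl J,\eta)-\tfrac{3\rho m}{4\delta})$, that the number $Z(\bs a_0+\rho\bb B^m_{\ell_\infty})$ of coordinates on which $\Amapp$ is discontinuous over $\bs a_0+\rho\bb B^m_{\ell_\infty}$ is at most $4\rho m/\delta$, for every $\bs a_0\in\cl J_\eta$. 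The hypothesis $m\ge C\epsilon^{-2}\cl H(\cl J,c\epsilon^3\sqrt m)$ with $C\le 24$ is exactly what pushes both exponents below $-C^{-1}\epsilon^2 m$, so both events hold simultaneously with probability at least $1-3\exp(-C^{-1}\epsilon^2 m)$ (the $3=2+1$).

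The decisive step is transferring these bounds to an arbitrary $\bs a\in\cl J$ with nearest net point $\bs a_0$, so $\|\bs a-\bs a_0\|\le\eta$, via $\scp{\Amapp(\bs a)-\bs a}{\bs b}=\scp{\Amapp(\bs a_0)-\bs a_0}{\bs b}+\scp{(\Amapp(\bs a)-\Amapp(\bs a_0))-(\bs a-\bs a_0)}{\bs b}$. I split $[m]$ into $D^\complement:=\{i:|a_i-a_{0,i}|\le\rho\text{ and }\Amapp_i\text{ continuous over }\bs a_0+\rho\bb B^m_{\ell_\infty}\}$ and $D:=[m]\setminus D^\complement$. On $D^\complement$, $a_i$ and $a_{0,i}$ fall in the same quantization cell, so $\Amapp_i(\bs a)=\Amapp_i(\bs a_0)$ and those coordinates contribute only $-\sum_{i\in D^\complement}(a_i-a_{0,i})b_i$, bounded by $\|\bs a-\bs a_0\|\,\|\bs b\|\le\eta\|\bs b\|$. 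On $D$ I use the unconditional oscillation bound $|(\Amapp_i(\bs a)-\Amapp_i(\bs a_0))-(a_i-a_{0,i})|<\delta$ (the map $\lambda\mapsto\cl Q(\lambda+\xi_i)-\lambda$ ranges over an interval of length $\delta$), together with $|D|\le Z+\eta^2/\rho^2$ — the $\eta^2/\rho^2$ being a Chebyshev count of the coordinates with $|a_i-a_{0,i}|>\rho$, since $\sum_i(a_i-a_{0,i})^2\le\eta^2$ — so Cauchy--Schwarz contributes $\delta\sqrt{|D|}\,\|\bs b\|\le(2\sqrt{\delta\rho m}+\delta\eta/\rho)\|\bs b\|$. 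Combining through the triangle inequality, $|\scp{\Amapp(\bs a)-\bs a}{\bs b}|\le(\tfrac13\delta\epsilon\sqrt m+\eta+2\sqrt{\delta\rho m}+\delta\eta/\rho)\|\bs b\|$; substituting $\eta\asymp\epsilon^3\sqrt m$, $\rho\asymp\delta\epsilon^2$, and using $\epsilon<1$ together with $\delta^{2/3}\le 1+\delta$, each term is a small constant times $\epsilon(1+\delta)\sqrt m\,\|\bs b\|$, and choosing the constants (tracking the bookkeeping yields $C\le 24$ and a radius at least $\tfrac1{27}\epsilon^3\sqrt m$) makes the sum $\le\epsilon(1+\delta)\sqrt m\,\|\bs b\|$, i.e.~\eqref{eq:Lpd-in-Rm}.

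I expect the main obstacle to be precisely this two-scale decoupling: a single $\ell_\infty$-ball around a net point cannot simultaneously be small enough to keep $Z$ controlled and large enough to contain an entire $\ell_2$-ball of radius $\eta$, which forces the separation of the net radius $\eta$ from the discontinuity radius $\rho$ and the absorption of the "in-between" coordinates by the variance/Chebyshev count — this is what produces the unusual $\epsilon^3\sqrt m$ covering radius rather than a naive $\epsilon^2$. Everything else — invoking Lemmas~\ref{lem:simple-concent} and~\ref{lem:bound-discont}, the two union bounds, and the final tallying of constants so that $C\le 24$ and the failure probability is $3\exp(-C^{-1}\epsilon^2 m)$ — is routine.
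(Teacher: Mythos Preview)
Your proposal is correct and follows essentially the same route as the paper: the same deterministic $\delta\sqrt m\|\bs b\|$ bound, the same two-scale decoupling with an $\ell_2$-net at radius $\asymp\epsilon^3\sqrt m$ and an $\ell_\infty$-radius $\asymp\delta\epsilon^2$ for Lemma~\ref{lem:bound-discont}, the same Chebyshev-type count bridging the two scales, and the same pair of union bounds over Lemmata~\ref{lem:simple-concent} and~\ref{lem:bound-discont}. The paper organizes the decomposition via intermediate vectors $\bar{\bs r}$ and $\tilde{\bs r}$ (and a free parameter $P$ set to $\delta^2\epsilon^{-2}$, which is exactly your $\rho=\delta\epsilon^2$) rather than your set $D$, but the resulting terms and their balancing are identical.
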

  
\medskip
Thanks to Lemmata~\ref{lem:bound-discont} and~\ref{lem:simple-concent}, the proof of this proposition is relatively simple and inspired by some methodology defined in~\cite{BRM2017}. We observe that the number of components of $\Amapp$ that are continuous (and thus constant) over each ball of a dense covering of $\cl J$ are close to $m$ when this value is large. Then, we bound the discontinuous components, which are in minority, with deterministic bounds known on the quantizer. This method is considerably simpler, in this case, than the softening strategy of the discontinuous mapping $\Amapp$ proposed in~\cite{J2015,JC2016}, as inspired by the softening of the one-bit (``sign'') quantizer introduced in~\cite{PV2014}. 

\begin{proof}[Proof of Prop.\ref{prop:Rmlpd-on-fixed-b}]
From the homogeneity of~\eqref{eq:Lpd-in-Rm} we can fix $\bs b \in \bb R^m$ and assume $\|\bs b\|
= 1$. Let $\cl J_{\eta \sqrt m}$ be an optimal $(\eta \sqrt m)$-covering in the $\ell_2$-metric of
$\cl J$ for some $\eta > 0$ to be fixed later, \ie $\log |\cl J_{\eta \sqrt m}|
= \cl H(\cl J, \eta \sqrt m)$. 

Let us first consider an arbitrary vector $\bs a' \in \cl J$. By definition of the covering above, we can always write $\bs a' = \bs a + \bs r$ for some $\bs a \in \cl J_{\eta\sqrt m}$ and $\bs r \in (\eta \sqrt m)\bb B^m$. Given a value $P>0$ whose value will be fixed later, we define the set
$$
\cl T = \cl T(\bs r) := \{i \in [m]: |r_i| \leq \eta \sqrt P\}.
$$
We observe that $|\cl T^\compl| \leq \frac{m}{P}$ independently of $\bs r$, since $\eta^2 \, m \geq \|\bs
r\|^2_2 \geq \|\bs r_{\cl T^\compl}\|^2_2 \geq \eta^2 P |\cl T^\compl|$. 
Writing $\sigma_i = \chi_{\cl T}(i)$ for $i\in [m]$ and $\bar{\bs r} := \bs \sigma \odot \bs r \in (\eta\sqrt P) \bb B^m_{\ell_\infty}$, we now develop the LHS of~\eqref{eq:Lpd-in-Rm}:
\begin{align}
&\ts |\scp{\Amapp(\bs a')}{\bs b} - \scp{\bs a'}{\bs b}|\ =\ |\sum_i b_i (\cl Q(a_i + r_i + \xi_i) - (a_i + r_i))|\nonumber\\
&\ts \leq |\sum_i b_i (\cl Q(a_i + \bar r_i + \xi_i) - (a_i + \bar  r_i))|\nonumber\\
&\ts \qquad + |\sum_i b_i (\cl Q(a_i + \bar r_i + \xi_i) - (a_i + \bar r_i) - \cl Q(a_i + r_i + \xi_i) + (a_i + r_i))|\nonumber\\
&\ts \leq |\sum_i b_i (\cl Q(a_i + \bar r_i + \xi_i) - (a_i + \bar r_i))|\nonumber\\
&\ts \qquad + |\sum_{i\in \cl T^\compl} b_i (\cl Q(a_i + \xi_i) + r_i - \cl Q(a_i + \xi_i + r_i ))|\nonumber\\
&\ts \leq |\sum_i b_i (\cl Q(a_i + \bar r_i + \xi_i) - (a_i + \bar r_i))|\ +\ \delta \|\bs b_{\cl T^\compl}\|_1 \nonumber\\
\label{eq:partial-bound-1}
&\ts \leq |\sum_i b_i (\cl Q(a_i + \bar r_i + \xi_i) - (a_i + \bar r_i))|\ +\ \delta \frac{1}{\sqrt P} \sqrt m\ =:\ I\ +\ \delta \frac{1}{\sqrt P} \sqrt m,
\end{align}
where we have used the fact that $|(\lfloor a\rfloor + b) - (\lfloor a + b\rfloor) |\leq 1$ for all $a,b\in \bb R$, and $\chi_{\cl T}(i)=1$ if~$i\in T$ and 0 otherwise. 

Notice that the term $I = |\scp{\Amapp(\bs a + \bar{\bs r}) - (\bs a + \bar{\bs r})}{\bs b}|$ in~\eqref{eq:partial-bound-1} has thus to be characterized for a neighborhood $(\eta\sqrt P) \bb B^m_{\ell_\infty}$ of $\bs a$ since $\|\bar{\bs r}\|_{\infty} \leq \eta\sqrt P$. Interestingly, from Lemma~\ref{lem:bound-discont} and
\eqref{eq:number-of-discont-special-case}, a union bound argument allows us to bound the number $Z$ of discontinuous components\footnote{Remark that by bounding $Z$ we do not forbid $\Amapp$ to have different discontinuous components for different balls of $\cl J_{\eta \sqrt m} +  (
\eta \sqrt{P}) \bb B_{\ell_\infty}^m$.} of $\Amapp$ over all $\ell_\infty$-balls of $\cl J_{\eta \sqrt m} +  (
\eta \sqrt{P}) \bb B_{\ell_\infty}^m$ so that 
$$
\ts \bb P\big[\,\exists \bs u \in \cl J_{\eta
  \sqrt m}: Z(\bs u + (\eta \sqrt P)\bb B^m_{\ell_\infty}) > m
\tfrac{4 \eta \sqrt P}{\delta}\,\big]\ \leq\ \exp\big(\,\cl H(\cl J, \eta
\sqrt m) - \frac{3}{4}m \tfrac{\eta \sqrt P}{\delta}\big).
$$
Equivalently,
$$
\ts \max \{Z(\bs u + (\eta \sqrt P)\bb B^m_{\ell_\infty}): \bs u \in \cl J_{\eta \sqrt m}\} \leq m
\tfrac{4 \eta \sqrt P}{\delta},
$$
with probability exceeding $1-\exp(\cl H(\cl J, \eta
\sqrt m) - \frac{3}{4}m \tfrac{\eta \sqrt P}{\delta})$.

Let us denote $\cl C = \cl C(\bs a) \subset [m]$ the set of components of $\Amapp$ that are continuous over $\bs a + (\eta\sqrt P) \bb B^m_{\ell_\infty}$, with $\sigma'_i := \chi_{\cl C}(i)$ for $i\in[m]$. We also write $\tilde{\bs r} = \bs \sigma' \odot \bs \sigma \odot \bs r = \bs \sigma' \odot \bar{\bs r}$. By construction, we have thus $\cl Q(a_i + \tilde r_i + \xi_i) = \cl Q(a_i + \xi_i)$ for $i \in [m]$. The bound above shows that with the same probability
\begin{align}
I&= \ts |\sum_{i} b_i (\cl Q(a_i + \bar r_i + \xi_i) - (a_i + \bar r_i))|\nonumber\\
&\ts \leq |\sum_{i} b_i (\cl Q(a_i + \tilde r_i + \xi_i) - (a_i + \tilde r_i))|\nonumber\\
&\ts \qquad + |\sum_{i} b_i (\cl Q(a_i + \tilde r_i + \xi_i) - (a_i + \tilde r_i) - \cl Q(a_i + \bar r_i + \xi_i) + (a_i + \bar r_i))|\nonumber\\
&\ts \leq |\sum_{i} b_i (\cl Q(a_i + \xi_i) - a_i)| + |\sum_{i} b_i \tilde r_i|\nonumber\\
&\ts \qquad + |\sum_{i} b_i (\cl Q(a_i + \xi_i + \tilde r_i) + \bar r_i - (\cl Q(a_i + \xi_i + \bar r_i) + \tilde r_i) )|.\nonumber
\end{align}
Note that in the last term, since $\tilde{r}_i = \sigma'_i \bar{r}_i$ is equal to $\bar{r}_i=\sigma_i r_i$ if $i\in \cl C$ or if $i \in \cl T^\compl$, all terms of the sum are zero but those for which $i\in (\cl C \cup \cl T^\compl)^\compl = \cl C^\compl \cap \cl T \subset \cl C^\compl$. For these non-zero terms, $|(\cl Q(a_i + \xi_i + \tilde r_i) + \bar r_i - (\cl Q(a_i + \xi_i + \bar r_i) + \tilde r_i)|\leq\delta$. Therefore,
\begin{align}
I &\ts \leq |\sum_{i} b_i (\cl Q(a_i + \xi_i) - a_i)| + \|\bs b\| \|\tilde{\bs r}\| + \delta \|\bs b_{\cl C^\compl}\|_1\nonumber\\
&\ts \leq |\sum_{i} b_i (\cl Q(a_i + \xi_i) - a_i)|\ +\ \eta \sqrt m +\ 2\delta (\tfrac{\eta \sqrt P}{\delta})^{1/2} \sqrt m\nonumber\\
\label{eq:partial-bound-2}
&\ts =: I\!I\ +\ \eta \sqrt m\ +\ 2\delta (\tfrac{\eta \sqrt P}{\delta})^{1/2} \sqrt m.
\end{align}

Applying Lemma~\ref{lem:simple-concent} with a union bound on all $\bs a
\in \cl J_{\eta \sqrt m}$ ensures that  
\begin{equation}
  \label{eq:partial-bound-3}
\ts I\!I = |\sum_{i} b_i (\cl Q(a_i + \xi_i)
  - a_i)| \leq \delta \epsilon\, \sqrt{m},
\end{equation} 
holds with probability exceeding $1 - 2 \exp(\cl H(\cl J, \eta \sqrt m) - 2 \epsilon^2 m)$.   

Gathering all the bounds from~\eqref{eq:partial-bound-1},~\eqref{eq:partial-bound-2} and~\eqref{eq:partial-bound-3}  and applying a last union bound on the
failure of the different events shows that
\begin{align*}
\ts |\scp{\Amapp(\bs a')}{\bs b} - \scp{\bs a'}{\bs b}|
&\ts \leq \delta \epsilon\, \sqrt m + \eta \sqrt m  + 2\delta (\tfrac{\eta \sqrt P}{\delta})^{1/2} \sqrt m + \delta \frac{1}{\sqrt P} \sqrt m,
\end{align*}
with probability exceeding 
$$
1 - \exp(\cl H(\cl J, \eta
\sqrt m) - \tfrac{3}{4} m \tfrac{\eta \sqrt P}{\delta}) - 2\exp(\cl H(\cl J, \eta \sqrt m) - 2 \epsilon^2 m).
$$

Assuming first that $\epsilon \leq 1$, we can now set our free parameters and fix $P= \delta^2 \epsilon^{-2}$, $\eta =
\epsilon^3 \leq \epsilon$, so that $\tfrac{\eta \sqrt
  P}{\delta} = \epsilon^2 \leq \epsilon $ and $(\frac{1}{P})^{1/2} =
\frac{\epsilon}{\delta}$. Therefore, from a few
simplifications, we have
\begin{align*}
\ts |\scp{\Amapp(\bs a')}{\bs b} - \scp{\bs a'}{\bs b}|
&\ts \leq\, 3 \delta\epsilon \sqrt m + 2\epsilon \sqrt m \leq 3 \epsilon (1+\delta) \sqrt m,
\end{align*}
with probability exceeding $1 - 3 \exp(- \frac{3}{8}\,m \epsilon^2)$ provided 
$$
\ts m \geq \frac{8}{3} \epsilon^{-2}\,\cl H(\cl J, \epsilon^3 \sqrt m).
$$

Moreover, using Cauchy-Schwarz, we have deterministically
$$
\ts |\scp{\Amapp(\bs a')}{\bs b} - \scp{\bs a'}{\bs b}|\ \leq \|\Amapp(\bs a') - \bs a'\| \|\bs b\|\ \leq\ \delta \sqrt m,
$$
from the definition of $\Amapp$ and since $|\lfloor \lambda + u \rfloor - \lambda| \leq 1$ for all $\lambda \in \bb R$ and all $u \in [0,1]$. Finally, \eqref{eq:Lpd-in-Rm}  is obtained from the rescaling $\epsilon \to \epsilon/3$, and observing that $\delta < (1+\delta)\epsilon$ if $\epsilon > 1$.

\end{proof}
\medskip

We now prove a (global) limited projection distortion property of $\Amapp$ for scalar products of all elements of a subset $\cl J \subset \bb R^m$, hence extending the previous proposition to all elements of $\bs b \in \cl J$ provided that this set has bounded Kolmogorov entropy. 

\begin{proposition}[Limited projection distortion of $\Amapp$ in $\bb R^m$]
\label{prop:Rmlpd-on-all-b-for-a-in-subset}
Given $\epsilon, \nu >0$, a subset $\cl J \subset
\bb R^m$ with bounded Kolmogorov entropy $\cl H(\cl J, \cdot)$, and another subset $\cl J' \subset \bb R^m$, possibly reduced to a
single point, such that, given $\bs b' \in \cl J$,  
\begin{equation}
  \label{eq:cond-on-Lpd-in-Rm-on-a}
  \ts \bb P[\forall \bs a \in \cl J': |\scp{\Amapp(\bs a)}{\bs b'} - \scp{\bs a}{\bs b'}|\ \leq\ \nu \sqrt{m} \|\bs
  b'\|]\ \geq 1 - C \exp(-c \epsilon^2 m),
\end{equation}
provided that
\begin{equation}
\label{eq:cond-main-result-embed}
\ts m \geq \frac{2}{c} \epsilon^{-2}\, \cl H(\cl J, \epsilon\sqrt m), 
\end{equation}
we have with
probability exceeding $1 - C \exp(-\frac{c}{2} \epsilon^2 m)$, 
\begin{equation}
  \label{eq:Lpd-in-Rm-uniform}
  \ts |\scp{\Amapp(\bs a)}{\bs b} - \scp{\bs a}{\bs b}|\ \leq\ \nu \sqrt m \|\cl J\| + \epsilon\delta m.
\end{equation} 
\end{proposition}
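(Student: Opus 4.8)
The plan is to derive the uniform-in-$\bs b$ bound~\eqref{eq:Lpd-in-Rm-uniform} from the assumed uniform-in-$\bs a$ bound~\eqref{eq:cond-on-Lpd-in-Rm-on-a} by a covering argument in the $\bs b$ variable, absorbing the off-net error with the crude deterministic estimate $\|\Amapp(\bs a) - \bs a\| \leq \delta\sqrt m$ that already appeared in the proof of Prop.~\ref{prop:Rmlpd-on-fixed-b}.

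First, fix an optimal $(\epsilon\sqrt m)$-covering $\cl J_{\epsilon\sqrt m} \subset \cl J$ of $\cl J$ in the $\ell_2$-metric, so that $\log|\cl J_{\epsilon\sqrt m}| = \cl H(\cl J, \epsilon\sqrt m)$. Applying hypothesis~\eqref{eq:cond-on-Lpd-in-Rm-on-a} to each of the finitely many net points and taking a union bound, the event that, for all $\bs a \in \cl J'$ and all $\bs b' \in \cl J_{\epsilon\sqrt m}$,
$$
\ts |\scp{\Amapp(\bs a)}{\bs b'} - \scp{\bs a}{\bs b'}|\ \leq\ \nu\sqrt m\,\|\bs b'\|\ \leq\ \nu\sqrt m\,\|\cl J\|,
$$
holds with probability at least $1 - |\cl J_{\epsilon\sqrt m}|\,C\exp(-c\epsilon^2 m) = 1 - C\exp\big(\cl H(\cl J,\epsilon\sqrt m) - c\epsilon^2 m\big)$, where the second inequality above used $\|\bs b'\| \leq \|\cl J\|$ (as $\bs b' \in \cl J$). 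Since condition~\eqref{eq:cond-main-result-embed} is precisely $\cl H(\cl J,\epsilon\sqrt m) \leq \tfrac{c}{2}\epsilon^2 m$, this probability is at least $1 - C\exp(-\tfrac{c}{2}\epsilon^2 m)$, matching the statement.

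It then remains to pass to an arbitrary $\bs b \in \cl J$. Write $\bs b = \bs b' + \bs e$ with $\bs b' \in \cl J_{\epsilon\sqrt m}$ a closest net point, so $\|\bs e\| \leq \epsilon\sqrt m$, and split
$$
\ts \scp{\Amapp(\bs a)}{\bs b} - \scp{\bs a}{\bs b}\ =\ \big(\scp{\Amapp(\bs a)}{\bs b'} - \scp{\bs a}{\bs b'}\big)\ +\ \scp{\Amapp(\bs a) - \bs a}{\bs e}.
$$
On the event above, the first summand is at most $\nu\sqrt m\,\|\cl J\|$, while Cauchy--Schwarz combined with the deterministic bound $\|\Amapp(\bs a) - \bs a\| \leq \delta\sqrt m$ (which follows componentwise from $|\lfloor \lambda + u\rfloor - \lambda| \leq 1$ for $\lambda \in \bb R$, $u \in [0,1]$, after rescaling by $\delta$) bounds the second by $\delta\sqrt m \cdot \epsilon\sqrt m = \epsilon\delta m$. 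Summing the two estimates gives~\eqref{eq:Lpd-in-Rm-uniform}, uniformly over $\bs a \in \cl J'$ and $\bs b \in \cl J$, on this event.

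No genuine analytic difficulty is expected here; the two points requiring care are that the assumed event~\eqref{eq:cond-on-Lpd-in-Rm-on-a} is \emph{already} uniform over $\bs a \in \cl J'$, so the union bound costs only the factor $|\cl J_{\epsilon\sqrt m}|$ and leaves $\cl J'$ untouched (hence $\cl J'$ may be a single point or infinite, as allowed in the hypothesis); and that the covering radius must be taken as exactly $\epsilon\sqrt m$, with the net contained in $\cl J$, so that the residual term equals $\epsilon\delta m$ and the net-point bound reads $\nu\sqrt m\,\|\cl J\|$ in~\eqref{eq:Lpd-in-Rm-uniform}.
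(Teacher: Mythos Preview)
Your proof is correct and matches the paper's argument essentially line for line: the same $(\epsilon\sqrt m)$-covering of $\cl J$, the same union bound over net points (absorbed via~\eqref{eq:cond-main-result-embed}), and the same on-net/off-net splitting with Cauchy--Schwarz and the deterministic bound $\|\Amapp(\bs a)-\bs a\|\leq \delta\sqrt m$ to control the residual. Your explicit remark that the net lies in $\cl J$ (so $\|\bs b'\|\leq\|\cl J\|$) and that the event is already uniform in $\bs a\in\cl J'$ are exactly the two points the paper relies on as well.
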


\begin{proof}
Let $\cl J_{\epsilon\sqrt m}$ be an optimal $(\epsilon\sqrt m)$-covering of $\cl
J$. From~\eqref{eq:cond-on-Lpd-in-Rm-on-a}, provided \eqref{eq:cond-main-result-embed} holds, we have, with probability exceeding $1 - C \exp(-\tinv{2} c\, \epsilon^2 m)$, 
$|\scp{\Amapp(\bs a)}{\bs b'} - \scp{\bs a}{\bs b'}|\ \leq\ \nu \sqrt m \|\bs b'\|$, for all $\bs a \in \cl J$ and all $\bs b' \in \cl J_{\epsilon\sqrt m}$. Therefore, for all $\bs b \in \cl J$, we can write $\bs b = \bs b_0 + \bs r$
with $\bs r \in (\epsilon\sqrt m)\bb B^m$ and
$\bs b_0 \in \cl J_{\epsilon\sqrt m}$, so that, using Cauchy-Schwarz,
\begin{align*}
\ts |\scp{\Amapp(\bs a)}{\bs b} - \scp{\bs a}{\bs b}|
&\ts = |\scp{\Amapp(\bs a) - \bs a}{\bs b}| \leq |\scp{\Amapp(\bs a) - \bs a}{\bs b_0}| + |\scp{\Amapp(\bs a) - \bs a}{\bs r}|\\
&\ts \leq \nu \sqrt m\,\|\cl J\|\ +\ \|\Amapp(\bs a) - \bs a\| \|\bs r\|\ \leq\ \nu \sqrt m\,\|\cl J\|\ +\ \delta \epsilon m.
\end{align*} 
\end{proof}

\subsection{Limited projection distortion of $\Amap$}
\label{sec:bridging}

We can now determine under which conditions $\Amap$ has limited projection distortion by 
bridging the analysis of $\Amapp$ with the action of a RIP matrix $\tinv{\sqrt m}\bs \Phi \in \bb R^{m\times n}$. 

\begin{proposition}[LPD for dithered and quantized random projections]
\label{prop:main-result-embed}
Given a subset $\cl K \subset
\bb R^n$ with $\|\clip{\cl K}\|\leq 1$, a distortion $\epsilon>0$, a quantization resolution $\delta > 0$, and a matrix $\tinv{\sqrt m} \bs \Phi$ that is $(\epsilon^3, L)$-Lipschitz continuous over $\clip{\cl K}$ with $L > 0$, if
\begin{equation}
\label{eq:cond-main-result-embed-unif}
\ts m \geq C \epsilon^{-2}\, \cl H(\clip{\cl K},\, c\, L^{-1}\epsilon^3 ),
\end{equation}
then, with probability exceeding $1 - C' \exp(- c' \epsilon^2 m)$, the random mapping $\Amap(\cdot) =\cl Q(\bs \Phi \cdot + \bs \xi)$ from $\bs \xi \sim \cl U^m([0,\delta])$ respects the LPD$(\cl K, \bs \Phi, \nu)$ with $\nu = L \min(\epsilon\, (1+\delta), \delta)$, \ie
\begin{equation}
  \label{eq:Lpd-in-Rn}
  \ts \tinv{m} |\scp{\Amap(\bs u)}{\bs \Phi \bs v} - \scp{\bs \Phi \bs u}{\bs \Phi \bs v}|\ \leq\ L \min(\epsilon\, (1+\delta), \delta),\quad \forall \bs u,\bs v \in \clip{\cl K}.
\end{equation}
\end{proposition}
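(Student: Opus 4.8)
The plan is to exploit the factorization $\Amap = \Amapp\circ\bs\Phi$ recorded at the start of Sec.~\ref{sec:LPD}, where $\Amapp(\bs a) := \cl Q(\bs a+\bs\xi)$ is the dithered quantizer acting directly on $\bb R^m$. Indeed, writing $\bs a=\bs\Phi\bs u$ and $\bs b=\bs\Phi\bs v$ for $\bs u,\bs v\in\clip{\cl K}$, the quantity to be bounded in~\eqref{eq:Lpd-in-Rn} equals $\tinv m|\scp{\Amapp(\bs a)}{\bs b}-\scp{\bs a}{\bs b}|$, so it suffices to establish a uniform limited projection distortion for $\Amapp$ directly in $\bb R^m$, over the image set $\cl J:=\bs\Phi\,\clip{\cl K}\subset\bb R^m$. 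From there I would proceed in three stages: (i) control the radius and Kolmogorov entropy of $\cl J$ via the Lipschitz continuity of $\tinv{\sqrt m}\bs\Phi$; (ii) apply Proposition~\ref{prop:Rmlpd-on-fixed-b} (one fixed direction $\bs b$) followed by Proposition~\ref{prop:Rmlpd-on-all-b-for-a-in-subset} (all directions) to the set $\cl J$; (iii) transport the resulting $\bb R^m$-bound back through $\bs\Phi$ and tidy up the constants.

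For stage~(i), the second defining inequality of $(\epsilon^3,L)$-Lipschitz continuity (Def.~\ref{def:lipschitz}) gives $\|\cl J\|=\|\bs\Phi\,\clip{\cl K}\|\leq L\sqrt m\,\|\clip{\cl K}\|\leq L\sqrt m$. For the entropy, one invokes Lemma~\ref{lem:lipschitz-mapping-and-kolmogorov} at a radius $\eta\asymp L^{-1}\epsilon^3$, calibrated so that $L\eta\sqrt m$ lands on the scale $c\,\epsilon^3\sqrt m$ demanded by Proposition~\ref{prop:Rmlpd-on-fixed-b} (whose constant satisfies $c\geq1/27$); this produces $\cl H(\cl J,\,c\,\epsilon^3\sqrt m)\leq\cl H(\clip{\cl K},\,c\,L^{-1}\epsilon^3)$. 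Hence, whenever~\eqref{eq:cond-main-result-embed-unif} holds, the requirement $m\geq C\epsilon^{-2}\cl H(\cl J,c\epsilon^3\sqrt m)$ of Proposition~\ref{prop:Rmlpd-on-fixed-b} is met, and so is the (strictly weaker, since $\cl H$ is non-increasing and $\epsilon\sqrt m\geq c\epsilon^3\sqrt m$) requirement $m\geq\frac{2}{c}\epsilon^{-2}\cl H(\cl J,\epsilon\sqrt m)$ of Proposition~\ref{prop:Rmlpd-on-all-b-for-a-in-subset}.

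For stage~(ii), fix $\bs b\in\bb R^m$ arbitrary: Proposition~\ref{prop:Rmlpd-on-fixed-b} applied to $\cl J$ gives, with probability at least $1-3\exp(-C^{-1}\epsilon^2m)$, that $|\scp{\Amapp(\bs a)}{\bs b}-\scp{\bs a}{\bs b}|\leq\min(\epsilon(1+\delta),\delta)\sqrt m\,\|\bs b\|$ for every $\bs a\in\cl J$. This is exactly hypothesis~\eqref{eq:cond-on-Lpd-in-Rm-on-a} of Proposition~\ref{prop:Rmlpd-on-all-b-for-a-in-subset} with $\cl J'=\cl J$ and $\nu=\min(\epsilon(1+\delta),\delta)$, so that proposition then yields, with probability at least $1-C'\exp(-c'\epsilon^2m)$, the bound $|\scp{\Amapp(\bs a)}{\bs b}-\scp{\bs a}{\bs b}|\leq\nu\sqrt m\,\|\cl J\|+\epsilon\delta m$ for all $\bs a,\bs b\in\cl J$. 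Dividing by $m$, using $\|\cl J\|\leq L\sqrt m$ and $\epsilon\delta\leq\min(\epsilon(1+\delta),\delta)$ (for $\epsilon\leq1$), and transporting back through $\bs a=\bs\Phi\bs u$, $\bs b=\bs\Phi\bs v$, this event forces $\tinv m|\scp{\Amap(\bs u)}{\bs\Phi\bs v}-\scp{\bs\Phi\bs u}{\bs\Phi\bs v}|\leq(L+1)\min(\epsilon(1+\delta),\delta)$ for all $\bs u,\bs v\in\clip{\cl K}$.

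For stage~(iii), one combines the last estimate with the unconditional Cauchy--Schwarz bound $\tinv m|\scp{\Amap(\bs u)}{\bs\Phi\bs v}-\scp{\bs\Phi\bs u}{\bs\Phi\bs v}|\leq\tinv m\|\Amapp(\bs\Phi\bs u)-\bs\Phi\bs u\|\,\|\bs\Phi\bs v\|\leq L\delta$, which uses $|\lfloor\lambda+u\rfloor-\lambda|\leq1$ and again the second Lipschitz inequality, and which also disposes of the case $\epsilon\geq1$. Taking the minimum of the two gives $L\min((L+1)L^{-1}\epsilon(1+\delta),\,\delta)$, and the announced distortion $\nu=L\min(\epsilon(1+\delta),\delta)$ then follows after rescaling $\epsilon$ by a factor depending only on $L$, absorbed as usual into $C,c$ in~\eqref{eq:cond-main-result-embed-unif} and into $C',c'$. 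The delicate point I expect is stage~(i): one must calibrate the radius in Lemma~\ref{lem:lipschitz-mapping-and-kolmogorov} so that the entropy of the image set $\cl J$ falls precisely on the scale $c\epsilon^3\sqrt m$ required by Proposition~\ref{prop:Rmlpd-on-fixed-b} --- this is what introduces the factor $L^{-1}$ inside $\cl H(\clip{\cl K},\cdot)$ in~\eqref{eq:cond-main-result-embed-unif} and what requires the $(\cdot,L)$-Lipschitz bound of $\tinv{\sqrt m}\bs\Phi$ to be available at that (possibly small) radius, which is automatic for conic $\cl K$ from the RIP and is secured via Sec.~\ref{sec:cond-lipsch-cont} for bounded star-shaped $\cl K$. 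Everything else --- the two union bounds and the reconciliation of universal constants --- is routine.
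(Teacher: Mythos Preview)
Your proposal is correct and follows essentially the same approach as the paper's own proof: factorize $\Amap=\Amapp\circ\bs\Phi$, transport the entropy of $\clip{\cl K}$ to $\cl J=\bs\Phi\,\clip{\cl K}$ via Lemma~\ref{lem:lipschitz-mapping-and-kolmogorov}, apply Proposition~\ref{prop:Rmlpd-on-fixed-b} followed by Proposition~\ref{prop:Rmlpd-on-all-b-for-a-in-subset} on $\cl J'=\cl J$, and conclude with the deterministic Cauchy--Schwarz bound and an $\epsilon$-rescaling. The only cosmetic differences are that the paper carries $\nu=\epsilon(1+\delta)$ (rather than the $\min$) through stage~(ii) and combines with the deterministic $L\delta$ bound only at the end, obtaining $2L\epsilon(1+\delta)$ before a universal rescaling $\epsilon\to\epsilon/2$; your route yields $(L+1)\min(\epsilon(1+\delta),\delta)$ and then rescales by an $L$-dependent factor, which is fine but means your hidden constants $C,c$ inherit a dependence on $L$ rather than remaining strictly universal.
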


We can observe that, compared to the L-LPD of $\Amap$ analyzed in Cor.~\ref{cor:llpd-noisy-linear}, the LPD reached in the context of Prop.~\ref{prop:main-result-embed} displays a quite different distortion, \eg if $\epsilon \to 0$ and $\epsilon/\delta$ is kept constant, the LPD and the L-LPD distortions decay linearly and quadratically in $\epsilon$, respectively.
  
The uniformity of the LPD in~\eqref{eq:Lpd-in-Rn} also imposes a larger number of measurements in \eqref{eq:cond-main-result-embed-unif} compared to \eqref{eq:llpd-noisy-linear-cond-m}, as induced by the smaller radius $\epsilon^3 < \epsilon$ of the Kolmogorov entropy (for $\epsilon \leq 1$).  However, as explained in Sec.~\ref{sec:conditions-llpd}, for conic sets such as the set of sparse signals or the set of low-rank matrices, the Kolmogorov entropy only depends logarithmically on the inverse of this radius; unstructured sets suffer more from this radius reduction since $\cl H$ is proportional to the square inverse of the radius, from Sudakov's inequality in \eqref{eq:sudakov}.

\begin{proof}[Proof of Prop.~\ref{prop:main-result-embed}.]
The proof consists in applying
Prop.~\ref{prop:Rmlpd-on-all-b-for-a-in-subset} in the case where $\cl J' = \cl J =
\bs \Phi \clip{\cl K}$, and to use
Lemma~\ref{lem:lipschitz-mapping-and-kolmogorov} to bound the Kolmogorov entropy of $\bs
\Phi \clip{\cl K}$ with the one of $\clip{\cl K}$.

Let us first assume $0<\epsilon \leq 1$. In this setting, from Prop.~\ref{prop:Rmlpd-on-fixed-b}, we have for a given $\bs v' \in \clip{\cl K}$, 
\begin{equation}
  \label{eq:tmp-lpd-proof}
\ts \bb P[\forall \bs u \in \cl K: |\scp{\Amap(\bs u)}{\bs \Phi \bs v'} - \scp{\bs \Phi \bs u}{\bs \Phi \bs v'}|\ \leq\ \epsilon\,(1+\delta) \sqrt m\, \|\bs \Phi \bs v'\|]\ \geq 1 - C \exp(- c \epsilon^2 m),  
\end{equation}
provided that
$$
\ts m\ \geq\ C' \epsilon^{-2}\,\cl H(\bs\Phi \clip{\cl K}, c' \epsilon^3\sqrt{m}).
$$ 
The bound~\eqref{eq:tmp-lpd-proof} matches the probability bound required in~\eqref{eq:cond-on-Lpd-in-Rm-on-a} of Prop.~\ref{prop:Rmlpd-on-all-b-for-a-in-subset} with $\nu = \epsilon\,(1+\delta) $. In this setting, Prop.~\ref{prop:Rmlpd-on-all-b-for-a-in-subset} requires in~\eqref{eq:cond-main-result-embed} to have $m \geq C \epsilon^{-2} \cl H(\bs \Phi \clip{\cl K}, \epsilon \sqrt m)$; all conditions on $m$ are thus guaranteed if
$$
m \geq C \epsilon^{-2}\,\cl H(\bs \Phi \clip{\cl K}, c \epsilon^3\sqrt m).
$$

However, if $\tinv{\sqrt m}\bs \Phi$ is $(\epsilon^3, L)$-Lipschitz continuous, Lemma~\ref{lem:lipschitz-mapping-and-kolmogorov}  justifies~\eqref{eq:cond-main-result-embed-unif} with
$$
\ts \cl H(\bs \Phi \clip{\cl K}, c\, \epsilon^3\sqrt m) \leq \cl H(\clip{\cl K}, c L^{-1} \epsilon^3 ).
$$

Consequently, by union bound over the events covered by Prop.~\ref{prop:Rmlpd-on-fixed-b} and Prop.~\ref{prop:Rmlpd-on-all-b-for-a-in-subset}, we have that~\eqref{eq:Lpd-in-Rm-uniform} in 
Prop.~\ref{prop:Rmlpd-on-all-b-for-a-in-subset} holds with
$\bs a = \bs \Phi \bs u$ and $\bs b = \bs \Phi \bs v$, for all $\bs u,
\bs v \in \clip{\cl K}$, and with
probability exceeding $1- C \exp(- c m \epsilon^2)$. Since $\|\bs \Phi \clip{\cl K}\| \leq \sqrt m L \|\clip{\cl K}\| \leq \sqrt mL$ from the assumed Lipschitz continuity of $\bs \Phi$ (see Def.~\ref{def:lipschitz}), this means that
$$
\ts |\scp{\Amap(\bs \Phi \bs u)}{\bs \Phi \bs v} - \scp{\bs \Phi\bs u}{\bs \Phi\bs v}|\ \leq\ \epsilon\,(1+\delta) \sqrt m \|\bs \Phi \clip{\cl K}\| + \epsilon \delta m \leq L \epsilon\,(1+\delta) m + \epsilon \delta m \leq 2L\epsilon\,(1+\delta)m. 
$$
Finally, \eqref{eq:Lpd-in-Rn} is obtained by applying the rescaling $\epsilon \to \epsilon/2$ above, the deterministic bound \eqref{eq:rem-delta-0}, \ie for all $\bs u, \bs v \in \cl K \cap \bb B^n$,
$$
\ts |\langle \Amap(\bs u), \bs \Phi \bs v\rangle-\langle \bs \Phi\bs u,\bs \Phi\bs v\rangle| \leq \delta \sqrt m \|\bs \Phi \bs v\| \leq L \delta m,
$$
and observing that $\delta < (1+\delta)\epsilon$ if $\epsilon > 1$ (\ie $\epsilon > 1/2$ before the rescaling). 

\end{proof}

\section{PBP error decay analysis in a few special cases}
\label{sec:specialcasePBP}

We here demonstrate the results announced at the end of the Introduction, \ie the rate of the PBP error decay for an increasing number of measurements in the estimation of low-complexity signals sensed by the quantized random mapping defined in~\eqref{eq:Uniform-dithered-quantization}. This is achieved by combining the general guarantees of Sec.~\ref{sec:PBP-gen-error} with conditions ensuring, \whp, the RIP of a random sensing matrix $\tinv{\sqrt m}\bs \Phi$ generated by a RIP matrix distribution, the (L)LPD property of $\Amap$, and the Lipschitz continuity of the random matrix $\tinv{\sqrt m}\bs \Phi$, as imposed by Prop.~\ref{prop:main-result-embed} and Cor.~\ref{cor:llpd-noisy-linear}.

\subsection{Conditions for Lipschitz continuity} 
\label{sec:cond-lipsch-cont}

As explained in Sec.~\ref{sec:LPD-noisy-linear-map}, when $\cl K$ is a cone and if $\tinv{\sqrt m}\bs\Phi$ respects the RIP$(\cl K-\cl K,\epsilon)$ for some $0<\epsilon\leq 1$, this mapping $\tinv{\sqrt m}\bs\Phi$ is trivially $(\eta, \sqrt 2)$-Lipschitz continuous for any $\eta > 0$ from an easy use of~\eqref{eq:RIP-cone}. The next proposition shows that this remains true if $\cl K$ is bounded and star-shaped (\eg if $\cl K$ is a bounded, convex and symmetric set), and if $\bs \Phi \in \bb R^{m \times n}$ is generated by a RIP matrix distribution.

\begin{proposition}
  \label{prop:Lipschitz-bounded-cvxset} Let $\cl K$ be a bounded, star-shaped set with radius $\|\cl K\|=1$. Given $\eta >0$ and a probability of failure $0<\zeta<1$, if the random matrix $\tinv{\sqrt m}\bs \Phi \in \bb R^{m \times n}$ is drawn from a RIP matrix distribution $\DRIP$ over the \emph{local set} $\cl K_{\loc}^{(\eta)} := (\cl K-\cl K) \cap \eta \bb B^n$, then, provided that
\begin{equation}
  \label{eq:Lipschitz-bounded-cvxset}
  \ts m\ \gtrsim\ \min(1,\eta)^{-2}\, w(\cl K)^2\, \cl P_{\log}(m, n, 1, 1/\zeta),  
\end{equation}
$\tinv{\sqrt m}\bs\Phi$ is $(\eta, \sqrt 2)$-Lipschitz continuous over $\cl K$ with probability exceeding $1-\zeta$, \ie $\|\bs \Phi \cl K\| \leq \sqrt{2m}$ and 
\begin{equation}
  \label{eq:Lipschitz-for-subGauss}
\ts \|\bs \Phi \bs u\| \leq \eta \sqrt{2m}, \quad\forall \bs u \in \cl K_{\loc}^{(\eta)}.  
\end{equation}
\end{proposition}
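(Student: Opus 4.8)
The plan is to apply the definition of a RIP matrix distribution (Def.~\ref{def:RIPGen}) twice for the \emph{same} random matrix, once to $\cl K$ and once to its local set $\cl K_{\loc}^{(\eta)}$, and then to simply read off the two Lipschitz bounds from the resulting restricted isometries with distortion $\epsilon=1$. Throughout, since the proposition concerns a bounded star-shaped $\cl K$ with $\|\cl K\|=1$, we have $\clip{\cl K}=\cl K$.

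First I would record the elementary geometry of $\cl K_{\loc}^{(\eta)}=(\cl K-\cl K)\cap\eta\bb B^n$. It is bounded, so $\clip{\cl K_{\loc}^{(\eta)}}=\cl K_{\loc}^{(\eta)}$; by construction $\|\cl K_{\loc}^{(\eta)}\|\le\eta$; and by monotonicity and subadditivity of the Gaussian mean width under Minkowski sums, $w(\cl K_{\loc}^{(\eta)})\le w(\cl K-\cl K)\le 2\,w(\cl K)$. For the matching lower bound on the radius, note that a star-shaped $\cl K$ with $\|\cl K\|=1$ contains points of every norm in $[0,1)$; any such point $\bs w$ with $\|\bs w\|=\rho<\min(1,\eta)$ obeys $\bs w=\bs w-\bs 0\in\cl K-\cl K$ and $\|\bs w\|<\eta$, hence $\bs w\in\cl K_{\loc}^{(\eta)}$, so $\|\cl K_{\loc}^{(\eta)}\|\ge\min(1,\eta)$ and $\|\cl K_{\loc}^{(\eta)}\|^{-2}\le\min(1,\eta)^{-2}$.

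Next, since $\cl K_{\loc}^{(\eta)}$ is a local set of $\cl K$, part \emph{(ii)} of Def.~\ref{def:RIPGen} gives $\DRIP(\cl K_{\loc}^{(\eta)})=\DRIP(\cl K)$, so the single matrix $\tinv{\sqrt m}\bs\Phi$ of the statement is distributed as $\DRIP(\cl K)$ and hence, by part \emph{(i)} (with \eqref{eq:gen-RIP-lsc-cond} instantiated with the mean width of the set at hand, as discussed after Def.~\ref{def:RIPGen}), can be used to establish the RIP over \emph{either} $\cl K$ or $\cl K_{\loc}^{(\eta)}$. Applying \eqref{eq:gen-RIP-lsc-cond} with $\epsilon=1$, failure probability $\zeta/2$, $\|\cl K\|=1$, gives RIP$(\cl K,1)$ with probability $\ge 1-\zeta/2$ once $m\gtrsim w(\cl K)^2\,\cl P_{\log}(m,n,1,2/\zeta)$; applying it with $\epsilon=1$, failure probability $\zeta/2$, to $\cl K_{\loc}^{(\eta)}$ and using the two bounds above gives RIP$(\cl K_{\loc}^{(\eta)},1)$ with probability $\ge 1-\zeta/2$ once $m\gtrsim\min(1,\eta)^{-2}w(\cl K)^2\,\cl P_{\log}(m,n,1,2/\zeta)$. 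A union bound then shows that, under \eqref{eq:Lipschitz-bounded-cvxset} (absorbing the harmless factor $2$ in $\zeta$ into the constants/polylog), both restricted isometries hold simultaneously with probability $\ge 1-\zeta$.

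Finally I would translate these via Def.~\ref{def:rip-def}: from RIP$(\cl K,1)$ in the form \eqref{eq:RIP}, $\tinv m\|\bs\Phi\bs u\|^2\le\|\bs u\|^2+\|\cl K\|^2\le 2$ for all $\bs u\in\cl K$, i.e.~$\|\bs\Phi\cl K\|\le\sqrt{2m}$; from RIP$(\cl K_{\loc}^{(\eta)},1)$, using $\|\bs u\|\le\|\cl K_{\loc}^{(\eta)}\|\le\eta$, $\tinv m\|\bs\Phi\bs u\|^2\le\|\bs u\|^2+\|\cl K_{\loc}^{(\eta)}\|^2\le 2\eta^2$ for all $\bs u\in\cl K_{\loc}^{(\eta)}$, which is exactly \eqref{eq:Lipschitz-for-subGauss}. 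Comparing with Def.~\ref{def:lipschitz} ($\bs\Gamma=\tinv{\sqrt m}\bs\Phi$, $L=\sqrt 2$), these are precisely the two inequalities defining $(\eta,\sqrt 2)$-Lipschitz continuity over $\cl K$. The only delicate point is the bookkeeping on $m$: because the radius of $\cl K_{\loc}^{(\eta)}$ degrades to $\min(1,\eta)$, it is the factor $\|\clip{\cl K}\|^{-2}$ in \eqref{eq:gen-RIP-lsc-cond} that produces the $\min(1,\eta)^{-2}$ in \eqref{eq:Lipschitz-bounded-cvxset}, and one must keep $\epsilon=1$ fixed (not shrink it) so that the Lipschitz constant stays equal to $\sqrt 2$.
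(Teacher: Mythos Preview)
Your proposal is correct and follows essentially the same approach as the paper: apply Def.~\ref{def:RIPGen} with $\epsilon=1$ to obtain the RIP over the local set $\cl K_{\loc}^{(\eta)}$ (using the bounds $w(\cl K_{\loc}^{(\eta)})\le 2w(\cl K)$ and $\|\cl K_{\loc}^{(\eta)}\|\ge\min(1,\eta)$), read off the Lipschitz inequality from \eqref{eq:RIP}, and handle the bound on $\|\bs\Phi\cl K\|$ separately via a union bound. The only cosmetic difference is that the paper obtains $\|\bs\Phi\cl K\|\le\sqrt{2m}$ by invoking the RIP over $\cl K_{\loc}^{(1)}\supset\cl K$ (i.e.,\ reusing the local-set argument at $\eta=1$), whereas you apply the RIP directly to $\cl K$; both are valid and lead to the same requirement on~$m$.
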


\begin{proof}
The proof is easily established from the context of Def.~\ref{def:RIPGen}. If $\tinv{\sqrt m} \bs \Phi$ is drawn from a RIP matrix distribution $\DRIP(\cl K)$, then, taking $\epsilon=1$ and replacing $\cl K$ by its local set $\cl K^{(\eta)}_{\loc}$ in~\eqref{eq:gen-RIP-lsc-cond}, 
we have 
$$
\ts \big |\frac{1}{m}\|\bs\Phi \bs u\|^2 - \|\bs u\|^2 \big |\ \leq\ \|\cl K^{(\eta)}_{\loc}\|^2,\quad\forall \bs u \in \cl K^{(\eta)}_{\loc},
$$
with probability exceeding $1-\zeta$, provided $m \gtrsim \|\cl K^{(\eta)}_{\loc}\|^{-2}\,w(\cl K_{\loc}^{(\eta)})^2 \,\cl P_{\log}(m,n,1,1/\zeta)$.

The expressions \eqref{eq:Lipschitz-bounded-cvxset} and \eqref{eq:Lipschitz-for-subGauss} are then obtained from the following simplifications. Since $\cl K_{\loc}^{(\eta)} \subset \cl K - \cl K$, we observe that $w(\cl K_{\loc}^{(\eta)}) \leq w(\cl K - \cl K) \leq 2 w(\cl K)$, from the monotonicity and positive homogeneity of the Gaussian mean width. Moreover, $\|\bs u\| \leq \|\cl K_{\loc}^{(\eta)}\| \leq \eta$ for all $\bs u \in \cl K_{\loc}^{(\eta)}$, and $\|\cl K - \cl K \| \geq \|\cl K\| = 1$ if $\bs 0 \in \cl K$. Therefore, $\|\cl K_{\loc}^{(\eta)}\| = \min(\eta, \|\cl K - \cl K\|) \geq \min(\eta, 1)$ by analyzing the value of $\|\cl K_{\loc}^{(\eta)}\|$ for the following three possible cases: $(\cl K - \cl K) \subset \eta \bb B^n$, $(\cl K - \cl K) \supset \eta \bb B^n$ and $(\cl K - \cl K) \cap (\eta \bb B^n)^\compl \neq \emptyset$, where, in this last possibility, $\|(\cl K - \cl K) \cap \eta \bb B^n\| = \eta$ since $\cl K$ is star-shaped.

Finally, $\cl K$ being star-shaped, $\bs 0 \in \cl K$ and ${(\cl K - \cl K)\cap \bb B^n} \supset \cl K$, and setting above $\eta = 1$ provides $\|\bs \Phi \cl K\| \leq \sqrt{2m}$ with the same probability. The final result follows by union bound over this last event and the case~$\eta \neq 1$ (from a simple rescaling of $\zeta$). 
\end{proof}

\subsection{Conditions for (L)LPD} 
\label{sec:conditions-llpd}

We now characterize the conditions ensuring the (L)LPD of the quantized random mapping~$\Amap$ for the two categories of low-complexity sets considered in this work, \ie for structured sets and bounded star-shaped sets (see Sec.~\ref{sec:low-complex-space}).

\paragraph{A.\hspace{3mm} Structured sets}

\begin{proposition}[(L)LPD of $\Amap$ over structured sets and for RIP matrix distributions]
\label{prop:llpd-lsre}
Given a structured subset $\cl K \subset \bb R^n$ for which~\eqref{eq:structured-subset-KE-bound} holds, a distortion $0<\epsilon \leq 1$, a quantization resolution $\delta > 0$, a random matrix $\tinv{\sqrt m} \bs \Phi$ drawn from a RIP matrix distribution $\DRIP(\cl K)$, and a random dither $\bs \xi \sim \cl U^m([0,\delta])$, 
the random mapping $\Amap(\cdot) = \cl Q(\bs \Phi \cdot + \bs \xi)$ respects the LPD$(\cl
K, \bs\Phi, \epsilon(1+\delta))$, or the L-LPD$(\cl K, \bs\Phi, \bs x, \delta\epsilon)$ for a given $\bs x \in \bb R^n$, with probability exceeding $1-\zeta$ provided that
\begin{align}
\label{eq:condm-lsre-llpd}
\ts m&\ts\ \gtrsim\ \epsilon^{-2}\, w( (\cl K-\cl K) \cap \bb B^n)^2\,\log(1 + \frac{1}{\kappa(\epsilon)})\,\cl P_{\log}(m,n,1/\epsilon,1/\zeta), 
\end{align}
where $\kappa(\epsilon)$ equals $\epsilon^3$ for the LPD and $\epsilon$ for the L-LPD property, and $\cl P_{\log}$ is a polylogarithmic function specified by the RIP matrix distribution.  
\end{proposition}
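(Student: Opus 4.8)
The plan is to derive Proposition~\ref{prop:llpd-lsre} by specializing the two general results already established: Corollary~\ref{cor:llpd-noisy-linear} for the L-LPD and Proposition~\ref{prop:main-result-embed} for the (uniform) LPD. In both cases the only work is to verify their hypotheses---namely the Lipschitz continuity of $\tinv{\sqrt m}\bs\Phi$ and the sample-complexity bound written in terms of a Kolmogorov entropy---and then to translate the Kolmogorov entropy into the Gaussian mean width using the structured-set bound~\eqref{eq:structured-subset-KE-bound}, absorbing the remaining $\log(1+1/\kappa(\epsilon))$ factor into $\cl P_{\log}$.

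First I would handle the \emph{Lipschitz continuity}. Since $\cl K$ is structured, hence a cone, its clipping is $\clip{\cl K} = \cl K\cap\bb B^n$ and $\cl K - \cl K$ is again a cone. By hypothesis $\tinv{\sqrt m}\bs\Phi$ is drawn from $\DRIP(\cl K)$, and by point \emph{(ii)} of Def.~\ref{def:RIPGen} the same distribution generates RIP matrices over $\cl K - \cl K$ (a multiple of the cone $\cl K$, up to the conic identification used throughout Sec.~\ref{sec:cond-ensur-rip}). Taking distortion $\epsilon' = 1$ in~\eqref{eq:gen-RIP-lsc-cond} applied to $\cl K - \cl K$, the matrix $\tinv{\sqrt m}\bs\Phi$ respects RIP$(\cl K - \cl K, 1)$ with probability at least $1-\zeta/2$, provided $m \gtrsim w((\cl K - \cl K)\cap\bb B^n)^2\,\cl P_{\log}(m,n,1,2/\zeta)$; and, as observed just before Sec.~\ref{sec:cond-lipsch-cont}, RIP$(\cl K - \cl K,1)$ over a cone immediately gives $(\eta,\sqrt 2)$-Lipschitz continuity over $\cl K$ for \emph{every} $\eta>0$ (via~\eqref{eq:RIP-cone}: $\tinv{\sqrt m}\|\bs\Phi(\bs u - \bs v)\|\le\sqrt 2\,\|\bs u - \bs v\|$). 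In particular it is $(\epsilon, \sqrt2)$- and $(\epsilon^3,\sqrt2)$-Lipschitz continuous, which is exactly what Cor.~\ref{cor:llpd-noisy-linear} and Prop.~\ref{prop:main-result-embed} require, with $L = \sqrt 2$.

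Next I would invoke the two LPD results on this good event. For the \textbf{L-LPD}: Cor.~\ref{cor:llpd-noisy-linear} with $L=\sqrt2$ yields L-LPD$(\cl K,\bs\Phi,\bs x,\delta\epsilon)$ with probability $\ge 1 - C\exp(-c'\epsilon^2 m)$ as soon as $m \gtrsim \epsilon^{-2}\cl H(\clip{\cl K}, c\epsilon)$. For the \textbf{LPD}: Prop.~\ref{prop:main-result-embed} with $L=\sqrt2$ yields LPD$(\cl K,\bs\Phi,\sqrt2\,\min(\epsilon(1+\delta),\delta))$ with the same type of probability as soon as $m \gtrsim \epsilon^{-2}\cl H(\clip{\cl K}, c\epsilon^3)$ (the statement claims distortion $\epsilon(1+\delta)$, so a harmless rescaling $\epsilon\to\epsilon/\sqrt2$ absorbs the constant $\sqrt2$, affecting only hidden constants). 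Now apply the structured-set bound~\eqref{eq:structured-subset-KE-bound}: $\cl H(\cl K\cap\bb B^n, \eta) \lesssim w(\cl K\cap\bb B^n)^2\log(1+1/\eta)$, and since $w(\cl K\cap\bb B^n) \le w((\cl K - \cl K)\cap\bb B^n)$ by monotonicity of the Gaussian mean width, the requirement on $m$ becomes $m \gtrsim \epsilon^{-2} w((\cl K - \cl K)\cap\bb B^n)^2 \log(1+1/\kappa(\epsilon))$ with $\kappa(\epsilon) = c\epsilon$ for L-LPD and $\kappa(\epsilon) = c\epsilon^3$ for LPD---matching~\eqref{eq:condm-lsre-llpd} once the constant inside $\kappa$ is absorbed. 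Finally, to convert the exponential tail $C\exp(-c'\epsilon^2 m)$ into a failure probability below $\zeta/2$, it suffices that $\epsilon^2 m \gtrsim \log(1/\zeta)$, which is subsumed by inserting a factor $\log(1/\zeta)$ into $\cl P_{\log}$; a union bound over this event and the Lipschitz event (each of probability $\ge 1-\zeta/2$) gives the stated probability $1-\zeta$, and the two $\cl P_{\log}$ factors (one from $\DRIP$, one carrying $\log(1/\epsilon)\log(1/\zeta)$) combine into a single polylogarithmic factor.

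\textbf{Main obstacle.} The genuinely delicate point is bookkeeping the interplay between the RIP distortion used for Lipschitz continuity (fixed at $1$) and the LPD distortion $\epsilon$: one must be careful that point \emph{(ii)} of Def.~\ref{def:RIPGen} is invoked at the right set ($\cl K - \cl K$, or $\cl K_{\loc}^{(\eta)}$ for the star-shaped variants) and that the $\cl P_{\log}$ appearing in~\eqref{eq:gen-RIP-lsc-cond} for that set is compatible with the $\cl P_{\log}$ in~\eqref{eq:condm-lsre-llpd}. There is also a mild subtlety in that Prop.~\ref{prop:main-result-embed} asks for $(\epsilon^3,L)$-Lipschitz continuity while Cor.~\ref{cor:llpd-noisy-linear} asks only for $(\epsilon,L)$-Lipschitz continuity; both are delivered for free here because the cone structure makes the Lipschitz property radius-independent, but that observation is the linchpin that makes the whole argument short. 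Everything else is routine rescaling of constants and union bounds.
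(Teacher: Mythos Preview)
Your proposal is correct and follows essentially the same route as the paper's proof: verify the Lipschitz continuity of $\tinv{\sqrt m}\bs\Phi$ via the RIP over the cone $\cl K-\cl K$ (the paper uses distortion $\epsilon\le 1$ rather than your $\epsilon'=1$, but both yield the $\sqrt 2$ constant), then invoke Cor.~\ref{cor:llpd-noisy-linear} and Prop.~\ref{prop:main-result-embed}, bound the Kolmogorov entropy by~\eqref{eq:structured-subset-KE-bound}, and use $w(\cl K\cap\bb B^n)\le w((\cl K-\cl K)\cap\bb B^n)$. Your probability bookkeeping (splitting $\zeta$ in half and a union bound) is slightly more explicit than the paper's, but the argument is the same.
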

\begin{proof}
For the function $\kappa = \kappa(\epsilon) \leq \epsilon \leq 1$ specified above, to establish the (L)LPD property of $\Amap$, we simply have to show that 
the requirements of Prop.~\ref{prop:main-result-embed} and Cor.~\ref{cor:llpd-noisy-linear} are verified if~\eqref{eq:condm-lsre-llpd} holds in the context of Prop.~\ref{prop:llpd-lsre}. 

First, both Prop.~\ref{prop:main-result-embed} and Cor.~\ref{cor:llpd-noisy-linear} impose that $\tinv{\sqrt m} \bs \Phi$ is a $(\kappa, L)$-Lipschitz continuous embedding for some $L>0$. 
From Def.~\ref{def:RIPGen} and \eqref{eq:gen-RIP-lsc-cond} applied to $(\cl K-\cl K) \cap \bb B^n$ with $\|(\cl K-\cl K) \cap \bb B^n\| = 1$, if~\eqref{eq:condm-lsre-llpd} holds, then $\tinv{\sqrt m} \bs \Phi$ satisfies the RIP$(\cl K - \cl K, \epsilon)$ with probability exceeding $1-\zeta$ since $\tinv{\sqrt m} \bs \Phi \sim \DRIP(\cl K)$. From~\eqref{eq:RIP-cone} and the considerations at the beginning of Sec.~\ref{sec:cond-lipsch-cont}, this mapping is thus $(\kappa, \sqrt 2)$-Lipschitz continuous since $0< \epsilon \leq 1$.  

Second, we note that both~\eqref{eq:cond-main-result-embed-unif} and~\eqref{eq:llpd-noisy-linear-cond-m} require $m \gtrsim \epsilon^{-2}\, \cl H(\cl K\cap \bb B^n,\, c \kappa)$. 
However, for a structured set $\cl K$, we have 
$$
\ts \cl H(\cl K\cap \bb B^n,\, c \kappa)\ \lesssim\ w(\cl K\cap \bb B^n)^2\,\log(1+\frac{1}{\kappa}).
$$
Since $\cl K$ is a cone, $\bs 0 \in \cl K$, $\cl K \subset \cl K - \cl K$, $w(\cl K\cap \bb B^n) \leq w((\cl K - \cl K)\cap \bb B^n)$, and $\cl P_{\log} \geq 1$ (by assumption, see Def.~\ref{def:RIPGen}), we observe that~\eqref{eq:cond-main-result-embed-unif} and~\eqref{eq:llpd-noisy-linear-cond-m} are satisfied if~\eqref{eq:condm-lsre-llpd} holds. A painless rescaling of $\epsilon$ in the LPD case concludes the proof.
\end{proof}

\paragraph{B.\hspace{3mm}Bounded, star-shaped sets:} For an arbitrary bounded, star-shaped set $\cl K \subset \bb B^n$, any random matrix drawn from a RIP matrix distribution $\DRIP(\cl K)$ verifies \whp the RIP$(\cl K, \epsilon)$ with controlled distortion $\epsilon>0$ provided $\epsilon^2 m$ is larger than $w(\cl K)^2$, up to a polylogarithmic factor (see Sec.~\ref{sec:cond-ensur-rip}). Moreover, thanks to Prop.~\ref{prop:Lipschitz-bounded-cvxset}, the same random constructions are shown to be $(\eta, \sqrt 2)$-Lipschitz continuous with high probability as soon as $\eta^2 m$ is also larger than $w(\cl K)^2$, up to a polylogarithmic factor. Therefore, following Prop.~\ref{prop:main-result-embed} or Cor.~\ref{cor:llpd-noisy-linear}, it remains to bound the Kolmogorov entropy of $\cl K$ in order to determine when the quantized random mapping $\Amap$ satisfies the (L)LPD, \ie using Sudakov's inequality \eqref{eq:sudakov}. 

\begin{proposition}[(L)LPD of $\Amap$ over bounded, star-shaped set and for RIP matrix distributions]
\label{prop:llpd-lsre-convex}
Given a bounded, star-shaped set $\cl K \subset
\bb B^n$ with $\|\cl K\|=1$, a distortion $0<\epsilon\leq 1$, a quantization resolution $\delta > 0$, a matrix $\tinv{\sqrt m} \bs \Phi \in \bb R^{m \times n}$ drawn from a RIP matrix distribution $\DRIP(\cl K)$, and a random dither $\bs \xi \sim \cl U^m([0,\delta])$,
the random mapping $\Amap(\cdot) = \cl Q(\bs \Phi \cdot + \bs \xi)$ respects the LPD$(\cl K, \bs\Phi, \epsilon(1+\delta))$, or the L-LPD$(\cl K, \bs \Phi, \bs x, \epsilon\delta)$ for a given $\bs x \in \bb R^n$, with probability exceeding $1-\zeta$ provided the following requirement is satisfied 
\begin{align}
\label{eq:condm-llpd-sgsors-convex}
\ts m&\ts \gtrsim\ \tinv{(\epsilon\kappa)^{2}}\, w(\cl K)^2\, \cl P_{\log}(m,n,1,1/\zeta),  
\end{align}
where $\kappa(\epsilon)$ is defined in Prop.~\ref{prop:llpd-lsre}.  
\end{proposition}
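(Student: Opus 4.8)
The plan is to mirror the proof of Prop.~\ref{prop:llpd-lsre}, replacing the two ingredients that were specific to structured (conic) sets by their star-shaped analogues: the Lipschitz continuity of $\tinv{\sqrt m}\bs\Phi$ will now be supplied by Prop.~\ref{prop:Lipschitz-bounded-cvxset} rather than by the RIP cone bound, and the Kolmogorov entropy of $\cl K$ will be controlled through Sudakov's inequality~\eqref{eq:sudakov} rather than through the structured-set bound~\eqref{eq:structured-subset-KE-bound}. Write $\kappa = \kappa(\epsilon)$ (so $\kappa = \epsilon^3$ in the LPD case and $\kappa = \epsilon$ in the L-LPD case), and note $\clip{\cl K} = \cl K$ since $\cl K$ is bounded with $\|\cl K\|=1$. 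I would then only need to check that, under~\eqref{eq:condm-llpd-sgsors-convex}, the hypotheses of Prop.~\ref{prop:main-result-embed} (for the LPD) and of Cor.~\ref{cor:llpd-noisy-linear} (for the L-LPD) hold: namely that $\tinv{\sqrt m}\bs\Phi$ is $(\kappa, L)$-Lipschitz continuous over $\cl K$ for some $L>0$, and that $m \gtrsim \epsilon^{-2}\,\cl H(\cl K, c L^{-1}\kappa)$.

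For the Lipschitz part, I would apply Prop.~\ref{prop:Lipschitz-bounded-cvxset} with $\eta = \kappa$. Since $\tinv{\sqrt m}\bs\Phi \sim \DRIP(\cl K)$ and, by point~(ii) of Def.~\ref{def:RIPGen}, $\DRIP(\cl K) = \DRIP(\cl K_{\loc}^{(\kappa)})$, and since $\kappa \leq \epsilon \leq 1$ gives $\min(1,\kappa) = \kappa$, the requirement~\eqref{eq:Lipschitz-bounded-cvxset} becomes $m \gtrsim \kappa^{-2}\, w(\cl K)^2\, \cl P_{\log}(m,n,1,1/\zeta)$; under this condition $\tinv{\sqrt m}\bs\Phi$ is $(\kappa, \sqrt 2)$-Lipschitz continuous over $\cl K$ with probability at least $1 - \zeta/2$, so one may take $L = \sqrt 2$. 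For the Kolmogorov part, Sudakov's inequality~\eqref{eq:sudakov} gives $\cl H(\cl K, c L^{-1}\kappa) \lesssim (L^{-1}\kappa)^{-2}\, w(\cl K)^2 \asymp \kappa^{-2}\, w(\cl K)^2$, so that $\epsilon^{-2}\,\cl H(\cl K, c L^{-1}\kappa) \lesssim (\epsilon\kappa)^{-2}\, w(\cl K)^2$, which is subsumed by~\eqref{eq:condm-llpd-sgsors-convex}; since $\epsilon \leq 1$, this Kolmogorov requirement dominates the Lipschitz one, so~\eqref{eq:condm-llpd-sgsors-convex} (using $\cl P_{\log}\geq 1$) implies both.

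It then remains to invoke Prop.~\ref{prop:main-result-embed}, which yields the LPD$(\cl K, \bs\Phi, \sqrt 2\,\min(\epsilon(1+\delta),\delta))$; a harmless rescaling $\epsilon \to \epsilon/\sqrt 2$ (affecting only the hidden constant in the requirement on $m$) together with $\min(\epsilon(1+\delta),\delta) \leq \epsilon(1+\delta)$ gives the stated LPD$(\cl K, \bs\Phi, \epsilon(1+\delta))$; for a fixed $\bs x$, Cor.~\ref{cor:llpd-noisy-linear} analogously gives the L-LPD$(\cl K, \bs\Phi, \bs x, \epsilon\delta)$. Finally I would convert the exponential tail $C'\exp(-c'\epsilon^2 m)$ of Prop.~\ref{prop:main-result-embed}/Cor.~\ref{cor:llpd-noisy-linear} into $\zeta/2$ using~\eqref{eq:condm-llpd-sgsors-convex} — which forces $\epsilon^2 m \gtrsim \log(1/\zeta)$ because $\kappa^{-2} w(\cl K)^2 \gtrsim 1$ and the dependence on $1/\zeta$ is carried by $\cl P_{\log}$ — and take a union bound with the Lipschitz event, producing the overall probability $1-\zeta$.

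I do not expect a genuine obstacle here; the one point to get right rather than a difficulty is the matching of radii: the downstream Lipschitz radius ($\kappa = \epsilon^3$ in Prop.~\ref{prop:main-result-embed}, $\kappa = \epsilon$ in Cor.~\ref{cor:llpd-noisy-linear}) must be the $\eta$ fed into Prop.~\ref{prop:Lipschitz-bounded-cvxset}, and one must verify that Sudakov's $\eta^{-2}$ scaling is exactly what turns that radius into the extra $\kappa^{-2}$ factor of~\eqref{eq:condm-llpd-sgsors-convex}. For structured sets the same small radius only cost a $\log(1+1/\kappa)$ factor, which is precisely why~\eqref{eq:condm-lsre-llpd} and~\eqref{eq:condm-llpd-sgsors-convex} differ in the way they depend on $\kappa$.
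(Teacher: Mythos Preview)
Your proposal is correct and follows essentially the same route as the paper: Lipschitz continuity via Prop.~\ref{prop:Lipschitz-bounded-cvxset} with $\eta=\kappa$, Kolmogorov entropy via Sudakov's inequality~\eqref{eq:sudakov}, then invoking Prop.~\ref{prop:main-result-embed} and Cor.~\ref{cor:llpd-noisy-linear}. You are in fact more careful than the paper about the rescaling that absorbs $L=\sqrt 2$ and about splitting the failure probability for the union bound.
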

\begin{proof}
To establish the (L)LPD property of $\Amap$, we simply verify that the requirements of Prop.~\ref{prop:main-result-embed} and Cor.~\ref{cor:llpd-noisy-linear} hold if~\eqref{eq:condm-llpd-sgsors-convex} is respected in the context of Prop.~\ref{prop:llpd-lsre-convex}. 
First, both Prop.~\ref{prop:main-result-embed} and Cor.~\ref{cor:llpd-noisy-linear} impose that $\tinv{\sqrt m} \bs \Phi$ is $(\kappa, L)$-Lipschitz continuous over $\cl K$ for some $L >0$. 
Since $\tinv{\sqrt m} \bs \Phi \sim \DRIP(\cl K)$, from Prop.~\ref{prop:Lipschitz-bounded-cvxset}, this map is $(\kappa, \sqrt 2)$-Lipschitz continuous with probability exceeding $1-\zeta$ if 
$$
\ts m\ \gtrsim\ \kappa^{-2}\, w(\cl K)^2\, \cl P_{\log}(m, n, 1, 1/\zeta).  
$$
Second, since $\cl K \subset \bb B^n$, we note that both~\eqref{eq:cond-main-result-embed-unif} and~\eqref{eq:llpd-noisy-linear-cond-m} require $m \gtrsim \epsilon^{-2}\, \cl H(\cl K,\, c \kappa)$. 
We conclude the proof by observing that \eqref{eq:condm-llpd-sgsors-convex} involves both~\eqref{eq:cond-main-result-embed-unif} and~\eqref{eq:llpd-noisy-linear-cond-m} since $\ts \cl H(\cl K,\, c \kappa)\ \lesssim\ \kappa^{-2} w(\cl K)$ by Sudakov's inequality, $\epsilon \leq 1$, and $\cl P_{\log} \geq 1$ by assumption.
\end{proof}

\subsection{Decay rate of the PBP reconstruction error}
\label{sec:PBP-gen-error-decay}

Thanks to the three previous sections, we are now ready to analyze the rate of the PBP error decay in the context of the QCS model \eqref{eq:Uniform-dithered-quantization} when the number of measurements increases. Following the general analysis of Sec.~\ref{sec:PBP-gen-error}, we consider three types of low-complexity sets: the union of low-dimensional subspaces (ULS), low-rank matrices, and sets that are bounded, convex and symmetric; the first two cases being structured.

\paragraph{A.\hspace{3mm} ULS and low-rank models:} In the case where $\bs x \in \cl K \cap \bb B^n$, with a structured low-complexity set $\cl K \subset \bb R^n$ being either a ULS or the (vectorized) set $\lrset{r}$ of rank-$r$ matrices, Theorems~\ref{thm:PBP-Union-Subspace} and~\ref{thm:PBP-Low-Rank} establish that the reconstruction error of the estimate $\hat{\bs x}$ provided by PBP from the QCS observations of $\bs x$ is bounded as $\|\bs x - \hat{\bs x}\|\ \leq\ 2(\epsilon + \nu)$. Therefore, if $\tinv{\sqrt m}\bs \Phi$ respects the RIP$(\cl K - \cl K, \bs\Phi, \epsilon)$ for some $0<\epsilon\leq 1$, and if $\Amap(\cdot) = \cl Q(\bs \Phi \cdot + \bs \xi)$ satisfies the (L)LPD$(\cl K - \cl K, \bs\Phi, \nu)$, with $\nu = \epsilon\,(1+\delta)$ for the LPD and $\nu = \delta\epsilon$ for the L-LPD, respectively, we find 
\begin{equation}
  \label{eq:adapted-decay-struct-set}
  \|\bs x - \hat{\bs x}\|\ \leq\ C \epsilon\,(1+ \delta).  
\end{equation}

From Def.~\ref{def:rip-def}, any random matrix $\tinv{\sqrt m}\bs\Phi \sim \DRIP(\cl K)$ (\eg sub-Gaussian random matrix, partial random orthonormal matrix, BOS or SORS) satisfies the RIP$(\cl K - \cl K, \epsilon)$ with probability
exceeding $1-\zeta$ if
\begin{equation}
  \label{eq:cond-tmp-1}
  \ts m\ \gtrsim \epsilon^{-2}\,w( (\cl K - \cl K) \cap \bb B^n)^2\,\cl P_{\log}(m, n, 1/\epsilon, 1/\zeta),
\end{equation}
for some polylogarithmic function $\cl P_{\log}$ depending on $\DRIP$.

Concerning the (L)LPD of $\Amap$, from Prop.~\ref{prop:llpd-lsre}, if 
\begin{equation}
  \label{eq:cond-tmp-2}
\ts m \gtrsim \epsilon^{-2}\, {w}(\cl K^{\splus 4} \cap \bb B^n)^2 \log(1
+ \frac{1}{\kappa}) \cl P_{\log}(m,n, 1/\epsilon, 1/\zeta),   
\end{equation}
with $\kappa$ defined in Prop.~\ref{prop:llpd-lsre}, then the mapping $\Amap$, whose randomness depends on both $\tinv{\sqrt m}\bs \Phi \sim \DRIP(\cl K)$ and the random dither $\bs \xi \sim \cl U^m([0, \delta])$, satisfies the LPD$(\cl K-\cl K, \bs\Phi, \nu)$ or the L-LPD$(\cl K-\cl K, \bs\Phi, \bs x, \nu)$ with probability exceeding $1-\zeta$.

Consequently, since $\cl K - \cl K \subset \cl K^{\splus 4}$, \eqref{eq:cond-tmp-2} involves \eqref{eq:cond-tmp-1}, and the event where both the (L)LPD and the RIP hold occurs with probability exceeding $1-2\zeta$ by union bound. 
Taking the minimal $m$ that satisfies \eqref{eq:cond-tmp-2} and inverting the relation between $m$ and $\epsilon$ provide
$$
\epsilon = O\big(m^{-\frac{1}{2}} {w}( \cl K^{\splus 4} \cap \bb B^n)\big),
$$
up to some log factors\footnote{By reasonably assuming that all factors on the right hand side of \eqref{eq:cond-tmp-2} are bigger than one, we get $\epsilon^2 = \Omega(1/m)$ and all log factors depending on $1/\epsilon$ can be replaced by $\log(m)$, up to some constants.} in $m$, $n$, $1/\delta$ and $1/\zeta$. 
\medskip

Therefore, from \eqref{eq:adapted-decay-struct-set}, the value of $\nu$, Theorems~\ref{thm:PBP-Union-Subspace} and~\ref{thm:PBP-Low-Rank}, we can conclude the following fact:
\begin{quote}
\em Let $\cl K$ be either a ULS or a low-rank model. If $\tinv{\sqrt m} \bs \Phi \in \bb R^{m\times n}$ is generated by a RIP matrix distribution and if $\bs \xi \sim \cl U^m([0,\delta])$, then, with high probability and up to some missing log factors, PBP produces for all vectors $\bs x \in \cl K \cap \bb B^n$ an estimate $\hat{\bs x}$ from $\bs y = \Amap(\bs x) = \cl Q(\bs \Phi \bs x + \bs \xi)$ whose error has the following decay rate when $m$ increases:
\begin{equation}
  \label{eq:error-decay-lsre-uls-low-rank}
\ts \|\bs x - \hat{\bs x}\| = O\big(\frac{1 + \delta}{\sqrt m}\, w( \cl K^{\splus 4} \cap \bb B^n)\big).  
\end{equation}
Moreover, up to minor changes in the log factors, this rate is preserved in the non-uniform case, \ie if $\bs x$ is fixed.
\end{quote}
In other words, up to some log factors and up to a multiplicative factor depending on the structure of $\cl K$, the reconstruction error decreases like $O( (1+ \delta)\,m^{-1/2})$ when $m$ increases, non-uniformly or uniformly for all elements of $\cl K \cap \bb B^n$. 

\paragraph{B.\hspace{3mm}Bounded, convex and symmetric sets:} \label{par:BCS-set-decay-rate} If $\bs x$ belongs to a bounded, convex and symmetric set $\cl K \subset \bb B^n$ with $\|\cl K\|=1$, PBP can still estimate this signal from $\bs y = \Amap(\bs x) = \cl Q(\bs\Phi \bs x + \bs \xi)$ with $\bs \xi \sim \cl U^m([0, \delta])$ and $\tinv{\sqrt m}\bs\Phi \sim \DRIP(\cl K)$.  In this case, the reconstruction still decays when $m$ increases. However, as made clear below, the decay rate is slower than that achieved for signals belonging to a structured set. Moreover, there is a significant difference between the rate of the uniform reconstruction guarantees, \ie valid for all low-complexity signals given one couple $(\bs \Phi, \bs \xi)$, and the rate of the non-uniform error decay determined on a fixed signal $\bs x$.  

Actually, if $\tinv{\sqrt m}\bs \Phi$ respects the RIP$(\cl K, \epsilon)$ for some $0<\epsilon\leq 1$, and if the quantized mapping $\Amap$ verifies the LPD$(\cl K, \bs \Phi, \nu)$ or the L-LPD$(\cl K, \bs \Phi, \bs x, \nu)$, with $\nu = \epsilon\,(1+\delta)$ for the LPD or $\nu = \delta\epsilon$ for the L-LPD, Theorem~\ref{thm:PBP-convex-set} shows that
\begin{equation}
  \label{eq:rec-nu-eps-tmp}
  \|\bs x - \hat{\bs x}\|\ \leq\ (4 \epsilon + 2\nu)^{1/2}\ \leq\ C \big(\epsilon\,(1+\delta)\big)^{1/2},
\end{equation}
with the PBP estimate $\hat{\bs x}$. 

Concerning the RIP, if $\tinv{\sqrt m}\bs \Phi$ is generated by a RIP matrix distribution $\DRIP$ (See Def.~\ref{def:RIPGen}) with 
\begin{equation}\label{eq:RIP-RIPGen}
\ts m\ \gtrsim \epsilon^{-2}\,w(\cl K)^2\,\cl P_{\log}(m, n, 1/\epsilon, 1/\zeta),
\end{equation}
for some polylogarithmic function $\cl P_{\log}$ depending on $\DRIP$, then this mapping is RIP$(\cl K, \epsilon)$ with probability exceeding $1-\zeta$.

Moreover, for $\tinv{\sqrt m}\bs \Phi \sim \DRIP(\cl K)$ and $\bs \xi \sim \cl U^m([0, \delta])$, Prop.~\ref{prop:llpd-lsre-convex} explains that $\Amap(\cdot) := \cl Q(\bs \Phi \cdot + \bs \xi)$ respects the LPD$(\cl K,\bs\Phi, \nu)$ or the L-LPD$(\cl K, \bs\Phi, \bs x, \nu)$ with probability exceeding $1-\zeta$ provided 
\begin{equation}
  \label{eq:tmp-lpd-rip-convex}
  \ts m \ts \gtrsim\ \inv{(\epsilon\kappa)^{2}}\, w(\cl K)^2\, \cl P_{\log}(m, n, 1, 1/\zeta).  
\end{equation}
with $\kappa$ defined in Prop.~\ref{prop:llpd-lsre}. Therefore, provided $m \ts \gtrsim\ \inv{(\epsilon\kappa)^{2}}\, w(\cl K)^2\, \cl P_{\log}(m, n, 1/\epsilon, 1/\zeta)$, both 
\eqref{eq:tmp-lpd-rip-convex} and \eqref{eq:RIP-RIPGen} hold, and we can thus achieve both the RIP and the (L)LPD with probability exceeding $1-2\zeta$ (by union bound). 

Finally, as for the structured set analysis, by saturating this last requirement on $m$ and inverting the relation between $m$ and $\epsilon$ (via $\epsilon\kappa$), we can conclude the following fact from \eqref{eq:rec-nu-eps-tmp}.
\begin{quote}
\em Let $\cl K$ be a bounded, convex and symmetric set with $\|\cl K\| = 1$. If $\tinv{\sqrt m} \bs \Phi \in \bb R^{m\times n} \sim \DRIP(\cl K)$ and $\bs \xi \sim \cl U^m([0,\delta])$, then, with high probability and up to some log factors, PBP gives for all vectors $\bs x \in \cl K$ an estimate $\hat{\bs x}$ from $\bs y = \Amap(\bs x) = \cl Q(\bs \Phi \bs x + \bs \xi)$ with the following reconstruction error decay when $m$ increases:
\begin{equation}
  \label{eq:error-decay-convex-unif}
  \ts \|\bs x - \hat{\bs x}\| = O\big( (1 + \delta)^{\inv{2}} (\frac{w(\cl K)^2}{m})^{\inv{16}} ).
\end{equation}
In the case of a non-uniform reconstruction guarantee (\ie $\bs x$ is fixed), this rate becomes
\begin{equation}
  \label{eq:error-decay-convex-fixed}
  \ts \|\bs x - \hat{\bs x}\| = O\big( (1 + \delta)^{\inv{2}} (\frac{w(\cl K)^2}{m})^{\inv{8}}).  
\end{equation}
with minor changes in the hidden log factors.
\end{quote}
In other words, up to some log factors and up to a multiplicative factor depending on the structure of $\cl K$, the reconstruction error decreases like $O(m^{-\inv{p}})$ when $m$ increases, with $p=16$ if this must hold for all elements of $\cl K$, and $p=8$ if $\bs x$ is fixed (non-uniform case).

\paragraph{C.\hspace{3mm} Sub-optimality of the decay rate of the PBP reconstruction error:}
  Clearly, the loss of information induced by quantization prevents the exact reconstruction of a signal observed by a QCS system \cite{BJKS2015}. However, the decay rates of the PBP reconstruction error in \eqref{eq:error-decay-lsre-uls-low-rank}, \eqref{eq:error-decay-convex-unif} and \eqref{eq:error-decay-convex-fixed} do not vanish when the quantization distortion decreases, \ie when~${\delta \to 0}$.

This phenomenon is actually consistent with the reconstruction error reached by PBP in the absence of quantization. Indeed, an easy adaptation of the proofs of Thm~\ref{thm:PBP-Union-Subspace} (for signals in a union of subspaces), Thm~\ref{thm:PBP-Low-Rank} (for low-rank matrices), and Thm~\ref{thm:PBP-convex-set} (for bounded convex sets) shows that if the general distorted sensing model \eqref{eq:distorted-CS-model} reduces to $\bs y = \bs \Phi \bs x$ (\ie $\Dmap = \bs \Phi$), the reconstruction errors provided by these theorems are simply achieved by setting $\nu=0$; the LPD vanishing in this case (see also Rem.~\ref{rem:delta-0} for $\Dmap = \Amap$ when $\delta \to 0$).  This analysis is not new and corresponds to classical results of the CS literature, \eg as observed from the bound obtained on the first iteration of the iterative hard thresholding algorithm for the recovery of sparse signals, \ie the PBP of $\bs \Phi \bs x$ with $\bs x \in \cl K = \spset{k}$ (see, \eg \cite{BM2009}, \cite[Thm 3.5]{foucart2011hard}).

Note, however, that other recovery schemes do not have this drawback. For instance, for any matrix $\bs \Phi$ satisfying the RIP over sparse signals, by using the basis pursuit denoise program \cite{BJKS2015} (or its variant imposing the consistency of the estimate with the observed quantized measurements \cite{dai2009distortion,moshtaghpour2016consistent,DJR2017}), one can recover sparse signals observed by the QCS model \eqref{QCSproblem} (with or without dithering) with an error bound that is linear in $\delta$. Furthermore, for signals in the unit $\ell_2$-ball, and for randomly subsampled Gaussian circulant matrices combined with a dither formed by adding a Gaussian and a uniform random vector \cite{DJR2017}, the reconstruction error of the same signal estimate (additionally constrained to lie in the unit $\ell_2$-ball) reaches the decay rate~$O(\delta\,(s/m)^{1/6})$, which tends to zero both if $\delta$ vanishes or if $m$ is large.

We thus conclude this section with this open problem: \emph{Given an algorithm $\cl A$ (\eg basis pursuit) known to exactly reconstruct a low-complexity signal from the compressive observations achieved with an arbitrary RIP matrix $\bs \Phi$, can we adapt $\cl A$ to handle signal observations produced by the dithered QCS model \eqref{eq:Uniform-dithered-quantization} (\eg by introducing a consistency constraint) so that its reconstruction error must decay like $O(\delta m^{-1/p})$ for some $p > 0$ when $m$ increases?}

\section{Numerical verifications}
\label{sec:numericalPBP}

In this section, we study numerically the behavior of the PBP reconstruction error for signals belonging to the three low-complexity sets discussed in the Sec.~\ref{sec:specialcasePBP}, namely, the set of sparse vectors, the set of compressible signals and the set of low-rank matrices, when $m$ or~$\delta$ increases.  We carried out this analysis for several sensing matrices respecting the RIP property, \eg for sub-Gaussian random matrices and for one partial random orthonormal matrix construction, \ie a random partial DCT sensing matrix. Moreover, we empirically demonstrate the importance of the dither by observing how the reconstruction error is impacted when the dither is removed. 

\paragraph{A.\hspace{3mm}Comparison between three low-complexity sets:} 

\begin{figure}[!t]
  \centering
  \subfloat[\label{fig:gauss-sensing-sparse} $4$-sparse signals in $\spset{4}$]{\includegraphics[width=0.33\textwidth]{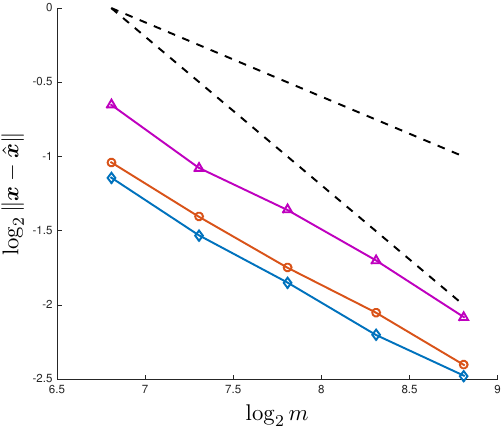}}
  \subfloat[\label{fig:gauss-sensing-comp} Compressible signals in $\cpset{4}$]{\includegraphics[width=0.33\textwidth]{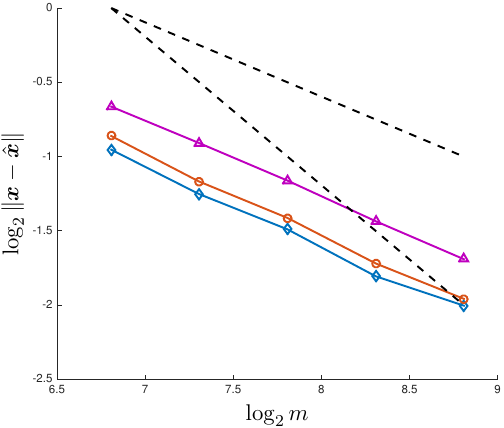}}
  \subfloat[\label{fig:gauss-sensing-lrank} rank-2 matrices]{\includegraphics[width=0.33\textwidth]{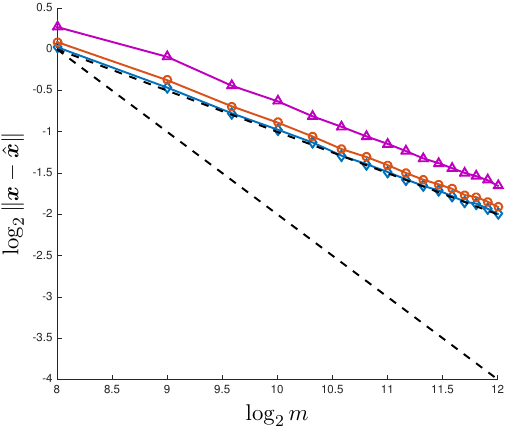}}
  \caption{PBP reconstruction error in function of $m$ (in a log-log plot) for the dithered QCS observations of
  low-complexity signals. For each plot, we present the reconstruction
  error for $\delta = 0.5$ (blue diamonds), $\delta=1$
  (orange circles) and $\delta=2$ (pink triangles). The two guiding dashed
  lines represent the two decaying rates as $m^{-1/2}$ and $m^{-1}$.}
  \label{fig:gauss-sensing}
\end{figure}

We first evaluate the PBP reconstruction error when $m$ increases for different values of the quantization resolution $\delta$, in the case where $\tinv{\sqrt m}\bs \Phi \in \bb R^{m \times n}$ is a Gaussian random matrix and for $\bs \xi \sim \cl U^m([0, \delta])$. For the QCS of $k$-sparse signals in $\spset{k}$ and compressible signals in $\cpset{k}$, we have set $n=512$, $k=4$ with $m \in [4k\log_2(n/k),\,n]$, where $4k\log_2(n/k)$ is an estimate of the minimal number of measurements yielding perfect reconstruction of $k$-sparse signals in absence of quantization (\eg using BPDN~\cite{candes2006stable}). Each $k$-sparse signal was generated by first picking its support uniformly at random among the $n \choose k$ possible $k$-length supports of~$[n]$, and then drawing each of its non-zero elements \iid from a standard normal distribution.
Each compressible signal $\bs x$ was generated by applying a random permutation to a vector whose components have a random sign (with equal probability) with an absolute value decaying like $i^{-\alpha_k}$ ($i \in [n]$), with $1<\alpha_k <2$ such that $\|\bs x\|_1 / \|\bs x\| \leq \sqrt k$. The resulting signal is further normalized to have unit $\ell_2$-norm.

Fig.~\ref{fig:gauss-sensing-sparse} and
Fig.~\ref{fig:gauss-sensing-comp} display the reconstruction errors of
PBP from the dithered quantized observations of $4$-sparse and compressible signals in $\cpset{4}$, respectively, as a function of $m$ (in a log-log plot). For each figure, three curves are given for $\delta \in \{0.5,1,2\}$. Two guiding dashed lines are also provided to represent the decay rates as $m^{-1/2}$ and $m^{-1}$. For every $\delta$ and $m$, the PBP reconstruction was tested over 100 trials of the random generation of $\bs \Phi$, $\bs \xi$ and $\bs x$. When $m$ increases, we clearly see that the decay rate of the reconstruction of sparse signals is slightly faster than $O(m^{-1/2})$ (\eg the curve at $\delta=1$ is well fitted by $O(m^{-0.67})$), as predicted by~\eqref{eq:error-decay-lsre-uls-low-rank}, while the decay rate of the error reconstruction closely matches $O(m^{-1/2})$ for compressible signals. This decay is faster than the fastest rate in $O(m^{-1/8})$ predicted in Sec.\ref{par:BCS-set-decay-rate}(B) for non-uniform estimation. We note, however, that Sec.\ref{par:BCS-set-decay-rate}(B) provided bounds valid for any convex sets $\cl K \subset \bb B^n$, not only the set of compressible signals. 

\medskip
For low-rank matrices, following Sec.~\ref{sec:PBP-low-rank}, we have analyzed the decay rate of the PBP reconstruction error for rank-$r$ $n_1\!\times\!n_2$-matrices with $n_1 = n_2 = 64$, \ie $n=n_1n_2=4096$, and $r=2$.  Each rank-2 matrix $\bs X$ was generated from the model $\bs X = \bs B \bs C^\top\!\!/\|\bs B \bs C^\top\|_F$ where $\bs B, \bs C \in \bb R^{\sqrt n \times 2}$ are two random matrices with entries \iid as a standard normal distribution. In this context, the random Gaussian sensing matrix $\tinv{\sqrt m}\bs\Phi$ thus operated over the vectorized form $\bs x = \ve(\bs X)$ of each low-rank matrix $\bs X$, before the dithered quantization defining the QCS sensing in~\eqref{eq:Uniform-dithered-quantization}.  Fig.~\ref{fig:gauss-sensing-lrank} displays the reconstruction error of the PBP estimate $\hat{\bs x} = \ve(\hat{\bs X})$ --- with an average over 50 trials with different generation of $\bs \Phi$, $\bs \xi$ and $\bs X$ --- when $m \in [r\,(n_1+n_2), n]$ and for $\delta \in \{0.5, 1, 2\}$. As predicted in~\eqref{eq:error-decay-lsre-uls-low-rank} for structured sets, the observed decay rate closely matches $O(m^{-1/2})$.

\paragraph{B.\hspace{3mm}Evolution of the error with $\delta$:}

\begin{figure}[!t]
  \centering
  \null\hfill\subfloat[\label{fig:evol-gauss-sensing-delta-sparse} $4$-sparse signals]{\includegraphics{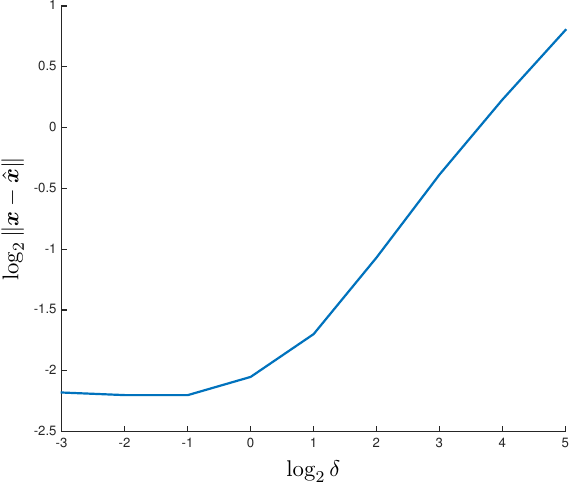}}\hfill
  \subfloat[\label{fig:evol-gauss-sensing-delta-comp} Compressible signals in $\cpset{4}$]{\includegraphics{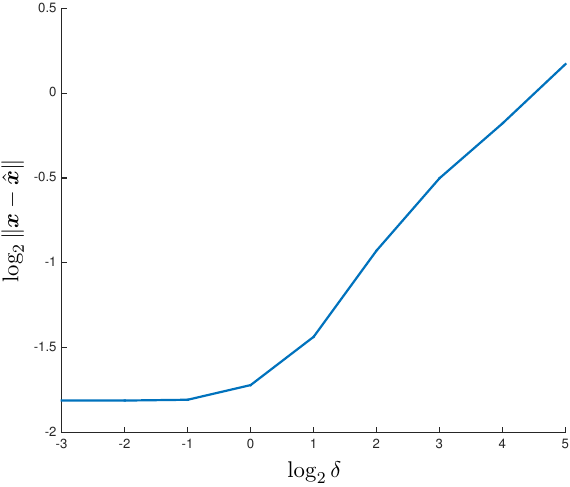}}\hfill\null
  \caption{PBP reconstruction error in function of $\delta$ (in a log-log plot) for the dithered QCS observations of sparse and compressible vectors.}
  \label{fig:evol-gauss-sensing-delta}
\end{figure}

In this experiment, sparse and compressible signals are generated according to the same setting as above and we test the evolution of the PBP reconstruction error as a function of $\delta$, with $\log_2 \delta \in \{-3,-2, \cdots, 5\}$ and $m=n/2$. The results are plotted in Fig.~\ref{fig:evol-gauss-sensing-delta-sparse} and Fig.~\ref{fig:evol-gauss-sensing-delta-comp} in the case of 4-sparse signals and compressible signals in $\cpset{4}$, respectively. Interestingly, we observe that both curves are compatible with the bound $\|\bs x -\hat{\bs x}\| = O(1+\delta)$ (as all other parameters are fixed), with some floor at small values of $\delta$ corresponding to the error achieved by PBP in the unquantized CS regime ($\delta = 0$).   

\paragraph{C.\hspace{3mm}Analysis of different random sensing matrices:}

\begin{figure}[!t]
  \centering
  \null\hfill\subfloat[\label{fig:Bern-sensing-sparse} Bernoulli random sensing matrix]{\includegraphics{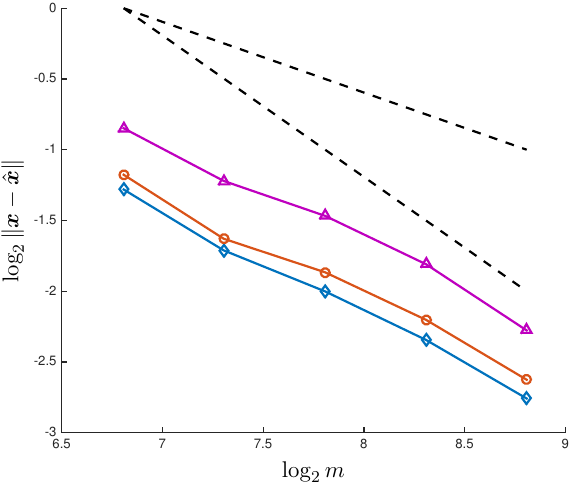}}\hfill
  \subfloat[\label{fig:DCT-sensing-sparse} Random partial DCT sensing matrix]{\includegraphics{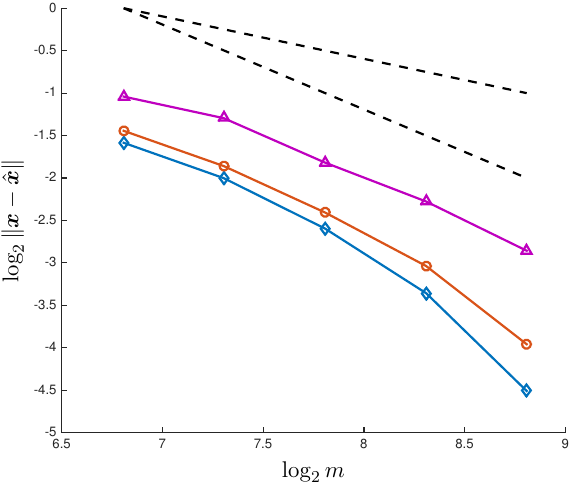}}\hfill\null
  \caption{Reconstruction error of PBP in function of $m$ (in a $\log$-$\log$ plot) for dithered QCS observations obtained with a Bernoulli random matrix and a random partial DCT sensing matrix. For each plot, we present the reconstruction error for $\delta = 0.5$ (blue diamonds), $\delta=1$ (orange circles) and $\delta=2$ (pink triangles). The two guiding dashed lines represent the two decaying rates as $m^{-1/2}$ and~$m^{-1}$.}
  \label{fig:various-mtx-sensing}
\end{figure}

We now test the reconstruction error of PBP for the estimation of $4$-sparse vectors where $\Amap$ is induced by the combination of a dither $\bs \xi \sim \cl U^m([0,\delta])$  with two different non-Gaussian sensing matrices, \ie a Bernoulli random matrix, with \iid entries equal to $\pm 1$ with equal probability, and a random partial DCT matrix obtained by picking (without replacement) $m$ rows uniformly at random among the $n$ rows of an $n\times n$ orthonormal DCT matrix~\cite{CT06,Rau10}.  The values of $n$, $m$, $\delta$, the number of averaged trials as well as the setting of 4-sparse signals in this experiment are the ones used in Sec.~\ref{sec:numericalPBP}.A. Fig.~\ref{fig:Bern-sensing-sparse} and Fig.~\ref{fig:DCT-sensing-sparse} show the reconstruction error of PBP achieved for the Bernoulli and the random partial DCT matrices, respectively. As predicted by theory, these two sensing matrices enjoy similar performances compared to the reconstruction error of PBP with a Gaussian random matrix. Moreover, numerically, a decay rate of $O(m^{-1/2})$ still bounds the reconstruction error of PBP from the dithered QCS observations when $m$ increases.

\paragraph{D.\hspace{3mm}On the importance of dithering:} 

\begin{figure}[!t]
  \centering
  \null\hfill\subfloat[\label{fig:Bern-sensing-sparse-nodith} Bernoulli random sensing matrix]{\includegraphics{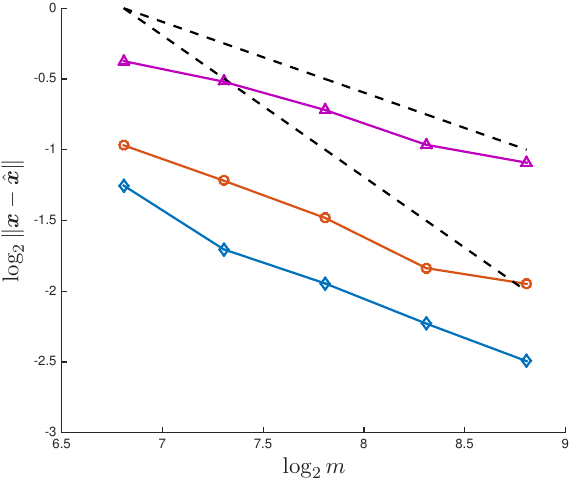}}\hfill
  \subfloat[\label{fig:DCT-sensing-comp-nodith} Random partial DCT sensing matrix]{\includegraphics{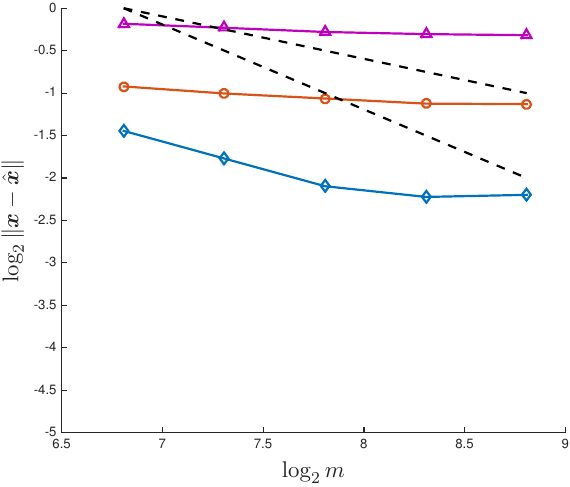}}\hfill\null
  \caption{PBP reconstruction error in function of $m$ (in a log-log plot) for the QCS observations obtained with Bernoulli and random partial DCT sensing matrices \emph{in the absence of dithering}. For each plot, we present the reconstruction error evolution for $\delta = 0.5$ (blue diamonds), $\delta=1$ (orange circles) and $\delta=2$ (pink triangles). The two guiding dashed
  lines represent the two decaying rates as $m^{-1/2}$ and~$m^{-1}$.}
  \label{fig:various-mtx-sensing-no-dithering}
\end{figure}

In this last experiment presented in Fig.~\ref{fig:Bern-sensing-sparse-nodith} and Fig.~\ref{fig:DCT-sensing-comp-nodith}, we remove the dither from the quantized mapping $\Amap$ and test how the reconstruction error of PBP for sparse signals is impacted in the case of Bernoulli and random partial DCT matrices. The generations of both the signals and the sensing matrices are the same as in the previous experiment. In the case of random Bernoulli sensing matrices (Fig.~\ref{fig:Bern-sensing-sparse-nodith}), the decay rate of the reconstruction error of PBP as $m$ increases is clearly slower than that obtained when the dithering is present in $\Amap$ (see Fig.~\ref{fig:Bern-sensing-sparse}). Fig.~\ref{fig:DCT-sensing-comp-nodith} also shows that this error reaches a constant floor for random partial DCT sensing matrices, conversely to the decay rates achieved with dithering (see Fig.~\ref{fig:DCT-sensing-sparse}). In fact, for this sensing (as well as for the Bernoulli matrices, see Sec.~\ref{sec:motivationSOA}), one can build counterexamples of two distinct sparse vectors that are sent to the same quantized observations for any value of~$m$~\cite{feuillen2018quantity}; this ambiguity thus prevents any reconstruction algorithm to reach an arbitrary small error.

\section{Conclusion and perspectives}
\label{sec:conclution}

This work has studied the combination of the compressive sensing of low-complexity signals, as supported by the numerous random matrix constructions now available in the CS literature, with the non-linear distortion induced by a uniform quantizer applied on the compressive signal observations, \eg for transmission or storage purposes. This association is enabled by the addition of a uniform random dither to the linear observations before their quantization. We have proved that this dither enables to estimate, with small error, all or one signal selected in a low-complexity set from its dithered, quantized observations. This estimation is ensured by at least one reconstruction method, the \emph{projected back projection} (PBP), whose reconstruction error is provably decaying when the number of measurements increases. For instance, we have characterized this phenomenon for several ``structured'' low-complexity sets, \eg the set of sparse signals, any union of low-dimensional subspaces, the set of low-rank matrices, or more advanced group-sparse models. In this case, given a quantization resolution $\delta > 0$, the decay rate of the reconstruction error is \whp $O((1+\delta)/\sqrt m)$ (up to log factors) when the number of measurements $m$ increases. For bounded, convex and symmetric sets, \eg for the set of compressible signals, the error is still decaying with $m$ but with the less favorable decay rate of $O((1+\delta)/m^{p})$, with $p=16$ or 8 for uniform or non-uniform reconstruction error, respectively.     

On the side, we have also established in Sec.~\ref{sec:PBP-gen-error} that for more general distorted sensing models, \ie beyond the quantized mapping described above, the reconstruction error of the PBP method can be bounded provided that the associated distorted mapping respects a certain limited projected distortion (LPD) property. This one bounds its discrepancy with a linear mapping assumed to respect the RIP. For instance, as shown in Sec.~\ref{sec:LPD-noisy-linear-map}, linear sensing models corrupted by an additive, sub-Gaussian random noise are quickly shown to satisfy the LPD property.  

Lastly, numerical tests have validated the theoretical reconstruction error bounds in several sensing scenarios for several low-complexity sets, and with a varying number of measurements and different quantization resolutions. In particular, we have empirically demonstrated the positive impact of the dithering in the quantization, especially for non-Gaussian sensing matrices, \eg Bernoulli and random partial DCT matrices. In fact, we have observed numerically that the impact of quantization on the PBP reconstruction error is indeed lessened by the presence of a random dither, with apparent limitations in the absence of dithering for specific sensing matrices (\eg for random partial DCT matrix).     

\medskip
As mentioned in the introduction, PBP can be seen as an initial guide for any advanced reconstruction algorithms in dithered QCS. For instance, PBP can undoubtedly be improved in the sense that its estimate is not \emph{consistent} with the quantized observations, \ie the sensing of this estimate with the same QCS model than the one that observed the true signal is not guaranteed to match the initial quantized observations. 
Therefore, an interesting line of work would be to characterize consistent iterative reconstruction methods whose initialization is equal (or related) to the PBP estimate. Examples of such methods are the quantized iterative hard thresholding (QIHT)~\cite{JDV2015} and the QCOSAMP~\cite{shi2016methods} algorithms, even if no reconstruction error guarantees have been proved for them so far. Speaking of QIHT, given a signal $\bs x$ in a low-complexity set $\cl K \subset \bb R^n$, this algorithm relies on the iterations 
$$
\ts \bs x^{(t+1)} = \cl P_{\cl K}\big(\bs x^{(t)} + \frac{\mu}{m} \bs \Phi^\top(\Amap(\bs x) - \Amap(\bs x^{(t)}))\big),\quad \text{with}\ \bs x^{(1)} = \cl P_{\cl K}\big(\frac{\mu}{m} \bs \Phi^\top \Amap(\bs x) \big), 
$$
where $\bs x^{(1)}$ matches the PBP estimate. A potential proof strategy could be to split the problem into two steps: first, proving that the QIHT algorithm is sure to converge \whp to a consistent solution $\bs x^*$ when it is initialized from the PBP estimate $\bs x^{(1)}$, and second, showing that any pair of signals in $\cl K$ that are consistent with respect to the quantized random mapping $\Amap$, as for $\bs x$ and $\bs x^*$, have a distance bounded by, \eg $O(w(\cl K)/m)$. This last bound, coined the \emph{consistency width} in~\cite{J2016}, is known to decay like $O(1/m)$ when $m$ increases if $\bs \Phi$ is a Gaussian random matrix and if $\cl K = \Sigma_k$ (see~\cite[Theorem 2]{J2016}). Unfortunately, available bounds on the consistency width decay more slowly when $m$ increases for more general RIP matrices\footnote{This is easily observed by enforcing consistency in Prop. 1 and Prop. 2 in~\cite{JC2016}.}~\cite{JC2016}. Knowing if the rate $O(1/m)$ holds for these, as observed numerically for sub-Gaussian sensing matrices~\cite{moshtaghpour2016consistent}, is an open problem.   

Another study could be carried out on the question of allowing other distributions for the generation of the dither, \eg the Gaussian distribution as in~\cite{DJR2017}. As made clear in our work, the uniform distribution cancels out the uniform quantization by expectation (Lemma~\ref{lem1}). However, it should be possible to admit other distribution $\cl D$ such that, if $\xi \sim D$, there exist two constants $0<\mu_0<\mu_1$ for which 
\begin{equation}
  \label{eq:novel-distr-dithering}
\ts \mu_0\,\lambda\ \leq\ \bb E_\xi \cl Q(\lambda + \xi)\ \leq\ \mu_1\, \lambda,\quad\forall \lambda \in \bb R,   
\end{equation}
with $\mu_0=\mu_1=1$ if $\cl D \sim \cl U([0,\delta])$. For a distribution $\cl D$ compatible with~\eqref{eq:novel-distr-dithering}, it should be possible to show that extra distortions impact the reconstruction performance of PBP, with a reduced influence if $\mu_0\approx \mu_1$. 

\section*{Acknowledgements}
\label{sec:acknowledgements}

We wish to thank Simon Foucart for his simplification of the proof of Thm.~\ref{thm:PBP-Low-Rank}, and Augusto Zebadua for interesting discussions on Bussgang's theorem and distorted correlators (see Sec.~\ref{sec:motivationSOA}).

\appendix

\section{Uniform dithering cancels quantization in expectation}
\label{app:vanishing-dithered-quantiz}

\begin{lemma}\label{lem1}
For $\cl Q(\cdot) := \delta \lfloor \cdot / \delta \rfloor$,  any $a\in \bb R$ and $\xi\sim\cl U([0,\delta])$, we have 
\begin{equation}\label{eq1.1}
\ts \bb E_{\xi} \cl Q(a + \xi) = \bb E_{\xi}\delta\lfloor\frac{a+\xi}{\delta} \rfloor =a.
\end{equation}
\end{lemma}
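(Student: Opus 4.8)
The plan is to compute the expectation directly by exploiting the periodicity of the map $\xi \mapsto \lfloor (a+\xi)/\delta \rfloor$ and a change of variables that normalizes away the resolution $\delta$. First I would substitute $u := \xi/\delta$, so that $u \sim \cl U([0,1])$ and $\bb E_\xi \cl Q(a+\xi) = \delta\, \bb E_u \lfloor a/\delta + u \rfloor$; writing $\lambda := a/\delta$, the claim reduces to showing $\bb E_u \lfloor \lambda + u \rfloor = \lambda$ for all $\lambda \in \bb R$ and $u \sim \cl U([0,1])$, which is exactly the elementary identity quoted in the Introduction.

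Next I would decompose $\lambda = \lfloor \lambda \rfloor + \{\lambda\}$, where $\{\lambda\} \in [0,1)$ is the fractional part. Since $\lfloor \lambda + u \rfloor = \lfloor \lambda \rfloor + \lfloor \{\lambda\} + u \rfloor$ and $\lfloor \lambda \rfloor$ is deterministic, it suffices to show $\bb E_u \lfloor \{\lambda\} + u \rfloor = \{\lambda\}$. For $t := \{\lambda\} \in [0,1)$ and $u$ uniform on $[0,1]$, the sum $t + u$ ranges over $[t, 1+t)$, so $\lfloor t + u \rfloor = 0$ when $u \in [0, 1-t)$ and $\lfloor t + u \rfloor = 1$ when $u \in [1-t, 1)$. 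Hence $\bb E_u \lfloor t + u \rfloor = \bb P[u \geq 1-t] = t$, as desired. Unwinding the substitutions, $\bb E_\xi \cl Q(a+\xi) = \delta(\lfloor \lambda \rfloor + \{\lambda\}) = \delta \lambda = a$.

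There is no real obstacle here — the statement is a one-line computation once the change of variables is made; the only point requiring a modicum of care is the bookkeeping of the integer and fractional parts, and confirming that the integral $\int_0^1 \lfloor t+u\rfloor\, \ud u$ is handled correctly at the single discontinuity $u = 1-t$, which has measure zero and so does not affect the value. An alternative, equally short route would be to integrate by periodicity: the function $g(\lambda) := \bb E_u \lfloor \lambda + u \rfloor - \lambda$ is easily seen to be periodic with period $1$ (since $\lfloor (\lambda+1) + u\rfloor = \lfloor \lambda + u\rfloor + 1$) and continuous, and one checks $g \equiv 0$ on $[0,1)$ by the direct computation above; I would, however, favor the direct argument as it is self-contained and avoids invoking continuity of $g$.
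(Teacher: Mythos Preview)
Your proof is correct and follows essentially the same approach as the paper: both normalize to $\delta=1$ (you via the explicit substitution $u=\xi/\delta$, the paper by a ``without loss of generality'' reduction), split the argument into its integer and fractional parts, and compute the expectation of the floor of the fractional part plus a uniform variable by observing it is Bernoulli with success probability equal to that fractional part. The arguments are interchangeable.
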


\begin{proof}
Without loss of generality, we set $\delta=1$ and denote $a'=\lfloor a\rfloor$ and $a''=a-a'\in [0,1)$. We can always write $\lfloor a+\xi\rfloor=a'+\lfloor a''+\xi\rfloor=a'+X$, with $X=\lfloor a''+\xi\rfloor$. Since $\xi\sim\cl U([0,1])$, $0\leq a''+\xi<2$ and $X \in \{0,1\}$. Moreover, 
$\bb P(X=0)=\bb P(a''+\xi<1)=\bb P(\xi<1-a'')=1-a''$, and $\bb P(X=1)=\bb P(a''+\xi>1)=\bb P(\xi>1-a'')=a''$.
Therefore, 
$$
\bb E(a'+X) = a'~\bb P(X=0)+(a'+1)~\bb P(X=1) =a'(1-a'')+(a'+1)a'' =a,
$$
and $\bb E_{\xi}\delta\lfloor\frac{a+\xi}{\delta} \rfloor =a$ holds by a simple rescaling argument for $\delta>0$. 
\end{proof}

\section{Moments of dithered QCS seen as a non-linear sensing model}
\label{app:values-from-Klasso}

Writing $\bs \Phi = (\bs \varphi_1,\,\cdots,\bs\varphi_m)^\top \in \bb R^{m \times n}$, the (scalar) QCS model~\eqref{eq:Uniform-dithered-quantization} can be seen as a special case of the more general, non-linear sensing model
\begin{equation}
  \label{eq:gen-non-lin-sensing}
  \ts y_i = f_i(\scp{\bs \varphi_i}{\bs x}),\quad i\in [m],
\end{equation}
with the random functions $f_i:\bb R \to \bb R$ being \iid as a random function $f: \bb R \to \bb R$ for $i\in [m]$. This is observed by setting $f_i(\lambda) = \cl Q(\lambda + \xi_i)$. 

In~\cite{PVY2017}, the authors proved that, for a Gaussian random matrix and a bounded star-shaped set $\cl K$, provided $f$ leads to finite $\mu := \bb E f(g)g$, $\eta^2 = \bb E f(g)^2$ and $\psi$ is the sub-Gaussian norm of $f(g)$ with $g\sim \cl N(0,1)$, \ie $\psi := \|f(g)\|_{\psi_2}$~\cite{V2012}, one can estimate the direction of $\bs x \in \cl K$ from the solution $\hat{\bs x}$ of the PBP of $\bs y$ defined in~\eqref{eq:gen-non-lin-sensing}. 

In particular,~\cite[Theorem 9.1]{PVY2017} shows that, given $\bs x \in \cl K$, $t > 0$, $0<s<\sqrt m$, the estimate of PBP satisfies, with probability exceeding $1 - 2\exp(-c s^2 \eta^4 / \psi^4)$,
$$
\ts \|\hat{\bs x} - \mu \|\bs x\|^{-1} \bs x\| \leq t + \frac{4\eta}{\sqrt m}(s + \frac{w_t(K)}{t}).   
$$  

This result can adapted to the context of this work. In particular, given some distortion $0<\epsilon < \frac{9\eta^2}{\psi^2}$, setting above $s=\frac{\epsilon \psi^2}{9\eta^2} \sqrt m$, $t=\frac{4 \psi^2}{9\eta}\epsilon $, and using the fact that $w_t(\cl K) \leq w(\cl K)$, provided 
$$
\ts m \geq \frac{9^4\eta^6}{\epsilon^{4}\psi^8} w^2(\cl K),
$$ 
it is easy to see that the same estimate satisfies, with probability exceeding $1 - 2\exp(-c \epsilon^2 m)$,
$$
\ts \|\hat{\bs x} - \mu \|\bs x\|^{-1} \bs x\| \leq \frac{\psi^2}{\eta} \epsilon.   
$$  

For the scalar QCS model~\eqref{eq:Uniform-dithered-quantization} where $f(\lambda) := \delta \lfloor (\lambda + \xi) /\delta \rfloor$ for $\xi \sim \cl U([0,\delta])$, using the law of total expectation and $\bb E f(\lambda) = \bb E \delta \lfloor (\lambda + \xi)/\delta \rfloor = \lambda$ (see Lemma~\ref{lem1},), we compute that, for $g\sim \cl N(0,1)$, 
\begin{align*}
 \mu&\ts := \bb E f(g)g = \delta \bb E_g \bb E_\xi \lfloor (g + \xi) /\delta \rfloor g = \bb E_g g^2 = 1,\\
\ts \psi&\ts := \|\delta \lfloor (g + \xi) /\delta\rfloor\|_{\psi_2} \leq \|g\|_{\psi_2} + \|\delta \lfloor (g + \xi) /\delta\rfloor - (g + \xi)\|_{\psi_2} + \|\xi\|_{\psi_2} \lesssim 1 + \delta,\\
 \eta&\ts := (\bb E f(g)^2)^{1/2} = (\bb E f(g)^2)^{1/2}(\bb E g^2)^{1/2} \geq \bb E f(g)g = 1,\\
\eta&\ts \leq \sqrt 2\sup_{p\geq 1} p^{-1/2}\,(\bb E f(g)^p)^{1/p} \leq \sqrt 2 \psi \lesssim 1 + \delta,
\end{align*}
where the second line used the triangular inequality of the sub-Gaussian norm, and the third one is based on Holder's inequality. Therefore, $\psi^2/\eta \leq \psi^2 \lesssim (1+\delta)^2$ and $\eta^6/\psi^8 \lesssim 1/\psi^2 \lesssim 1$. This proves that, provided $m \gtrsim \epsilon^{-4} w^2(\cl K)$ with $0<\epsilon<\frac{c}{(1+\delta)^2}$, 
$$
\|\hat{\bs x} - \mu \|\bs x\|^{-1} \bs x\| \leq (1+\delta)^2 \epsilon,
$$ 
with probability exceeding $1 - 2\exp(-c'\epsilon^2 m)$.

Roughly speaking, by saturating the condition on $m$, one can thus estimate \whp the direction of any low-complexity vector with error $O((1+\delta)^2 \sqrt{w(\cl K)}\, m^{-1/4})$.

\footnotesize


\begin{thebibliography}{10}
\providecommand{\url}[1]{#1}
\csname url@samestyle\endcsname
\providecommand{\newblock}{\relax}
\providecommand{\bibinfo}[2]{#2}
\providecommand{\BIBentrySTDinterwordspacing}{\spaceskip=0pt\relax}
\providecommand{\BIBentryALTinterwordstretchfactor}{4}
\providecommand{\BIBentryALTinterwordspacing}{\spaceskip=\fontdimen2\font plus
\BIBentryALTinterwordstretchfactor\fontdimen3\font minus
  \fontdimen4\font\relax}
\providecommand{\BIBforeignlanguage}[2]{{%
\expandafter\ifx\csname l@#1\endcsname\relax
\typeout{** WARNING: IEEEtran.bst: No hyphenation pattern has been}%
\typeout{** loaded for the language `#1'. Using the pattern for}%
\typeout{** the default language instead.}%
\else
\language=\csname l@#1\endcsname
\fi
#2}}
\providecommand{\BIBdecl}{\relax}
\BIBdecl

\bibitem{CT2005}
E.~J. Cand\`es and T.~Tao, ``{Decoding by linear programming},'' \emph{IEEE
  transactions on information theory}, vol.~51, no.~12, pp. 4203--4215, 2005.

\bibitem{Do06}
D.~L. Donoho, ``{Compressed sensing},'' \emph{IEEE Transactions on information
  theory}, vol.~52, no.~4, pp. 1289--1306, 2006.

\bibitem{FR2013}
S.~Foucart and H.~Rauhut, \emph{{A mathematical introduction to compressive
  sensing}}.\hskip 1em plus 0.5em minus 0.4em\relax Birkh{\"a}user Basel, 2013.

\bibitem{BDDW08}
R.~Baraniuk, M.~Davenport, R.~DeVore, and M.~Wakin, ``{A simple proof of the
  restricted isometry property for random matrices},'' \emph{Constructive
  Approximation}, vol.~28, no.~3, pp. 253--263, 2008.

\bibitem{MPT2008}
S.~Mendelson, A.~Pajor, and N.~Tomczak-Jaegermann, ``{Uniform uncertainty
  principle for Bernoulli and subgaussian ensembles},'' \emph{Constructive
  Approximation}, vol.~28, no.~3, pp. 277--289, 2008.

\bibitem{Rau10}
H.~Rauhut, ``{Compressive sensing and structured random matrices},''
  \emph{Theoretical foundations and numerical methods for sparse recovery},
  vol.~9, pp. 1--92, 2010.

\bibitem{KM2005}
B.~Klartag and S.~Mendelson, ``{Empirical processes and random projections},''
  \emph{Journal of Functional Analysis}, vol. 225, no.~1, pp. 229--245, 2005.

\bibitem{TG2007}
J.~A. Tropp and A.~C. Gilbert, ``{Signal recovery from random measurements via
  orthogonal matching pursuit},'' \emph{IEEE Transactions on information
  theory}, vol.~53, no.~12, pp. 4655--4666, 2007.

\bibitem{CDD09}
A.~Cohen, W.~Dahmen, and R.~DeVore, ``{Compressed sensing and best $k$-term
  approximation},'' \emph{Journal of the American mathematical society},
  vol.~22, no.~1, pp. 211--231, 2009.

\bibitem{candes2006stable}
E.~J. Cand\`es, J.~K. Romberg, and T.~Tao, ``Stable signal recovery from
  incomplete and inaccurate measurements,'' \emph{Communications on pure and
  applied mathematics}, vol.~59, no.~8, pp. 1207--1223, 2006.

\bibitem{gunturk2013sobolev}
C.~S. G{\"u}nt{\"u}rk, M.~Lammers, A.~M. Powell, R.~Saab, and {\"O}.~Y{\i}lmaz,
  ``{Sobolev duals for random frames and $\Sigma$$\Delta$ quantization of
  compressed sensing measurements},'' \emph{Foundations of Computational
  mathematics}, vol.~13, no.~1, pp. 1--36, 2013.

\bibitem{BJKS2015}
P.~T. Boufounos, L.~Jacques, F.~Krahmer, and R.~Saab, ``{Quantization and
  compressive sensing},'' in \emph{{Compressed Sensing and its Applications:
  MATHEON Workshop 2013}}.\hskip 1em plus 0.5em minus 0.4em\relax Springer,
  2015, pp. 193--237.

\bibitem{B_TIT_12}
P.~T. Boufounos, ``{Universal rate-efficient scalar quantization},'' \emph{IEEE
  transactions on information theory}, vol.~58, no.~3, pp. 1861--1872, 2012.

\bibitem{pai_nonadapt_MIT06}
R.~J. Pai, ``{Nonadaptive lossy encoding of sparse signals},'' Ph.D.
  dissertation, Massachusetts Institute of Technology, 2006.

\bibitem{kamilov_2012}
U.~S. Kamilov, V.~K. Goyal, and S.~Rangan, ``{Message-passing de-quantization
  with applications to compressed sensing},'' \emph{IEEE Transactions on Signal
  Processing}, vol.~60, no.~12, pp. 6270--6281, 2012.

\bibitem{vivekQuantFrame}
H.~Q. Nguyen, V.~K. Goyal, and L.~R. Varshney, ``{Frame permutation
  quantization},'' \emph{Applied and Computational Harmonic Analysis}, vol.~31,
  no.~1, pp. 74--97, 2011.

\bibitem{FKS17}
J.-M. Feng, F.~Krahmer, and R.~Saab, ``{Quantized Compressed Sensing for
  Partial Random Circulant Matrices},'' \emph{arXiv preprint arXiv:1702.04711},
  2017.

\bibitem{BFNPW2017}
R.~G. Baraniuk, S.~Foucart, D.~Needell, Y.~Plan, and M.~Wootters,
  ``{Exponential decay of reconstruction error from binary measurements of
  sparse signals},'' \emph{IEEE Transactions on Information Theory}, vol.~63,
  no.~6, pp. 3368--3385, 2017.

\bibitem{J2015}
L.~Jacques, ``{Small width, low distortions: quantized random embeddings of
  low-complexity sets},'' \emph{IEEE Transactions on information theory},
  vol.~63, no.~9, pp. 5477--5495, 2015.

\bibitem{JC2016}
L.~Jacques and V.~Cambareri, ``{Time for dithering: fast and quantized random
  embeddings via the restricted isometry property},'' \emph{Information and
  Inference: A Journal of the IMA}, p. iax004, 2017.

\bibitem{adcock2017breaking}
B.~Adcock, A.~C. Hansen, C.~Poon, and B.~Roman, ``Breaking the coherence
  barrier: A new theory for compressed sensing,'' \emph{Forum of Mathematics,
  Sigma}, vol.~5, 2017.

\bibitem{carrillo2012sparsity}
R.~E. Carrillo, J.~McEwen, and Y.~Wiaux, ``Sparsity averaging reweighted
  analysis (sara): a novel algorithm for radio-interferometric imaging,''
  \emph{Monthly Notices of the Royal Astronomical Society}, vol. 426, no.~2,
  pp. 1223--1234, 2012.

\bibitem{ender2010compressive}
J.~H. Ender, ``On compressive sensing applied to radar,'' \emph{Signal
  Processing}, vol.~90, no.~5, pp. 1402--1414, 2010.

\bibitem{anitori2013design}
L.~Anitori, A.~Maleki, M.~Otten, R.~G. Baraniuk, and P.~Hoogeboom, ``Design and
  analysis of compressed sensing radar detectors,'' \emph{IEEE Transactions on
  Signal Processing}, vol.~61, no.~4, pp. 813--827, 2013.

\bibitem{GN1998}
R.~M. Gray and D.~L. Neuhoff, ``{Quantization},'' \emph{IEEE transactions on
  information theory}, vol.~44, no.~6, pp. 2325--2383, 1998.

\bibitem{baraniuk2010model}
R.~G. Baraniuk, V.~Cevher, M.~F. Duarte, and C.~Hegde, ``Model-based
  compressive sensing,'' \emph{IEEE Transactions on Information Theory},
  vol.~56, no.~4, pp. 1982--2001, 2010.

\bibitem{Ayaz16}
U.~Ayaz, S.~Dirksen, and H.~Rauhut, ``{Uniform recovery of fusion frame
  structured sparse signals},'' \emph{Applied and Computational Harmonic
  Analysis}, vol.~41, no.~2, pp. 341--361, 2016.

\bibitem{ZBC2010}
A.~Zymnis, S.~Boyd, and E.~J. Cand{\`e}s, ``{Compressed Sensing with Quantized
  Measurements},'' \emph{IEEE Signal Processing Letters}, vol.~17, no.~2, pp.
  149--152, 2010.

\bibitem{oymak2015near}
S.~Oymak and B.~Recht, ``{Near-optimal bounds for binary embeddings of
  arbitrary sets},'' \emph{arXiv preprint arXiv:1512.04433}, 2015.

\bibitem{shi16}
H.-J.~M. Shi, M.~Case, X.~Gu, S.~Tu, and D.~Needell, ``{Methods for quantized
  compressed sensing},'' in \emph{{Information Theory and Applications Workshop
  (ITA), 2016}}.\hskip 1em plus 0.5em minus 0.4em\relax IEEE, 2016, pp. 1--9.

\bibitem{JLBB2013}
L.~Jacques, J.~N. Laska, P.~T. Boufounos, and R.~G. Baraniuk, ``{Robust 1-bit
  compressive sensing via binary stable embeddings of sparse vectors},''
  \emph{IEEE Transactions on Information Theory}, vol.~59, no.~4, pp.
  2082--2102, 2013.

\bibitem{JDV2015}
L.~Jacques, K.~Degraux, and C.~D. Vleeschouwer, ``{Quantized Iterative Hard
  Thresholding: Bridging 1-bit and High-Resolution Quantized Compressive
  Sensing},'' in \emph{{Proc. of SAMPTA2013 (July 1st-5th, Bremen,
  Germany)}}.\hskip 1em plus 0.5em minus 0.4em\relax IEEE, 2013, pp. 105--108.

\bibitem{J2016}
L.~Jacques, ``{Error decay of (almost) consistent signal estimations from
  quantized Gaussian random projections},'' \emph{IEEE Transactions on
  Information Theory}, vol.~62, no.~8, pp. 4696--4709, 2016.

\bibitem{KG2018}
N.~Keriven and R.~Gribonval, ``{Instance Optimal Decoding and the Restricted
  Isometry Property},'' \emph{arXiv preprint arXiv:1802.09905}, 2018.

\bibitem{DJR2017}
S.~Dirksen, H.~C. Jung, and H.~Rauhut, ``{One-bit compressed sensing with
  partial Gaussian circulant matrices},'' \emph{arXiv preprint
  arXiv:1710.03287}, 2017.

\bibitem{HS2018QCSStruMatrix}
T.~Huynh and R.~Saab, ``{Fast binary embeddings, and quantized compressive
  sensing with structured matrices},'' \emph{arXiv preprint arXiv:1801.08639},
  2018.

\bibitem{V2012}
R.~Vershynin, \emph{{Introduction to the non-asymptotic analysis of random
  matrices}}.\hskip 1em plus 0.5em minus 0.4em\relax Cambridge University
  Press, 2012, ch.~5, pp. 210--268.

\bibitem{PV2016}
Y.~Plan and R.~Vershynin, ``{The generalized lasso with non-linear
  observations},'' \emph{IEEE Transactions on information theory}, vol.~62,
  no.~3, pp. 1528--1537, 2016.

\bibitem{OptQuanLass2016}
X.~Gu, S.~Tu, H.-J.~M. Shi, M.~Case, D.~Needell, and Y.~Plan, ``{Optimizing
  quantization for Lasso recovery},'' \emph{arXiv preprint arXiv:1606.03055},
  2016.

\bibitem{PVY2017}
Y.~Plan, R.~Vershynin, and E.~Yudovina, ``{High-dimensional estimation with
  geometric constraints},'' \emph{Information and Inference: A Journal of the
  IMA}, vol.~6, no.~1, pp. 1--40, 2017.

\bibitem{ALPV2014}
A.~Ai, A.~Lapanowski, Y.~Plan, and R.~Vershynin, ``{One-bit compressed sensing
  with non-Gaussian measurements},'' \emph{Linear Algebra and its
  Applications}, vol. 441, pp. 222--239, 2014.

\bibitem{PV2013}
Y.~Plan and R.~Vershynin, ``{Robust 1-bit compressed sensing and sparse
  logistic regression: A convex programming approach},'' \emph{IEEE
  Transactions on Information Theory}, vol.~59, no.~1, pp. 482--494, 2013.

\bibitem{FL2018}
S.~Foucart and R.~Lynch, ``{Recovering Low-Rank Matrices from Binary
  Measurements},'' \emph{preprint}, 2018.

\bibitem{DM2018}
S.~Dirksen and S.~Mendelson, ``{Non-Gaussian Hyperplane Tessellations and
  Robust One-Bit Compressive Sensing},'' \emph{arXiv preprint
  arXiv:1805.09409}, 2018.

\bibitem{BB2008}
P.~T. Boufounos and R.~G. Baraniuk, ``{1-bit compressive sensing},'' in
  \emph{{Information Sciences and Systems, 2008. CISS 2008. 42nd Annual
  Conference on}}.\hskip 1em plus 0.5em minus 0.4em\relax IEEE, 2008, pp.
  16--21.

\bibitem{feuillen2018quantity}
T.~Feuillen, C.~Xu, J.~Louveaux, L.~Vandendorpe, and L.~Jacques, ``Quantity
  over quality: Dithered quantization for compressive radar systems,''
  \emph{arXiv preprint arXiv:1811.11575}, 2018.

\bibitem{KSW16}
K.~Knudson, R.~Saab, and R.~Ward, ``{One-bit compressive sensing with norm
  estimation},'' \emph{IEEE Transactions on Information Theory}, vol.~62,
  no.~5, pp. 2748--2758, 2016.

\bibitem{bussgang1952crosscorrelation}
J.~J. Bussgang, ``Crosscorrelation functions of amplitude-distorted gaussian
  signals,'' 1952.

\bibitem{zebadua:tel-01761603}
\BIBentryALTinterwordspacing
A.~Zebadua, ``{Compressed-domain signal processing and one-bit quantization:
  two tools for contexts undercommunication constraints},'' Theses,
  {Universit{\'e} Grenoble Alpes}, Dec. 2017, (in French). [Online]. Available:
  \url{https://tel.archives-ouvertes.fr/tel-01761603}
\BIBentrySTDinterwordspacing

\bibitem{zebadua2017compressed}
A.~G. Zebadua, P.-O. Amblard, E.~Moisan, and O.~J. Michel, ``Compressed and
  quantized correlation estimators,'' \emph{IEEE Transactions on Signal
  Processing}, vol.~65, no.~1, pp. 56--68, 2017.

\bibitem{KT1961}
V.~Tikhomirov, ``{$\varepsilon$-entropy and $\varepsilon$-capacity of sets in
  functional spaces},'' in \emph{{Selected works of AN Kolmogorov}}.\hskip 1em
  plus 0.5em minus 0.4em\relax Springer, 1993, pp. 86--170.

\bibitem{CRPW2012}
V.~Chandrasekaran, B.~Recht, P.~A. Parrilo, and A.~S. Willsky, ``{The convex
  geometry of linear inverse problems},'' \emph{Foundations of Computational
  mathematics}, vol.~12, no.~6, pp. 805--849, 2012.

\bibitem{pisier1999volume}
G.~Pisier, \emph{{The volume of convex bodies and Banach space
  geometry}}.\hskip 1em plus 0.5em minus 0.4em\relax Cambridge University
  Press, 1999, vol.~94.

\bibitem{CP11}
E.~J. Cand\`es and Y.~Plan, ``{Tight oracle inequalities for low-rank matrix
  recovery from a minimal number of noisy random measurements},'' \emph{IEEE
  Transactions on Information Theory}, vol.~57, no.~4, pp. 2342--2359, 2011.

\bibitem{LedouxTalagrand91book}
M.~Ledoux and M.~Talagrand, \emph{Probability in Banach Spaces: Isoperimetry
  and Processes}.\hskip 1em plus 0.5em minus 0.4em\relax New York: Springer,
  1991.

\bibitem{CT06}
E.~J. Cand\`es and T.~Tao, ``{Near-optimal signal recovery from random
  projections: Universal encoding strategies?}'' \emph{IEEE transactions on
  information theory}, vol.~52, no.~12, pp. 5406--5425, 2006.

\bibitem{GV12}
M.~Golbabaee and P.~Vandergheynst, ``{Hyperspectral image compressed sensing
  via low-rank and joint-sparse matrix recovery},'' in \emph{{Acoustics, Speech
  and Signal Processing (ICASSP), 2012 IEEE International Conference
  on}}.\hskip 1em plus 0.5em minus 0.4em\relax Ieee, 2012, pp. 2741--2744.

\bibitem{RFP10}
B.~Recht, M.~Fazel, and P.~A. Parrilo, ``{Guaranteed minimum-rank solutions of
  linear matrix equations via nuclear norm minimization},'' \emph{SIAM review},
  vol.~52, no.~3, pp. 471--501, 2010.

\bibitem{D2015}
S.~Dirksen, ``Tail bounds via generic chaining,'' \emph{Electronic journal of
  probability}, vol.~20, no.~53, pp. 1--29, 2015.

\bibitem{LMPV2017}
C.~Liaw, A.~Mehrabian, Y.~Plan, and R.~Vershynin, ``{A simple tool for bounding
  the deviation of random matrices on geometric sets},'' \emph{Geometric
  aspects of functional analysis. Lecture notes in mathematics}, vol. 2169, pp.
  277--299, 2017.

\bibitem{oymak2015sors}
S.~Oymak, B.~Recht, and M.~Soltanolkotabi, ``{Isometric sketching of any set
  via the Restricted Isometry Property},'' \emph{arXiv preprint
  arXiv:1506.03521}, 2015.

\bibitem{natarajan1995sparse}
B.~K. Natarajan, ``Sparse approximate solutions to linear systems,'' \emph{SIAM
  journal on computing}, vol.~24, no.~2, pp. 227--234, 1995.

\bibitem{tillmann2014projection}
A.~M. Tillmann, R.~Gribonval, and M.~E. Pfetsch, ``Projection onto the cosparse
  set is np-hard,'' in \emph{Acoustics, Speech and Signal Processing (ICASSP),
  2014 IEEE International Conference on}.\hskip 1em plus 0.5em minus
  0.4em\relax IEEE, 2014, pp. 7148--7152.

\bibitem{RSV08}
H.~Rauhut, K.~Schnass, and P.~Vandergheynst, ``{Compressed sensing and
  redundant dictionaries},'' \emph{IEEE Transactions on Information Theory},
  vol.~54, no.~5, pp. 2210--2219, 2008.

\bibitem{nam2013cosparse}
S.~Nam, M.~E. Davies, M.~Elad, and R.~Gribonval, ``The cosparse analysis model
  and algorithms,'' \emph{Applied and Computational Harmonic Analysis},
  vol.~34, no.~1, pp. 30--56, 2013.

\bibitem{Fa02}
M.~Fazel, ``{Matrix rank minimization with applications},'' Ph.D. dissertation,
  PhD thesis, Stanford University, 2002.

\bibitem{WCCL2016}
K.~Wei, J.-F. Cai, T.~F. Chan, and S.~Leung, ``{Guarantees of Riemannian
  optimization for low rank matrix recovery},'' \emph{SIAM Journal on Matrix
  Analysis and Applications}, vol.~37, no.~3, pp. 1198--1222, 2016.

\bibitem{Van13}
B.~Vandereycken, ``{Low-rank matrix completion by Riemannian optimization},''
  \emph{SIAM Journal on Optimization}, vol.~23, no.~2, pp. 1214--1236, 2013.

\bibitem{JD2005}
J.~Dattorro, ``{Convex Optimization and Euclidean Distance Geometry},''
  \emph{Palo Alto}, 2005.

\bibitem{Ker2016SML}
N.~Keriven, A.~Bourrier, R.~Gribonval, and P.~P{\'e}rez, ``{Sketching for
  large-scale learning of mixture models},'' in \emph{{Acoustics, Speech and
  Signal Processing (ICASSP), 2016 IEEE International Conference on}}.\hskip
  1em plus 0.5em minus 0.4em\relax IEEE, 2016, pp. 6190--6194.

\bibitem{BRM2017}
P.~T. Boufounos, S.~Rane, and H.~Mansour, ``{Representation and coding of
  signal geometry},'' \emph{Information and Inference: A Journal of the IMA},
  p. iax002, 2017.

\bibitem{PV2014}
Y.~Plan and R.~Vershynin, ``{Dimension reduction by random hyperplane
  tessellations},'' \emph{Discrete \& Computational Geometry}, vol.~51, no.~2,
  pp. 438--461, 2014.

\bibitem{BM2009}
T.~Blumensath and M.~E. Davies, ``{Iterative hard thresholding for compressed
  sensing},'' \emph{Applied and computational harmonic analysis}, vol.~27,
  no.~3, pp. 265--274, 2009.

\bibitem{foucart2011hard}
S.~Foucart, ``Hard thresholding pursuit: an algorithm for compressive
  sensing,'' \emph{SIAM Journal on Numerical Analysis}, vol.~49, no.~6, pp.
  2543--2563, 2011.

\bibitem{dai2009distortion}
W.~Dai, H.~V. Pham, and O.~Milenkovic, ``Distortion-rate functions for
  quantized compressive sensing,'' in \emph{Networking and Information Theory,
  2009. ITW 2009. IEEE Information Theory Workshop on}.\hskip 1em plus 0.5em
  minus 0.4em\relax IEEE, 2009, pp. 171--175.

\bibitem{moshtaghpour2016consistent}
A.~Moshtaghpour, L.~Jacques, V.~Cambareri, K.~Degraux, and C.~De~Vleeschouwer,
  ``Consistent basis pursuit for signal and matrix estimates in quantized
  compressed sensing,'' \emph{IEEE Signal Processing Letters}, vol.~23, no.~1,
  pp. 25--29, 2016.

\bibitem{shi2016methods}
H.-J.~M. Shi, M.~Case, X.~Gu, S.~Tu, and D.~Needell, ``Methods for quantized
  compressed sensing,'' in \emph{Information Theory and Applications Workshop
  (ITA), 2016}.\hskip 1em plus 0.5em minus 0.4em\relax IEEE, 2016, pp. 1--9.

\end{thebibliography}
\end{document}